\documentclass[11pt]{article}
\usepackage{amsmath,amssymb,amsthm,a4wide}
\usepackage{latexsym}
\usepackage{indentfirst}
\usepackage{placeins}
\usepackage{booktabs}
\usepackage{algorithm}
\usepackage{algorithmic}
\usepackage{multirow}
\usepackage{color,xcolor,colortbl}
\usepackage{thmtools}
\usepackage{authblk}
\usepackage{subfigure}
\usepackage{graphicx,color}
\usepackage{diagbox}

\usepackage{hyperref}

\usepackage{cleveref}
\usepackage{braket}
\usepackage{nicematrix}
\usepackage{extarrows}

\theoremstyle{plain}
\newtheorem{theorem}{Theorem}
\newtheorem{lemma}[theorem]{Lemma}
\newtheorem{prop}[theorem]{Proposition}
\newtheorem{cor}[theorem]{Corollary}

\newtheorem{result}[theorem]{Result}
\newtheorem{definition}[theorem]{Definition}

\theoremstyle{definition}

\theoremstyle{remark}
\newtheorem{remark}{Remark}
\DeclareMathOperator{\polylog}{polylog}
\DeclareMathOperator{\poly}{poly}

\DeclareMathOperator{\Arg}{Arg}

\newcommand{\abs}[1]{%
  \mathchoice
    {\left|#1\right|} 
    {|#1|}           
    {|#1|}           
    {|#1|}           
}

\newcommand{\norm}[1]{%
  \mathchoice
    {\left\lVert#1\right\rVert} 
    {\lVert#1\rVert}           
    {\lVert#1\rVert}           
    {\lVert#1\rVert}           
}

\newcommand{\bbm}{\begin{bmatrix}}
	\newcommand{\ebm}{\end{bmatrix}}
\newcommand{\RR}{\mathbb{R}}
\newcommand{\CC}{\mathbb{C}}
\newcommand{\DD}{\mathbb{D}}

\newcommand{\ZZ}{\mathbb{Z}}

\newcommand{\rd}{\,\mathrm{d}}

\renewcommand{\Re}{\mathrm{Re}}
\renewcommand{\Im}{\mathrm{Im}}

\renewcommand{\Re}{\mathrm{Re}}
\renewcommand{\Im}{\mathrm{Im}}
\newcommand{\I}{\mathrm{i}}

\newcommand{\half}{\frac{1}{2}}
\newcommand{\mo}[1]{\mathcal{O}\left(#1\right)}

\newcommand{\Res}{\text{Res}}

\newcommand{\ba}{\mathbf{a}}

\newcommand{\bk}{\mathbf{k}}

\newcommand{\bw}{\mathbf{w}}
\newcommand{\bx}{\mathbf{x}}

\newcommand{\op}[2]{\mathrm{Op}\left(#1\right)\left(#2\right)}
\newcommand{\sch}{\mathcal{S}(\RR^2)}
\newcommand{\matn}{\mathrm{Mat}_N(\CC)}
\newcommand{\supp}{\mathrm{supp}}

\newcommand{\wh}[2]{\mathrm{Op}_{H}(#1)\left(#2\right)}

\newcommand{\bxi}{\boldsymbol{\xi}}

\usepackage{tikz}
\usetikzlibrary{decorations.pathmorphing}

\title{Quantum Eigenvalue Transformation via Linear Combination of Hamiltonian Simulation: A Weyl Calculus Approach\thanks{This work is partially supported by the U.S. Department of Energy, Office of Science, Accelerated Research in Quantum Computing Centers, Quantum Utility through Advanced Computational Quantum Algorithms, grant no. DE-SC0025572.}}
\author[]{Hongkang Ni\thanks{hongkang@stanford.edu} }
\author[]{Lexing Ying\thanks{lexing@stanford.edu}}
\affil[]{Stanford University, Stanford, CA 94305, United States}

\begin{document}
\date{}
\maketitle

\begin{abstract}
    Linear combination of Hamiltonian simulation (LCHS) provides an efficient method for implementing matrix exponentials $e^{-tA}$ on quantum computers. In this paper, we develop LCHS formulas for computing general matrix functions $f(A)$ when $f$ is analytic on the numerical range of $A$, with $A$ possibly non-normal. The essential technical tool is Weyl calculus, which reduces the construction of LCHS formulas for noncommuting operators to scalar Fourier approximation problems. Our construction yields a quantum eigenvalue transformation algorithm with optimal $\mo{\log\frac{1}{\epsilon}}$ query complexity scaling. Furthermore, our Weyl-calculus-based theory gives rise to an ansatz-free convex optimization framework that directly produces discrete LCHS formulas. This circumvents the inefficiencies of traditional quadrature rules and yields formulas highly optimized for coherent implementation on quantum computers. In addition, both our theory and optimization framework apply to the simulation of time-dependent dissipative ODE $\frac{\rd}{\rd t} \psi(t) = -A(t)\psi(t)$, for which we achieve a $2.1\times$ cost reduction over prior art.
\end{abstract}

\tableofcontents

\section{Introduction}
Quantum computers are expected to solve certain computational problems more efficiently than classical computers by exploiting high-dimensional quantum state spaces and coherent quantum dynamics. Potential applications include scientific computing, quantum simulation, optimization, finance, and machine learning; for instance, see the surveys \cite{dalzell2023quantum, auyeung2024quantum}.

Many quantum algorithms for linear algebra and differential equations can be understood as eigenvalue transformations. If a matrix $A\in\CC^{N\times N}$ is diagonalizable as $A=VDV^{-1}$ and $f$ is defined on the spectrum of $A$, the eigenvalue transformation associated with $f$ is
\begin{equation}
    f(A):=Vf(D)V^{-1}.
\end{equation}
For analytic $f$, this definition agrees with the usual matrix functional calculus and extends to non-diagonalizable matrices by continuity. The corresponding quantum task is to block-encode $f(A)$ in a quantum circuit, or to prepare a normalized state proportional to $f(A)\ket{\psi}$ for an input state $\ket{\psi}$. The well-known QSVT framework \cite{gilyen2019quantum} solves this problem for Hermitian $A$, but it does not apply to general matrices.

For the important special case $f(z)=e^{-tz}$ on the right half-plane $\{\Re(z)\ge 0\}$, the linear combination of Hamiltonian simulation (LCHS) approach is particularly effective. For a matrix $A$, write its Cartesian decomposition as $A=X+\I Y$, where $X$ and $Y$ are Hermitian. Under the dissipativity assumption $X = \frac{A+A^\dagger}{2}\succeq 0$, prior works \cite{an2023linear, an2026quantum, huang2025fourier} prove formulas of the form
\begin{equation}\label{eq: first lchs formula}
    e^{-tA} = (2\pi)^{-1/2}\int_{-\infty}^\infty \hat{h}(u) e^{-\I t(uX+Y)} \rd u
\end{equation}
for suitable kernels $\hat h$. Explicit examples include
\begin{equation}\label{eq: LCHS exp(-tA)}
    \hat{h}(u) = \sqrt{\frac{2}{\pi}}\frac{1}{1+u^2},
    \quad \text{or}\quad
    \hat{h}(u) = \frac{e^{2^\beta-(1+\I u)^\beta}}{\sqrt{2\pi}(1-\I u)},\quad 0<\beta<1.
\end{equation}
These identities are nontrivial generalizations of the corresponding scalar identities, because $X$ and $Y$ generally do not commute. On a quantum computer, the unitary factors $e^{-\I t(uX+Y)}$ can be implemented by well-studied Hamiltonian simulation algorithms \cite{low2017optimal, low2019hamiltonian, gilyen2019quantum}. After truncation and discretization, \eqref{eq: first lchs formula} yields quantum circuits that encode $e^{-tA}$ to additive precision $\epsilon$ using $\mo{t\log^{1+\mo{1}}\frac{1}{\epsilon}}$ queries to block-encodings of $A$ and $A^\dagger$.

Ref.~\cite{low2025optimal} observed that certain truncated kernels $\hat h$ need not satisfy the exact identity \eqref{eq: first lchs formula}, but can still satisfy an approximation of the form
\begin{equation}\label{eq: second lchs formula}
    e^{-tA} \approx (2\pi)^{-1/2}\int_{-R}^R \hat{h}(u) e^{-\I t(uX+Y)} \rd u    
\end{equation}
for some finite $R$. This relaxation leads to quantum algorithms with improved query complexity $\mo{t\log\frac{1}{\epsilon}}$.

From the viewpoint of \cite{low2025optimal, huang2025fourier}, the above LCHS formulas can be interpreted as follows: one first finds a function $h(x)$ that approximates $e^{-x}$ on $[0,\infty)$, and then uses the Fourier transform $\hat h$ as the LCHS kernel. A natural question is whether this perspective can be generalized to other functions $f(z)$.

For $e^{-z}$, the scalar identity $e^{-(x+\I y)}=e^{-x}e^{-\I y}$ separates the decay in $x$ from the phase in $y$, and the phase factor is incorporated into the Hamiltonian simulation term. For a general analytic function, such a separation is unavailable. It is therefore natural to Fourier transform in both the real and imaginary directions, leading to the following question:

\begin{quote} \textit{If $h(x,y)$ approximates $f(x+\I y)$ in some region, will the formula
\begin{equation}\label{eq: lchs formula}
    f(A) \approx (2\pi)^{-1}\int_{u^2+v^2\le R^2} \hat{h}(u,v) e^{-\I(uX+vY)} \rd u \rd v 
\end{equation}
hold in some sense?
} \end{quote}

In this paper, we give a positive answer to this question using \textit{Weyl calculus}. 

Even with this representation, the continuous LCHS formula \eqref{eq: lchs formula} is not yet directly suitable for coherent quantum circuit implementation. In practice, one usually needs a discrete LCHS formula
\begin{equation}\label{eq: disc lchs formula}
    f(A)  \approx \sum_{(u_j,v_j)\in \mathcal{G}\subset D_R} w_{j}\, e^{-\I(u_jX+v_jY)} 
\end{equation}
where $\mathcal{G}$ is a grid contained in the disc $D_R=\{(u,v) \mid u^2+v^2 \le R^2\}$. In previous LCHS works, such discrete formulas are obtained by truncating the continuous formula and then applying quadrature rules. This approach has two limitations. First, explicitly constructed kernels $\hat h$ may have suboptimal constants or scalings, and this inefficiency is inherited by the discretized formula. Second, quadrature rules are most effective for smooth or analytic kernels. If $\hat h$ is not sufficiently smooth, even when $\hat h$ decays rapidly, the number of discretized terms can become large. Based on our Weyl calculus framework, we develop a convex optimization method that directly finds good discrete formulas of the form \eqref{eq: disc lchs formula} under a prescribed resource budget. Interestingly, applying this optimizer to $f(z)=e^{-z}$ suggests a closed-form kernel for \eqref{eq: second lchs formula} that improves the cost relative to prior art \cite{low2025optimal}.

For the quantum implementation of \eqref{eq: disc lchs formula}, the cost is largely determined by two parameters. The first is the truncation radius $R$, which controls the maximum simulation time in the Hamiltonian simulation circuits. The second is the normalization factor $\lambda = \sum_j |w_j|$ (or $\lambda = (2\pi)^{-1}\norm{\hat{h}}_{L^1(\RR^2)}$ in the continuous case \eqref{eq: lchs formula}), which controls the postselection probability in LCU-based implementations. Amplitude amplification \cite{brassard2002quantum} can compensate for this normalization but at the cost of proportionally increasing the simulation time. Therefore, this paper uses
\begin{equation}
    \Lambda = \lambda R
\end{equation}
as the cost function of an LCHS formula.

\subsection{Main results}

\subsubsection{Weyl-calculus framework for general LCHS}
We develop a general LCHS theory based on Weyl calculus, providing a systematic way to construct LCHS formulas for analytic matrix functions. 

\begin{result}[Informal version of \Cref{thm: opf bound}]\label{res: 1}
    Denote $\Omega = \{\bx^\dagger A \bx \mid \bx \in \mathbb{C}^N, \|\bx\|_2 = 1\}$ to be the numerical range of $A$. Let $f$ be analytic on a neighborhood of $\Omega$. If a function $h: \RR^2 \to \CC$ satisfies
    $$f(x+\I y) \approx h(x,y),\qquad \text{for } x+\I y \in \Omega,$$
    then we have the approximate LCHS formula
    $$f(A) \approx (2\pi)^{-1}\int_{\RR^2} \hat{h}(u,v) e^{-\I(uX+vY)} \rd u \rd v,$$
    where $\hat{h}$ is the Fourier transform of $h$.
\end{result}

Moreover, the framework allows $\hat h$ to be a finite measure, such as a Dirac comb, so it directly covers discrete weight distributions.

\subsubsection{Closed-form LCHS constructions for analytic functions}
Using the Weyl calculus framework, we construct LCHS formulas for analytic functions and derive explicit bounds on the cost function $\Lambda$. 

\begin{result}[Alternative version of \Cref{thm: construction 2 for disc}]\label{res: 2}
Assume $\norm{A}\le 1$. Let $\rho\in(0,1)$, and suppose that $f$ is analytic in the interior of disc $\DD_{1+\rho}=\{z\in\CC: |z|\le 1+\rho\}$ and satisfies
    $$\max_{\abs{z}\le 1+\rho} \left(\max\{\abs{f(z)}, \abs{f'(z)},\abs{f''(z)}\}\right)\le S $$
    for some $S$. Then there exist $R = \mo{\frac{1}{\rho}\log\frac{S}{\rho\epsilon}}$ and a smooth function $\hat{h}$ such that
    \begin{equation}
    \norm{f(A) - (2\pi)^{-1}\int_{u^2+v^2\le R^2} \hat{h}(u,v) e^{-\I(uX+vY)} \rd u \rd v}\le \epsilon 
    \end{equation}
    with the cost function
    $$\Lambda = \lambda_{\hat{h}}R = \mo{\frac{S}{\rho}\log\frac{S}{\rho\epsilon}}.$$
\end{result}
We also provide a sharper upper bound in \Cref{thm: construction 1 for disc} by replacing $S$ with the local Wiener norm $\norm{f}_{W(\DD_{1+\rho})}$, defined in \eqref{eq: restricted wiener norm def}. However, this sharper bound is less explicit because the local Wiener norm is usually difficult to compute exactly. Additionally, if the assumption $\norm{A}\le 1$ is replaced by a more general numerical range assumption, we give a similar construction in \Cref{thm: construction for general numerical range}.

\subsubsection{Improved prefactor for dissipative dynamics} 
For the specific function $f(z)=e^{-z}$ ($\Re(z)\ge 0$), we find a kernel function that improves the cost by a factor of $2.1\times$ over prior art \cite{low2025optimal} and is within a factor of $1.5\times$ of the theoretical lower bound. As in previous LCHS works, our construction also generalizes to the time-dependent ODE
\begin{equation}
    \frac{\rd}{\rd t} \vec{\psi}(t) = -A(t)\vec{\psi}(t),
\end{equation}
which is solved by the time-ordered operator
\begin{equation}\label{eq: time ordered operator def}
    \mathcal{T}e^{-\int_0^t A(s) \rd s}: \vec{\psi}(0) \mapsto \vec{\psi}(t).
\end{equation}

\begin{result}[Informal version of \Cref{thm: exp approx 1}]\label{res: 3}
    If the Cartesian decomposition $A(s) = L(s) + \I H(s)$ satisfies $L(s)\succeq 0$, then the approximate LCHS formula
    \begin{equation}
        \norm{\mathcal{T}e^{-\int_0^t A(s) \rd s}- (2 \pi)^{-1/2} \int_{-R}^R \hat{g}_{R,C}(u) \mathcal{T}e^{-\I\int_0^t \left(u L(s)+H(s)\right)\rd s} \rd u}\le \epsilon
    \end{equation}
    holds for the kernel function
    \begin{align*}
    \hat{g}_{R,C}(u) = \sqrt{\frac{2}{\pi}} e^{2C \arctan\left(\frac{1}{R}\right)} e^{\I C \log\left(\frac{R-u}{R+u}\right)} \frac{1}{1+u^2}
\end{align*}
with parameters $2C \sim R\sim \frac{2}{\pi}\ln\frac{1}{\epsilon}$, and the resulting cost satisfies
$$\Lambda = \lambda_{\hat{g}_{R,C}}R \le \frac{2 e}{\pi}\ln\frac{1}{e\epsilon}.$$
\end{result}

\subsubsection{Direct optimization of discrete LCHS weights} 
We develop a convex optimization framework for finding good discrete weights $\{w_j\}$ in \eqref{eq: disc lchs formula} under a given budget $(R,\lambda)$. The theoretical guarantee is the discrete analogue of \Cref{res: 1}, corresponding to the case where $h$ is a finite sum of Fourier modes. The resulting programs are relatively tight because they optimize directly over the weights $\{w_j\}$, rather than over a parametrized ansatz for the weight coefficients. The costs of the optimized LCHS formulas are typically much smaller than those of the closed-form construction in \Cref{res: 2}. For a fixed approximation region $\Omega$, we use a fixed-stepsize grid $\mathcal{G}$; therefore, the number of terms in \eqref{eq: disc lchs formula} only grows like $\mo{R^2} = \mo{\log^2\frac{1}{\epsilon}}$.

For the one-dimensional interval $[0,T]$ and the two-dimensional regions $[-1,1]^2$ and $D_1 = \{(x,y):x^2+y^2\le 1\}$, we derive concrete local Wiener norm upper bounds and present the corresponding optimization problems. We test the optimization method on several examples.

\subsection{Related work}
The closest line of work is the linear combination of Hamiltonian simulation (LCHS) approach to non-unitary dynamics. The original LCHS method \cite{an2023linear} represents $e^{-tA}$, or the more general time-dependent version \eqref{eq: time ordered operator def}, for $A=L+\I H$ with $L\succeq 0$, as a linear combination of Hamiltonian evolutions using the formula \eqref{eq: LCHS exp(-tA)}, and achieves optimal dependence on the input state preparation cost. The improved LCHS constructions \cite{an2026quantum, pocrnic2025constant, huang2025fourier, low2025optimal} use different kernel functions to obtain better query complexity dependence on the target precision $\epsilon$ for linear non-unitary dynamics. Subsequent works have refined this direction by extending the operator setting \cite{lu2025infinite} and developing problem-specific or implementation-oriented variants \cite{novikau2026efficient,yang2025circuit,das2026quantum}. 

The Lap-LCHS framework \cite{an2026laplace} extends LCHS from matrix exponentials to functions that admit a suitable Laplace representation, thereby covering several eigenvalue transformations such as powers of shifted inverses and mass-matrix differential equations. The core idea is that if a function $f(z)$ has a convergent Laplace transform representation
$$f(z) = \int_0^\infty g(t) e^{-tz} \rd t,\quad \Re(z)\ge 0,$$
then one may implement $f(A)$ as a linear combination of terms $e^{-tA}$, which can be handled by the original LCHS framework. However, only certain analytic functions that decay on the right half-plane admit convergent Laplace transforms. This excludes many common functions, including polynomials $p(A)$, logarithms $\log(A)$, and matrix powers $A^p\ (p>0)$. In contrast, our work treats general analytic functions through a two-dimensional Fourier/Weyl-calculus representation and also allows the weights to be discrete from the outset.

Another general approach to eigenvalue transformation is based on contour integrals. For analytic $f$, Cauchy's formula expresses $f(A)$ as an integral of resolvents $(zI-A)^{-1}$, and quadrature reduces the task to a linear combination of shifted linear-system solves \cite{TakahiraOhashiSogabeEtAl2020,TakahiraOhashiSogabeEtAl2021,tong2021fast}. Recent work gives a more systematic contour-integral-based quantum eigenvalue transformation framework and a detailed complexity analysis \cite{jiang2026contour}. This route is quite general for non-Hermitian and non-normal matrices, but its cost depends on the contour geometry and the implementation of shifted inverses. The query complexity typically deteriorates when $A$ is non-diagonalizable or has a poorly conditioned eigenvector basis \cite{jiang2026contour}, an issue that does not arise in LCHS-based methods.

Polynomial and signal-processing approaches form a complementary family. Quantum singular value transformation \cite{gilyen2019quantum} and related works \cite{dong2022ground, motlagh2024generalized,sunderhauf2023generalized} provide a powerful framework for applying polynomial transformations to singular values, but singular value transformations do not directly give eigenvalue transformations for general non-normal matrices. Quantum eigenvalue processing \cite{low2026quantum} overcomes this obstruction for polynomial eigenvalue transformations using Chebyshev or Faber history states. Its complexity also depends on the condition number of the Jordan basis of $A$, similar to the contour-integral approach. Recent work \cite{gutierrez2026quantum} presents a conceptually simpler eigenvalue-transformation framework based on GQSP \cite{motlagh2024generalized} and the ``regular block encoding'' of the matrix $A$. 


\subsection{Organization of the paper}

The rest of the paper is organized as follows. \Cref{sec:prelim} fixes the Fourier-transform convention and introduces the Wiener norm used to measure LCHS approximation errors. In \Cref{sec:weyl}, we develop the Weyl-calculus framework for two-variable Fourier representations and prove the main LCHS error bound (\Cref{res: 1}). In \Cref{sec:general-analytic-lchs}, we use this bound to construct explicit LCHS formulas for general analytic functions (\Cref{res: 2}). In \Cref{sec:time-ordered-weyl}, we adapt the Weyl-calculus argument to time-ordered non-unitary dynamics and then give the improved kernel function for dissipative dynamics in \Cref{sec:improved-prefactor} (\Cref{res: 3}). In \Cref{sec:convex-optimization}, we present the discrete-weight optimization framework and describe the corresponding convex programs for a one-dimensional interval, a two-dimensional square, and a two-dimensional disc. \Cref{sec:implementation} summarizes coherent and hybrid quantum implementations and explains how the LCHS parameters $(R,\lambda)$ enter the quantum resource estimates. Finally, \Cref{sec:discussion} discusses potential future directions.

\section{Preliminaries: Fourier transforms and Wiener norms}\label{sec:prelim}
We use the following convention for the Fourier transform and its inverse:
\begin{equation*}
g(\bx)=(2 \pi)^{-n / 2} \int_{\RR^n} \hat{g}(\bxi) e^{-\I \bxi \cdot \bx} \rd \bxi
\end{equation*}
\begin{equation*}
\hat{g}(\bxi)=(2 \pi)^{-n / 2} \int_{\RR^n} g(\bx) e^{\I \bxi \cdot \bx} \rd \bx
\end{equation*}
To include discrete LCHS formulas in the same framework, we also use the Fourier--Stieltjes convention. Namely, for a finite complex Borel measure $\mu$, we write $\hat{g} = \mu$ if
\begin{equation*}
g(\bx)=(2 \pi)^{-n / 2} \int_{\mathbb{R}^n} e^{-\I \bxi \cdot \bx} \, \mathrm{d} \mu(\bxi)
\end{equation*}
holds. In particular, $\widehat{e^{-\I \ba \cdot \bx}} = (2 \pi)^{n / 2} \delta_{\ba}$, where $\delta_{\ba}$ is the Dirac measure at $\ba$. We refer the reader to \Cref{appendix: B(Rn)} for more details on the Fourier--Stieltjes algebra. We use the standard notation
\begin{equation}
    A(\RR^n) := \{ g: \hat{g}\in L^1(\RR^n)\}, \quad B(\RR^n) := \{g: \hat{g} = \mu \text{ such that } \int_{\RR^n} \rd \abs{\mu}\le\infty\}.
\end{equation}
For $g\in B(\RR^n)$ and $\mu = \hat{g}$, we define the Wiener norm
\begin{equation}\label{eq: wiener norm def}
    \norm{g}_W := (2 \pi)^{-n / 2} \int_{\RR^n} \rd \abs{\mu} \ \xlongequal{g\in A(\RR^n)} (2 \pi)^{-n / 2} \int_{\RR^n} \abs{\hat{g}(\bxi)} \rd \bxi.
\end{equation}
Under this norm, $B(\RR^n)$ is a Banach space. The space $A(\RR^n)$ is the closed subspace corresponding to absolutely continuous finite measures, since an $L^1$ Fourier transform $\hat{g}(\bxi)$ can be identified with the measure $\mu = \hat{g}(\bxi)\rd \bxi$.

As an important example, when $g$ is a periodic function given by a Fourier series, $\norm{g}_W$ is the $\ell^1$ norm of its Fourier coefficients. 

Suppose that a smooth function $g$ is specified only on a closed set $\Omega \subset \RR^n$, in the sense that it extends to a $C^\infty$ function on a neighborhood of $\Omega$. We define its local Wiener norm by
\begin{equation}\label{eq: restricted wiener norm def}
    \norm{g}_{W(\Omega)} :=\ \inf_{\tilde{g}\in B(\RR^n),\ \tilde{g}|_\Omega = g}\ \norm{\tilde{g}}_W
\end{equation}
When $\Omega$ is compact and convex, the infimum can be taken over more regular extensions without changing its value:
\begin{equation}\label{eq: regular extension}
    \norm{g}_{W(\Omega)} =\ \inf_{\tilde{g}\in B(\RR^n),\ \tilde{g}|_\Omega = g}\ \norm{\tilde{g}}_W =\ \inf_{\hat{h}\in C^{\infty}(\RR^n),\ h\in A(\RR^n),\ h|_\Omega = g}\ \norm{h}_W.
\end{equation}
This equality is proved in \Cref{appendix: local wiener norm}. 

The Wiener norm can often be controlled by Sobolev norms. Roughly speaking, more than $(n/2)$-th weak derivatives are enough to guarantee a finite Wiener norm. In two dimensions, Cauchy--Schwarz gives
\begin{equation}\label{eq: H2 estimate}
        \begin{aligned}
        \norm{g}_{W(\RR^2)} &=\frac{1}{2\pi} \iint_{\RR^2} \abs{\hat{g}(u,v)} \rd u \rd v \\
        &\le \frac{1}{2\pi} \left( \iint_{\RR^2} \frac{1}{(1+u^2+v^2)^2} \rd u \rd v \right)^{1/2} \left( \iint_{\RR^2} (1+u^2+v^2)^2 \abs{\hat{g}(u,v)}^2 \rd u \rd v \right)^{1/2}\\
        &= \frac{1}{2\sqrt{\pi}} \norm{(1-\Delta)g}_{L^2(\RR^2)}.
    \end{aligned}
\end{equation}
Thus, every $g$ in the Sobolev space $H^2(\RR^2)$ has finite Wiener norm. 

Exact local Wiener norms are usually difficult to compute, but useful upper bounds can be obtained from extensions. If $g$ is defined only on a closed set $\Omega\subset\RR^2$, we may extend it to a function in $H^2(\RR^2)$ and then apply \eqref{eq: H2 estimate}. The following standard extension result is sufficient for this purpose.

\begin{lemma}\label{lemma: ext H2}
If $\Omega$ is a compact convex set with non-empty interior in $\RR^n$, then there is a bounded extension operator $E: H^2(\Omega) \to H^2(\RR^n)$.
\end{lemma}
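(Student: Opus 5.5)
The approach is to reduce to the classical extension theorem for Lipschitz domains. The key observation is that a compact convex set with non-empty interior is a bounded Lipschitz domain, more precisely the closure of one. For Lipschitz domains, Stein's universal extension operator, or for the specific order $k=2$ the Calderón extension, gives a bounded linear map $E: H^k(\Omega^\circ)\to H^k(\RR^n)$ for every $k$. We interpret $H^2(\Omega)$ as $H^2(\Omega^\circ)$. This is harmless because $\partial\Omega$ has Lebesgue measure zero for convex $\Omega$, so nothing is lost at the boundary.

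The plan therefore has three steps.

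\textbf{Step 1: the interior is star-shaped with a uniform cone.} Pick $x_0\in\Omega^\circ$ and a ball $B(x_0,r)\subset\Omega$. By convexity, for every $y\in\Omega$ the convex hull of $\{y\}\cup B(x_0,r)$ lies in $\Omega$. This open cone with vertex $y$ has a uniform aperture and height, controlled by $r$ and $\mathrm{diam}\,\Omega$. Hence $\Omega$ satisfies the uniform interior cone condition.

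\textbf{Step 2: the boundary is a Lipschitz graph.} In polar coordinates about $x_0$, the Minkowski gauge shows that $\partial\Omega=\{x_0+\varrho(\theta)\theta:\theta\in S^{n-1}\}$. The radial function $\varrho$ is bounded between $r$ and $\mathrm{diam}\,\Omega$. It is Lipschitz on the sphere, since the gauge $p(x)=\inf\{t>0: x_0+x/t\in\Omega\}$ is convex, positively homogeneous, and satisfies $p(x)\le |x|/r$, so it is $1/r$-Lipschitz. Covering $\partial\Omega$ by finitely many patches, each boundary piece becomes a Lipschitz graph over a hyperplane in suitable local coordinates. This verifies the minimally smooth, or Lipschitz, boundary hypothesis of Stein's theorem.

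\textbf{Step 3: invoke the extension theorem.} Applying Stein's extension theorem (Stein, \emph{Singular Integrals}, Ch.~VI) yields the bounded operator $E$.

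Step 2 is the only step needing care, and it is the expected main obstacle: converting the Lipschitz gauge into local Lipschitz graphs with uniform constants. If I wanted to avoid citing the general theorem, I would try a more hands-on route using the star-shaped structure. First, extend across the boundary by a radial reflection in the gauge variable, of the form $u(x_0+s\theta)\mapsto$ a Hestenes-type combination $\sum_j c_j u(x_0+\lambda_j^{-1}\ldots)$ chosen to match derivatives up to order $1$. Then multiply by a cutoff. The difficulty is that $\varrho$ is only Lipschitz, so the change of variables is only bi-Lipschitz. That preserves $H^1$ but not obviously $H^2$, which is exactly why I would rely on Stein's theorem, whose regularized distance handles merely Lipschitz boundaries.
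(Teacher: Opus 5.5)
Your proposal is correct and follows the same route as the paper. The paper cites Grisvard (Corollary 1.2.2.3) for the fact that a bounded convex open set has Lipschitz boundary, and then McLean (Theorem A.4) for the extension theorem on Lipschitz domains; you instead sketch the Lipschitz-boundary fact yourself via the Minkowski gauge and invoke Stein's extension theorem, which are equally valid substitutes. The local-graph step you flag as delicate can be done cleanly: near a boundary point $y$, $\partial\Omega$ is the lower graph of a convex function over the hyperplane orthogonal to $x_0-y$, and this function is defined on a ball of radius $r$ around the projection of $y$, so it is Lipschitz with constant controlled by $\mathrm{diam}\,\Omega/r$.
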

\begin{proof}
By \cite[Corollary 1.2.2.3]{grisvard2011elliptic}, $\Omega$ has Lipschitz boundary. The desired extension theorem for Lipschitz domains is given in \cite[Theorem A.4]{mclean2000strongly}.
\end{proof}

We will also use a cruder $W^{2,\infty}$ bound for the disc $D_a$ of radius $a\ge 1$. The advantage of this estimate is that it gives explicit constants.

\begin{lemma}\label{lemma: err bounded by 2nd derivatives}
    Assume $g\in C^2(D_a)$ with $a\ge 1$, and
    \begin{equation*}
        M_k = \max_{(x,y) \in D_a, |\alpha|=k} |\partial^\alpha g(x,y)|,\qquad k=0,1,2.
    \end{equation*}    
    Then there exists an explicit extension $E_ag$ of $g$ to $\RR^2$ such that
    \begin{equation}\label{eq: unit W bound e_disc}
        \norm{g}_{W(D_a)}\le \norm{E_ag}_{W(\RR^2)}\le a(13 M_0 + 10 M_1 + 8 M_2).
    \end{equation}
\end{lemma}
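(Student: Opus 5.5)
We construct an explicit extension and then bound its Wiener norm through the $H^2$ estimate \eqref{eq: H2 estimate}, $\norm{G}_{W(\RR^2)}\le \frac{1}{2\sqrt{\pi}}\norm{(1-\Delta)G}_{L^2(\RR^2)}$. Since $\norm{(1-\Delta)G}_{L^2}\le \norm{G}_{L^2}+\norm{\Delta G}_{L^2}$, and both terms are controlled by sup-norms times the square root of the support area, a compactly supported extension of area $\mo{a^2}$ gives a bound linear in $a$. That linear dependence is exactly the form of \eqref{eq: unit W bound e_disc}.

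\textbf{Step 1: reflect radially.} In polar coordinates $(r,\theta)$ we extend $g$ across the circle $r=a$ by a higher-order reflection, for $a<r\le 2a$:
\begin{equation*}
\tilde g(r,\theta)=\sum_{j=1}^{3} c_j\, g\bigl(a-\lambda_j(r-a),\theta\bigr),
\end{equation*}
with, for instance, $\lambda_j\in\{1/2,1/3,1/4\}$ (or $\lambda_j=1/j$) chosen so that $a-\lambda_j(r-a)\ge a/2$. The coefficients $c_j$ solve $\sum_j c_j(-\lambda_j)^k=1$ for $k=0,1$, so that $\tilde g$ is $C^1$ across $r=a$. Its second derivatives then have only a jump and are bounded, so $\tilde g\in H^2$ locally, which is all the $L^2$ estimate requires.

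Two details need care. First, the reflection only uses values in the annulus $a/2\le r\le a$, which avoids the singularity of polar coordinates at the origin. Second, Cartesian derivatives of $\tilde g$ are expressed through the chain rule in $\partial_x,\partial_y$ of $g$ at the reflected point. The chain-rule factors involve $1/r$ and ratios $r/\rho$ of radii, which are bounded because $r\ge a\ge 1$ and $\rho\ge a/2$. The hypothesis $a\ge1$ is used exactly here: it keeps the curvature terms $1/r\le1$ from introducing inverse powers of $a$.

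\textbf{Step 2: cut off.} We multiply by a radial cutoff $\chi(r)$ equal to $1$ on $r\le a$ and $0$ for $r\ge 2a$. Concretely, take $\chi(r)=\phi\bigl((r-a)/a\bigr)$ with $\phi$ a fixed $C^2$ function whose derivatives satisfy $\abs{\phi'}\le2$ and $\abs{\phi''}\le 8$ or so. Because $a\ge1$, we have $\abs{\chi'}\le C/a\le C$ and $\abs{\chi''}\le C$. Set $E_ag=\chi\tilde g$. The product rule then bounds $\abs{E_ag}$, $\abs{\nabla E_ag}$ and $\abs{\Delta E_ag}$ pointwise by linear combinations of $M_0,M_1,M_2$, with explicit constants.

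\textbf{Step 3: integrate and collect constants.} Integrating the pointwise bounds over the support $D_{2a}$, of area $4\pi a^2$, gives
\begin{equation*}
\norm{(1-\Delta)E_ag}_{L^2}\le 2\sqrt{\pi}\,a\,(\alpha_0M_0+\alpha_1M_1+\alpha_2M_2).
\end{equation*}
A sharper version splits the integral into $D_a$, where $E_ag=g$, and the annulus. The prefactor $2\sqrt\pi a$ cancels the $\frac{1}{2\sqrt\pi}$ from \eqref{eq: H2 estimate}, leaving $a(\alpha_0M_0+\alpha_1M_1+\alpha_2M_2)$. The first inequality $\norm{g}_{W(D_a)}\le\norm{E_ag}_W$ is immediate from definition \eqref{eq: restricted wiener norm def}, since $E_ag\in H^2(\RR^2)\subset A(\RR^2)$.

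\textbf{Main obstacle.} The difficulty is the bookkeeping in Step 3 needed to land on the specific constants $13,10,8$. These depend on the choices of $\lambda_j$, $c_j$ and $\phi$, and on how the Laplacian of the reflected function is bounded: we must track both $\partial_{rr}$ and the $\frac1r\partial_r+\frac1{r^2}\partial_{\theta\theta}$ terms, bounding $\abs{\partial_\theta g}\le rM_1$. I would first fix a simple $C^2$ polynomial cutoff and the reflection above, compute crude bounds, and then adjust the parameters until the constants fit.
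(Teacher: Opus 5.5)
You follow the same architecture as the paper: radial reflection across $r=a$, a radial cutoff supported in $D_{2a}$, the $H^2$ estimate \eqref{eq: H2 estimate}, and a pointwise bound on $(1-\Delta)E_ag$ multiplied by $\sqrt{\mathrm{Area}(D_{2a})}/(2\sqrt{\pi})=a$. The gap is that your proposal never produces the constants $13,10,8$, which are the actual content of the lemma, and the reflection you propose cannot produce them.

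The problem is your requirement that the reflected radius stay in $[a/2,a]$, i.e.\ $0<\lambda_j\le\frac12$. This forces large coefficients. Let $N$ be the sum of $|c_j|$ over the negative $c_j$. The conditions $\sum_j c_j=1$ and $\sum_j c_j\lambda_j=-1$ give $N\ge 1/\max_j\lambda_j$, hence $\sum_j|c_j|=1+2N\ge 5$. For $\lambda_j\in\{\frac12,\frac13,\frac14\}$ one even gets $\sum_j|c_j|\ge 11$, attained by $c=(-5,0,6)$.

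Every term you control by the triangle inequality is multiplied by this factor. Concretely, just outside $r=a$ the tangential part $r^{-2}\tilde g_{\theta\theta}=r^{-2}\sum_j c_j\,g_{\theta\theta}(\rho_j,\theta)$ contains $r^{-2}\sum_j c_j\rho_j^2\,Q(\rho_j,\theta)$. Here $Q$ is the tangential second derivative, with $\abs{Q}\le 2M_2$. From $C^2$ data alone, the values $Q(\rho_j,\theta)$ at the distinct nearby radii $\rho_j$ are unrelated; exploiting cancellation between them would need third derivatives. So the best pointwise bound tends to $2M_2\sum_j|c_j|$. That is $22M_2$ for your choice, and at least $10M_2$ for any choice with $\lambda_j\le\frac12$. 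Splitting off $D_a$ gains at most a factor $\sqrt3/2$ on the annulus, so you cannot get down to $8M_2$. ``Adjusting parameters until the constants fit'' cannot succeed within your family.

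The missing idea is that the centre of the disc is no obstruction. The paper takes $G(r,\theta)=\frac32 g(a,\theta)-\frac12 g(3a-2r,\theta)$ for $a<r\le 2a$. This is a term constant in $r$ plus a reflection with stretch factor $2$, whose signed radius $3a-2r$ runs through the origin down to $-a$. The map $s\mapsto g(s\cos\theta,s\sin\theta)$ is just $g$ restricted to a full diameter, so it is $C^2$ in $(s,\theta)$ for all $\abs{s}\le a$. It satisfies $\abs{g_s}\le\sqrt2M_1$, $\abs{g_{ss}}\le 2M_2$ and $\abs{g_{\theta\theta}}\le\abs{s}\sqrt2M_1+2s^2M_2$. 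Meanwhile the polar Laplacian is only evaluated at $r\ge a\ge 1$, so no singularity appears.

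The coefficients now have absolute sum $\frac32+\frac12=2$, which by the bound above is the least possible for stretch factor at most $2$. This gives $\abs{G}\le 2M_0$ and $\abs{\Delta G}\le 3\sqrt2M_1+8M_2$. The paper then uses a $C^1$ piecewise-quadratic cutoff with $\abs{(\chi_a)_r}\le 2/a$ and $\abs{\Delta\chi_a}\le 16/(3a^2)$. A $C^2$ cutoff is unnecessary, since $C^1$ with bounded piecewise second derivatives already lies in $H^2$. The annulus bound is then $\frac{38}{3}M_0+7\sqrt2M_1+8M_2\le 13M_0+10M_1+8M_2$, and multiplying by $a$ gives the lemma.
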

The construction of $E_ag$ and the proof of the estimate are given in \Cref{appendix: proof of lemma: err bounded by 2nd derivatives}.

\section{Weyl calculus and generalized LCHS}\label{sec:weyl}

Weyl calculus provides a systematic way to lift scalar functions of several variables to functions of noncommuting operators. It was originally developed in connection with the quantization of functions of position and momentum operators.

We only need the two-variable setting, which already contains the essential ideas. Given a scalar function $g(x,y)$ and Hermitian matrices $X,Y$, we want to define an operator-valued function $\op{g}{X,Y}$. For the monomial $p_{m,n}(x,y)=x^m y^n$, a natural definition is
\begin{equation}\label{eq: weyl monomial}
\op{p_{m,n}}{X,Y} = \frac{1}{\binom{m+n}{m}}\sum_{\sigma} A_{\sigma(1)}A_{\sigma(2)}\cdots A_{\sigma(m+n)},
\end{equation}
where the sum is over all binary strings $\sigma$ containing $m$ zeros and $n$ ones, with $A_0=X$ and $A_1=Y$. For example, $\op{p_{2,1}}{X,Y} = \frac{1}{3}(X^2Y+XYX+YX^2)$. We then define $\op{\cdot}{X,Y}$ for polynomials and power series by linearity. With this definition, one checks that
\begin{align}
    \op{(ax+by)^n}{X,Y}  & = (aX+bY)^n,\label{eq: weyl power}\\
    \op{e^{ax+by}}{X,Y} & = e^{aX+bY}.\label{eq: weyl exp}
\end{align} 
Indeed, \eqref{eq: weyl power} follows from the expansion $(ax+by)^n = \sum_{k=0}^n \binom{n}{k} a^kb^{n-k}x^ky^{n-k}$: the operator $\op{x^ky^{n-k}}{X,Y}$ averages over all products with $k$ copies of $X$ and $n-k$ copies of $Y$, matching the corresponding terms in $(aX+bY)^n$. The identity \eqref{eq: weyl exp} then follows by applying \eqref{eq: weyl power} term by term to the power series $e^{ax+by} = \sum_{k=0}^\infty \frac{1}{k!}(ax+by)^k$.

For a function $g$ with $\norm{\hat{g}}_{L^1(\RR^2)} < \infty$, Weyl \cite{weyl1950theory} suggested the definition
\begin{equation}\label{eq: weyl fourier defi}
\op{g}{X,Y} = \frac{1}{2\pi}\iint_{\RR^2} \hat{g}(u,v) e^{-\I(uX+vY)} \rd u \rd v
\end{equation}
The motivation is that
\begin{equation*}
\begin{aligned}
\frac{1}{2\pi}\iint_{\RR^2} \hat{g}(u,v) e^{-\I(uX+vY)} \rd u \rd v 
&= \frac{1}{2\pi}\iint_{\RR^2} \hat{g}(u,v)  \op{e^{-\I(ux+vy)}}{X,Y}\rd u \rd v\\
&= \op{\frac{1}{2\pi}\iint_{\RR^2} \hat{g}(u,v) e^{-\I(ux+vy)} \rd u \rd v}{X,Y}\\
&= \op{g}{X,Y},\\
\end{aligned}
\end{equation*}
where the second equality uses the exponential identity \eqref{eq: weyl exp}.

The relevance of Weyl calculus to LCHS comes from the following observation. Let $f(z)$ be analytic and set $g(x,y)=f(x+\I y)$. If $A=X+\I Y$, then \eqref{eq: weyl power} suggests that $\op{g}{X,Y}=f(X+\I Y)$, since both $g$ and $f$ can be expanded in power series.
At the same time, \eqref{eq: weyl fourier defi} represents $\op{g}{X,Y}$ as a linear combination of unitary matrices $e^{-\I(uX+vY)}$. This is precisely the LCHS form. The main obstacle is that the Fourier transform of an analytic function is usually not an $L^1$ function or a finite Borel measure, so \eqref{eq: weyl fourier defi} cannot be applied directly. The rest of this section develops the framework needed to overcome this obstruction.

We first recall a few functional-analytic notions. The \textit{Schwartz space} $\sch$ consists of smooth functions whose derivatives of all orders decay faster than any polynomial at infinity. A \textit{matrix-valued distribution} is a continuous linear functional from the test function space $C^\infty(\RR^2)$ to the matrix space $\matn$. The \textit{support} of a matrix-valued distribution $T$, denoted by $\supp(T)$, is the complement of the largest open set $U \subset \RR^2$ such that $T(g) = 0$ for all test functions $g$ supported in $U$. 

\begin{lemma}[{\cite[Theorem 2.4 with $s=0$]{jefferies2004spectral}}]
\label{lem: op C infty extension}
Let $X,Y\in\matn$ be Hermitian matrices. Then there exists a unique compactly supported matrix-valued distribution 
$$\op{\cdot}{X,Y}:C^\infty(\RR^2)\to \matn$$
that extends the definition \eqref{eq: weyl monomial} on monomials. Moreover, this $\op{\cdot}{X,Y}$ also satisfies \eqref{eq: weyl fourier defi} for $g\in\sch$. 
\end{lemma}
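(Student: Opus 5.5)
The plan is to construct the distribution explicitly through the Fourier side and then prove uniqueness. For $g\in\sch$ I would \emph{define} $\op{g}{X,Y}$ by \eqref{eq: weyl fourier defi}. The first task is to show that this map extends continuously to $C^\infty(\RR^2)$ with compact support. Equivalently, I would show that for a fixed vector pair $\mathbf{a},\mathbf{b}\in\CC^N$, the scalar tempered distribution
\[
T_{\mathbf{a},\mathbf{b}}(g)=\frac{1}{2\pi}\iint \hat g(u,v)\,\mathbf{a}^\dagger e^{-\I(uX+vY)}\mathbf{b}\,\rd u\rd v
\]
has compact support.

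The key quantitative input is a growth-type (Paley--Wiener--Schwartz) estimate. The function $F(u,v)=e^{-\I(uX+vY)}$ extends to an entire function of $(u,v)\in\CC^2$. For complex $u=u_1+\I u_2$ and $v=v_1+\I v_2$, write $uX+vY=(u_1X+v_1Y)+\I(u_2X+v_2Y)$. Since $u_1X+v_1Y$ is Hermitian, a Lie--Trotter (or Dyson/Duhamel) bound gives
\[
\norm{F(u,v)}\le e^{\norm{u_2X+v_2Y}}\le e^{r(|u_2|+|v_2|)},\qquad r=\max(\norm{X},\norm{Y}).
\]
The Fourier side $F$ is therefore an entire function of exponential type that is bounded on $\RR^2$. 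By the Paley--Wiener--Schwartz theorem, the distribution whose Fourier transform is $F$ is supported in the convex compact set $[-r,r]^2$; more precisely, it is supported in the joint numerical range hull. It has finite order, namely at most $2$ from polynomial growth considerations. Such a compactly supported distribution extends uniquely by continuity to all of $C^\infty(\RR^2)$: I would multiply by a cutoff $\chi\equiv1$ near the support and set $\op{g}{X,Y}:=\op{\chi g}{X,Y}$.

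Next I would check consistency with \eqref{eq: weyl monomial}. Pairing with a monomial $x^my^n$ amounts to taking derivatives of $F$ at the origin: $\op{x^my^n}{X,Y}=\I^{m+n}\partial_u^m\partial_v^nF(0,0)$ up to the sign and normalization conventions. Expanding $e^{-\I(uX+vY)}=\sum_k(-\I)^k(uX+vY)^k/k!$ and extracting the coefficient of $u^mv^n$ gives exactly the symmetrized average \eqref{eq: weyl monomial}. This agrees with \eqref{eq: weyl power}.

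For uniqueness, suppose two compactly supported distributions agree on all polynomials. Their difference $D$ is a compactly supported distribution that annihilates polynomials. Its Fourier transform is entire, and all of its Taylor coefficients at $0$ are pairings of $D$ with monomials, hence all vanish. The Fourier transform is therefore identically zero, so $D=0$.

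The main obstacle is the exponential-type estimate for $\norm{e^{-\I(uX+vY)}}$ at complex arguments, together with correctly invoking Paley--Wiener--Schwartz for matrix-valued (entrywise) distributions. I would handle this entrywise via $\mathbf{a}^\dagger F\mathbf{b}$ and use the Duhamel bound $\norm{e^{\I H+K}}\le e^{\norm{K}}$ for Hermitian $H$.
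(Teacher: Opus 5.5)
Your proposal is correct and is essentially the argument behind the cited result, which the paper itself mirrors in Part 1 of the proof of \Cref{lem: exp bound}: an exponential-type bound on $e^{-\I(uX+vY)}$ at complex frequencies, Paley--Wiener--Schwartz to get compact support and hence the extension to $C^\infty(\RR^2)$, and Taylor coefficients of the Fourier--Laplace transform at the origin for both the monomial identity and uniqueness. One aside is off: boundedness of $F$ on $\RR^2$ only gives order at most $3$ (not $2$) by the standard estimate, but nothing uses this, since compact support alone suffices for the extension.
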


The lemma states that $\op{\cdot}{X,Y}$ is compactly supported, but it does not identify the support. The next result gives the needed geometric description.

\begin{definition}
    The numerical range of a complex matrix $A\in\matn$ is defined as
    $$W(A) = \{ \bx^\dagger A \bx \mid \bx \in \mathbb{C}^N, \|\bx\|_2 = 1 \}.$$
\end{definition}
\begin{lemma}[{\cite[Theorem 5.2]{anderson1969weyl}}]\label{lem: cvx hull op}
    Let $A = X+\I Y$, and let 
    $K$ denote the support of $\op{\cdot}{X,Y}$. Then $W(A)$ is the closed convex hull of $K$. 
\end{lemma}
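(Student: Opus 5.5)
The plan is to compare support functions. Both $\mathrm{ch}(K)$ (the closed convex hull of $K$) and $W(A)$ are compact convex subsets of $\RR^2\cong\CC$. For $W(A)$ this is the Toeplitz--Hausdorff theorem together with compactness of the unit sphere. So it suffices to show that for every direction $(a,b)\in\RR^2$,
\begin{equation*}
H_K(a,b):=\sup_{(x,y)\in K}(ax+by)\ =\ \sup_{z\in W(A)}\bigl(a\,\Re z+b\,\Im z\bigr).
\end{equation*}
The right-hand side is easy to compute. For a unit vector $\bx$ we have $a\Re(\bx^\dagger A\bx)+b\Im(\bx^\dagger A\bx)=\bx^\dagger(aX+bY)\bx$. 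By Rayleigh's principle, the right-hand side therefore equals $\lambda_{\max}(aX+bY)$.

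To compute the left-hand side, I would reduce to one dimension by pushing $T:=\op{\cdot}{X,Y}$ forward along the linear functional $\varphi(x,y)=ax+by$. Define $S(p):=T(p\circ\varphi)$ for $p\in C^\infty(\RR)$. Then $S$ is a compactly supported $\matn$-valued distribution on $\RR$ with $\supp S\subset\varphi(K)$. For polynomials, the identity \eqref{eq: weyl power} gives $S(s^n)=(aX+bY)^n$. Now let $aX+bY=\sum_j\mu_jP_j$ be the spectral decomposition. The distribution $p\mapsto\sum_j p(\mu_j)P_j$ has the same moments as $S$. Polynomials are dense in $C^\infty(\RR)$ (Weierstrass applied to derivatives, on compact sets), and compactly supported distributions are continuous for that topology. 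Hence
\begin{equation*}
S=\sum_j\delta_{\mu_j}P_j,\qquad \sup\supp S=\lambda_{\max}(aX+bY).
\end{equation*}
This immediately yields $\lambda_{\max}(aX+bY)\le H_K(a,b)$, since $\supp S\subset\varphi(K)$. Consequently $W(A)\subset\mathrm{ch}(K)$.

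The reverse inclusion $\mathrm{ch}(K)\subset W(A)$ is the main obstacle. The issue is that the support of a pushforward can be strictly smaller than the image of the support, because cancellations can occur. So I need a theorem of supports: for a compactly supported distribution $T$ on $\RR^n$, applied entrywise, the supporting function of $\mathrm{ch}\,\supp T$ in direction $\xi$ equals the exponential type of $\tau\mapsto\hat T(\I\tau\xi)$ as $\tau\to+\infty$, i.e.
\begin{equation*}
H_K(\xi)=\limsup_{\tau\to\infty}\tfrac1\tau\log\norm{\hat T(\I\tau\xi)}.
\end{equation*}
This is the Paley--Wiener--Schwartz theorem combined with the Lions/Titchmarsh convexity-of-supports theorem; see Hörmander, Theorem 7.3.1 and Section 4.3. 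I would try to invoke it in this form first.

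By the Weyl exponential identity \eqref{eq: weyl exp}, which extends to the smooth function $e^{\tau(ax+by)}$ by uniqueness in \Cref{lem: op C infty extension} and density of polynomials, we get $\hat T(\I\tau(a,b))\propto T(e^{\tau\varphi})=e^{\tau(aX+bY)}$. Its norm is exactly $e^{\tau\lambda_{\max}(aX+bY)}$, so its type is $\lambda_{\max}(aX+bY)$. This gives $H_K\le\lambda_{\max}(aX+bY)$.

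If citing the support-function form of the theorem proves delicate, I would fall back on the full Paley--Wiener--Schwartz estimate. This needs a bound $\norm{e^{-\I(\zeta_1X+\zeta_2Y)}}\le e^{\lambda_{\max}(\Im\zeta_1 X+\Im\zeta_2Y)}$ for complex $\zeta=(\zeta_1,\zeta_2)$. I would obtain it from the Lumer--Phillips / log-norm bound $\norm{e^{M}}\le e^{\lambda_{\max}((M+M^\dagger)/2)}$ applied to $M=-\I(\zeta_1X+\zeta_2Y)$. Indeed, the Hermitian part of $M$ is $\Im\zeta_1X+\Im\zeta_2Y$, since the $\Re\zeta$ terms are skew-Hermitian. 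Taking $\zeta=\I\tau(a,b)$ recovers the exponential type $\lambda_{\max}(aX+bY)$ computed above. Paley--Wiener--Schwartz then gives $\supp T\subset W(A)$, hence $\mathrm{ch}(K)\subset W(A)$ and the two sets coincide.
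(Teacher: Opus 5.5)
The paper does not prove this lemma; it only cites Anderson. Your argument is a self-contained proof along the standard lines, and its first half is sound. The Rayleigh-quotient identity $H_{W(A)}(a,b)=\lambda_{\max}(aX+bY)$, the pushforward $S$ with $\supp S\subset\varphi(K)$, and the identification $S=\sum_j\delta_{\mu_j}P_j$ via moments and density of polynomials in $C^\infty(\RR)$ together give $W(A)\subset\mathrm{ch}(K)$.

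The step that would fail is the ``support-function form'' you plan to invoke first. It is not a theorem. Take $\hat T(\zeta)=T(e^{-\I x\cdot\zeta})$, as in your proposal, and restrict it to the complex line through $(a,b)$. This gives $\hat T(s(a,b))=T(e^{-\I s\varphi})=\hat S(s)$ for $s\in\CC$. So the exponential type of $\tau\mapsto\hat T(\I\tau(a,b))$ is exactly $\sup\supp S$, the quantity you already computed in the first half. As you yourself noted, cancellation can make this strictly smaller than $H_K(a,b)$.

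For instance, the scalar distribution $T=\delta_{(1,1)}-\delta_{(1,-1)}$ has $\hat T(\zeta)=-2\I e^{-\I\zeta_1}\sin\zeta_2$. This vanishes identically at $\zeta=(\I\tau,0)$, although $H_K(1,0)=1$. The supporting function of $\mathrm{ch}\,\supp T$ is controlled by the growth of $\hat T$ on the whole affine planes $\RR^2+\I\tau\xi$, uniformly in the real part, not along the single ray through the origin. So this route cannot give $H_K\le\lambda_{\max}(aX+bY)$; it only reproduces the inequality from your first half. The Lions--Titchmarsh theorem of supports concerns convolutions and plays no role here.

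Your fallback is the correct argument and should be the main one:
\begin{itemize}
\item For every $\zeta\in\CC^2$, $\hat T(\zeta)=T(e^{-\I(\zeta_1x+\zeta_2y)})=e^{-\I(\zeta_1X+\zeta_2Y)}$. This holds because the Taylor partial sums converge in $C^\infty(\RR^2)$ and \eqref{eq: weyl power} holds with complex coefficients.
\item The log-norm bound gives $\norm{\hat T(\zeta)}\le e^{H_{W(A)}(\Im\zeta)}$, uniformly in $\Re\zeta$.
\item Paley--Wiener--Schwartz, applied entrywise, then yields $\supp T\subset W(A)$, hence $\mathrm{ch}(K)\subset W(A)$.
\end{itemize}
This is the same mechanism the paper uses in Part~1 of the proof of \Cref{lem: exp bound} (log-norm bound plus Paley--Wiener) for the time-ordered analogue. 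With that reordering, your proposal is a complete proof of the cited result.
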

By the Toeplitz--Hausdorff theorem, $W(A)$ is a convex subset of $\CC$. It contains all eigenvalues of $A$, but it can be larger than the convex hull of the eigenvalues unless $A$ is normal. Two simple but important examples are
\begin{itemize}
    \item $A+A^\dagger \succeq 0 \Longleftrightarrow W(A)\subset \{z: \Re(z)\ge 0\};$
    \item $\norm{A}\le R \Longrightarrow W(A)\subset \DD_R$.
\end{itemize}

The following proposition gives the connection between eigenvalue transformation and Weyl calculus.

\begin{prop}\label{thm: eigenvalue transform is weyl}
    Let $X,Y$ be Hermitian matrices, and let $\Omega$ denote the numerical range of $A = X+\I Y$. Let $f$ be analytic on an open neighborhood of $\Omega$, and set $g(x,y) = f(x+\I y)$. Then
    \begin{equation}\label{eq: eigenvalue transform is weyl}
        \op{g}{X,Y} = f(X+\I Y).
    \end{equation}
\end{prop}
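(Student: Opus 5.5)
The plan is to prove \eqref{eq: eigenvalue transform is weyl} first for polynomials, where it reduces to the algebraic identity \eqref{eq: weyl power}, and then pass to general analytic $f$ by polynomial approximation. The limit will be justified on the Weyl side by the continuity of the distribution in \Cref{lem: op C infty extension}, and on the matrix side by the Riesz--Dunford integral.

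\emph{Meaning of $\op{g}{X,Y}$.} Since $g$ is only defined near $\Omega$, we first fix what $\op{g}{X,Y}$ means. By \Cref{lem: cvx hull op}, the support $K$ of $\op{\cdot}{X,Y}$ satisfies $K\subset\Omega$. A compactly supported distribution $T$ satisfies $T(\varphi)=T(\psi)$ whenever $\varphi=\psi$ on a neighborhood of $\supp T$. Hence we may set $\op{g}{X,Y}:=\op{\chi g}{X,Y}$ for any $\chi\in C_c^\infty$ that equals $1$ near $\Omega$ and is supported where $f$ is analytic. This is independent of $\chi$, and agrees with the original definition when $g$ is already globally smooth.

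\emph{Polynomials.} For a polynomial $p(z)=\sum_n c_n z^n$ with $c_n\in\CC$, put $g_p(x,y)=p(x+\I y)$. Applying \eqref{eq: weyl power} with $a=1$, $b=\I$ (it holds for complex $a,b$ by the same binomial argument) and using linearity gives $\op{g_p}{X,Y}=\sum_n c_n (X+\I Y)^n=p(A)$.

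\emph{Approximation.} Let $U\supset\Omega$ be the open set where $f$ is analytic. Choose $\delta>0$ with $\Omega_{2\delta}:=\Omega+\overline{\DD_{2\delta}}\subset U$. Each $\Omega_{s}$ is compact and convex, so its complement is connected. Runge's theorem then gives polynomials $p_k\to f$ uniformly on $\Omega_{2\delta}$. By Cauchy's estimates, all complex derivatives $p_k^{(j)}\to f^{(j)}$ uniformly on $\Omega_\delta$. Since real partial derivatives of $g_{p_k}$ are $\partial_x^a\partial_y^b g_{p_k}=\I^b p_k^{(a+b)}(x+\I y)$, we get $g_{p_k}\to g$ in $C^\infty(\Omega_\delta)$.

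Take $\chi\in C_c^\infty$ with $\supp\chi\subset\Omega_\delta$ and $\chi=1$ near $\Omega$. Then $\chi g_{p_k}\to\chi g$ in the $C^\infty(\RR^2)$ topology. Continuity of the distribution and the support property give
\[
\op{g}{X,Y}=\lim_k\op{\chi g_{p_k}}{X,Y}=\lim_k \op{g_{p_k}}{X,Y}=\lim_k p_k(A).
\]

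\emph{Matrix side.} The spectrum of $A$ lies in $W(A)=\Omega$. So $f(A)=\frac{1}{2\pi\I}\oint_\Gamma f(z)(zI-A)^{-1}\rd z$ for a contour $\Gamma\subset\Omega_\delta\setminus\Omega$ enclosing $\Omega$, and the same formula holds for each $p_k$. The resolvent is bounded on $\Gamma$ and $p_k\to f$ uniformly there, hence $p_k(A)\to f(A)$. Combining the two limits yields \eqref{eq: eigenvalue transform is weyl}.

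\emph{Main obstacle.} The delicate point is the first step: giving $\op{g}{X,Y}$ a meaning for $g$ defined only near $\Omega$, and making sure the approximation converges in the $C^\infty$ topology in which the distribution is continuous, not merely uniformly. This is exactly why I use the support information from \Cref{lem: cvx hull op} together with Cauchy estimates on a slightly larger convex neighborhood. Everything else is routine.
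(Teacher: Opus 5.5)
Your proof is correct and follows essentially the same route as the paper: the polynomial identity from \eqref{eq: weyl power}, Runge approximation on a compact convex neighborhood of $\Omega$, Cauchy estimates to upgrade uniform convergence to $C^\infty$ convergence, and the support property from \Cref{lem: cvx hull op} combined with continuity of the distribution. You differ in two places, both to your advantage. On the matrix side you obtain $p_k(A)\to f(A)$ directly from the Riesz--Dunford integral. The paper instead uses diagonalization plus density of diagonalizable matrices, which involves an interchange of limits it does not justify. You also make explicit, via the cutoff $\chi$, what $\op{g}{X,Y}$ means for $g$ defined only near $\Omega$, which the paper leaves implicit.
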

\begin{proof}
    By \eqref{eq: weyl power} and linearity, for every polynomial $p$ we have
    \begin{equation}\label{eq: polynomial eigenvalue transform is weyl}
        \op{p(x+\I y)}{X,Y}=p(A).
    \end{equation}

    Now let $f$ be analytic on an open neighborhood $U$ of $\Omega$.  Choose a compact convex set $L$ such that
    \[
        \Omega\subset \operatorname{int}(L),\qquad L\subset U .
    \]
    Since $\CC\setminus L$ is connected, Runge's theorem \cite{rudin1974real} gives polynomials $p_m$ such that $p_m\to f$ uniformly on $L$.  By Cauchy's integral formula, the convergence is uniform for all complex derivatives on every compact subset of $\operatorname{int}(L)$, in particular on a neighborhood of $\Omega$.  Thus the functions
    \[
        g_m(x,y)=p_m(x+\I y)
    \]
    converge to $g(x,y)=f(x+\I y)$ in the $C^\infty$ topology on a neighborhood of $\Omega$. By \Cref{lem: cvx hull op}, the distribution $\op{\cdot}{X,Y}$ is supported inside $\Omega$. Together with the continuity of $\op{\cdot}{X,Y}$ in the $C^\infty$ topology, this implies that
    \begin{equation}\label{eq: weyl polynomial limit e_transform}
        \op{g_m}{X,Y}\longrightarrow \op{g}{X,Y}.
    \end{equation}

    On the other hand, the spectrum of $A$ is contained in its numerical range $\Omega$, so
    \begin{equation}\label{eq: matrix polynomial limit e_transform}
        p_m(A)\longrightarrow f(A)
    \end{equation}
    holds by spectral decomposition if $A$ is diagonalizable. The identity follows for general matrices by continuity, since diagonalizable matrices are dense in $\matn$. This completes the proof.
\end{proof}

\Cref{lem: op C infty extension} shows that $\op{g}{X,Y}$ is well-defined for $C^\infty$ functions and depends only on the values of $g$ on a bounded region. However, explicitly computing Fourier transforms of $C^\infty$ functions is usually difficult. We therefore relax the smoothness requirement and allow $g\in B(\RR^2)$. Let
\begin{equation}\label{eq: FS f}
    g(x,y) = (2 \pi)^{-1} \int_{\RR^2} e^{-\I (ux+vy)} \rd \mu(u,v)
\end{equation}
for some finite complex Borel measure $\mu$, and we naturally define
\begin{equation}\label{eq: def opg general}
    \op{g}{X,Y}:=(2 \pi)^{-1} \int_{\RR^2} e^{-\I (uX+vY)} \rd \mu(u,v),
\end{equation}
which is a generalization to \eqref{eq: weyl fourier defi}.
\begin{lemma}\label{lem: op determined by f Omega}
    Let $X,Y$ be Hermitian matrices, and let $\Omega$ denote the numerical range of $X+\I Y$. If $g$ is given by \eqref{eq: FS f}, then $\op{g}{X,Y}$ 
    is determined by $g|_\Omega$. In particular, if $g$ vanishes on $\Omega$, then $\op{g}{X,Y} = 0$.
\end{lemma}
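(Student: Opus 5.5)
By linearity of \eqref{eq: def opg general} in $\mu$, it is enough to prove the second assertion: if $g\in B(\RR^2)$ vanishes on $\Omega$, then $\op{g}{X,Y}=0$. The plan is to approximate $g$ in the "measure" sense by smooth functions that vanish on a whole \emph{neighborhood} of $\Omega$. For such functions, the support statement (\Cref{lem: cvx hull op}) applies directly to the distribution of \Cref{lem: op C infty extension}. Working with neighborhoods avoids having to know the order of that distribution, which we do not control.

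\textbf{Case 1: $\Omega$ has non-empty interior.} Fix an interior point $c$ of $\Omega$. For $s\in(0,1)$ set $g_s(\bx)=g(c+s(\bx-c))$. Then $g_s$ vanishes on $c+s^{-1}(\Omega-c)$. Since $\Omega$ is convex with $c$ interior, this set contains an open neighborhood of $\Omega$. Next mollify: $g_{s,\varepsilon}=g_s*\varphi_\varepsilon$ with $\varphi_\varepsilon$ a Schwartz mollifier whose Fourier transform is compactly supported and bounded by $1$ in modulus. Choosing $\varphi$ with compact support is problematic for a compactly supported Fourier transform, so instead I would take $\varphi$ Schwartz and check that $g_{s,\varepsilon}$ is small, not zero, near $\Omega$; alternatively, I would use a compactly supported $\varphi$ and accept that $\hat\varphi$ is merely bounded, which suffices below. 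With compactly supported $\varphi_\varepsilon$ and $\varepsilon$ small, $g_{s,\varepsilon}$ is $C^\infty$ and vanishes on a neighborhood of $\Omega$.

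Both operations act on the measure explicitly. Rescaling replaces $\rd\mu(u,v)$ by the pushforward under $(u,v)\mapsto(su,sv)$ multiplied by a unimodular phase $e^{-\I (1-s)(\cdot)\cdot c}$. Convolution multiplies the measure by $(2\pi)\hat\varphi(\varepsilon\,\cdot)$, which is bounded and tends to $1$ pointwise. Hence $g_{s,\varepsilon}\in B(\RR^2)$, and \eqref{eq: def opg general} gives
\[
\op{g_{s,\varepsilon}}{X,Y}=(2\pi)^{-1}\int \hat\varphi_{\rm norm}(\varepsilon s\xi)\,e^{-\I(1-s)\xi\cdot c}\,e^{-\I s(uX+vY)}\rd\mu(\xi).
\]
Dominated convergence (the integrand is bounded by $|\mu|$, which is finite) shows that this tends to $\op{g}{X,Y}$ as $\varepsilon\to0$ and then $s\to1$.

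\textbf{The main step.} It remains to show $\op{g_{s,\varepsilon}}{X,Y}=0$. I need that for a smooth $h\in B(\RR^2)$ the measure definition \eqref{eq: def opg general} coincides with the distribution $\op{\cdot}{X,Y}$ of \Cref{lem: op C infty extension}. Since that distribution is compactly supported in $\Omega$, I would pair it entrywise with vectors $a,b$ to get a scalar compactly supported distribution $T_{ab}$. By \Cref{lem: op C infty extension} applied to Schwartz functions, its (inverse) Fourier transform is the smooth bounded function $\xi\mapsto(2\pi)^{-1}\langle a,e^{-\I\xi\cdot(X,Y)}b\rangle$. Then, for $h=\chi h+(1-\chi)h$ with $\chi\in C_c^\infty$ equal to $1$ near $\Omega$, one has $T_{ab}(h)=T_{ab}(\chi h)$. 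The Parseval pairing of $T_{ab}$ with $\chi h$ can be rewritten as the integral against $\mu_h$, after approximating $\mu_h$ weakly. This identification is the step I expect to be the most technical.

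Once it is in place, $g_{s,\varepsilon}$ vanishes on a neighborhood of $\supp\op{\cdot}{X,Y}\subset\Omega$ (by \Cref{lem: cvx hull op}), so its image is $0$.

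\textbf{Case 2: $\Omega$ is a point or a segment.} In this degenerate case I would argue directly. A segment numerical range forces $A=c+e^{\I\theta}H$ with $H$ Hermitian, so $X$ and $Y$ are commuting affine functions of $H$. Then $e^{-\I(uX+vY)}=\sum_k e^{-\I(u x_k+v y_k)}P_k$ over the spectral projections $P_k$ of $H$, with $(x_k,y_k)\in\Omega$. This gives $\op{g}{X,Y}=\sum_k g(x_k,y_k)P_k=0$. The point case is the special case $H=0$.
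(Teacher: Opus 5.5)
Your proposal is correct and follows the paper's proof. It uses the same linearity reduction and the same spectral-theorem argument when $\Omega$ lies on a line. In the nondegenerate case it combines dilation about an interior point, compactly supported mollification, the support lemma, and dominated convergence, exactly as the paper does; you take the two limits together rather than in two separate parts.

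You also add a step identifying the distribution of \Cref{lem: op C infty extension} with the Fourier--Stieltjes definition \eqref{eq: def opg general} on smooth functions in $B(\RR^2)$, using a cutoff equal to $1$ near $\Omega$ and pairing against $\mu_h$. This fills in a point the paper uses without comment, since that lemma is stated only for Schwartz functions.
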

\begin{proof}
By linearity, it suffices to prove that if $g$ vanishes on $\Omega$, then $\op{g}{X,Y} = 0$. 

We first dispose of the degenerate case. If $\Omega$ has empty interior, then, since it is convex, it is contained in an affine line $c+e^{\I\theta}\RR$. Let $B=e^{-\I\theta}(A-cI)$. Then $W(B)\subset \RR$, so $\bx^\dagger \frac{B-B^\dagger}{2\I}\bx=0$ for every unit vector $\bx$. Since $\frac{B-B^\dagger}{2\I}$ is Hermitian, this implies $B=B^\dagger$. Thus $A=cI+e^{\I\theta}B$ for some Hermitian matrix $B$. Writing $c=c_x+\I c_y$ and $e^{\I\theta}=a+\I b$, we have $X=c_xI+aB$ and $Y=c_yI+bB$. Hence
\[
    \op{g}{X,Y}
    =(2\pi)^{-1}\int_{\RR^2} e^{-\I(uc_x+vc_y)}e^{-\I(ua+vb)B}\rd\mu(u,v).
\]
By the spectral theorem, this is the ordinary functional calculus $\phi(B)$ for the one-variable function
\[
    \phi(t)=g(c_x+at,c_y+bt).
\]
Since $W(A)=c+e^{\I\theta}W(B)$ and $g$ vanishes on $\Omega=W(A)$, the function $\phi$ vanishes on $W(B)$. Thus $\phi(B)=0$, because the spectrum of $B$ is contained in $W(B)$. This proves the claim when $\Omega$ has empty interior.

It remains to consider the case where $\Omega$ has nonempty interior. After translating $X$ and $Y$ by suitable scalar multiples of the identity, we may assume without loss of generality that the origin lies in the interior of $\Omega$. The proof has two parts.

\textbf{Part 1: Functions vanishing on an open neighborhood of $\Omega$.}

Suppose that a function $h(x,y) = (2\pi)^{-1}\int_{\RR^2} e^{-\I(ux+vy)} \rd \nu(u,v)$, generated by a finite Borel measure $\nu$, vanishes identically on an open neighborhood $U$ of $\Omega$. We first show that $\op{h}{X,Y} = 0$.

Let $\phi$ be a standard smooth mollifier: it is compactly supported, smooth, nonnegative, and satisfies $\iint_{\RR^2} \phi(x,y) \rd x \rd y = 1$. Define the scaled mollifier $\phi_\epsilon(x,y)=\epsilon^{-2}\phi(x/\epsilon,y/\epsilon)$. The convolution $h_\epsilon=h*\phi_\epsilon$ is smooth. For sufficiently small $\epsilon>0$, it still vanishes on a slightly smaller open neighborhood of $\Omega$. By \Cref{lem: cvx hull op}, the support of the distribution $\op{\cdot}{X,Y}$ is contained in $\Omega$. Therefore $\op{h_\epsilon}{X,Y}=0$.

In the frequency domain, the Fourier transform of $h_{\epsilon}=h*\phi_\epsilon$ is the measure $2\pi\hat{\phi}_\epsilon\nu$. Since $\hat{\phi}_\epsilon(u,v)=(2\pi)^{-1}\iint_{\RR^2}\phi_\epsilon(x,y)e^{\I(ux+vy)}\rd x\rd y$, we have $\hat{\phi}_\epsilon(u,v)=\hat{\phi}(\epsilon u,\epsilon v)$. Hence
\begin{align*} 
0 &= \op{h_\epsilon}{X,Y} = (2\pi)^{-1} \int_{\RR^2} e^{-\I(uX+vY)} 2\pi\hat{\phi}_\epsilon(u,v) \rd \nu(u,v)\\
&= (2\pi)^{-1} \int_{\RR^2} e^{-\I(uX+vY)} 2\pi\hat{\phi}(\epsilon u, \epsilon v) \rd \nu(u,v). 
\end{align*}
As $\epsilon \to 0$, the factor $2\pi\hat{\phi}(\epsilon u,\epsilon v)$ converges pointwise to $2\pi\hat{\phi}(0,0)=\iint_{\RR^2}\phi(x,y)\rd x\rd y=1$. Since the matrix exponential is unitary and $\nu$ is finite, the dominated convergence theorem gives $0=\op{h}{X,Y}$.

\textbf{Part 2: The function $g$ vanishes exactly on $\Omega$.}

Now suppose that $g$ vanishes on the closed set $\Omega$. For any $R>1$, define the dilated function $g_R(x,y)=g(x/R,y/R)$. Since $g$ vanishes on $\Omega$ and the origin is in the interior of $\Omega$, the function $g_R$ vanishes on the scaled set $R\Omega=\{(Rx,Ry):(x,y)\in\Omega\}$. For $R>1$, this set contains an open neighborhood of $\Omega$, because $\Omega$ is convex by the Toeplitz--Hausdorff theorem.

The function $g_R$ corresponds to the scaled measure $\mu_R$ defined by $\int \psi(u,v) \rd \mu_R(u,v) = \int \psi(u/R, v/R) \rd \mu(u,v)$ for test functions $\psi$. This scaling preserves total variation, so $\int \rd \abs{\mu_R} = \int \rd\abs{\mu} < \infty$. More explicitly,
$$ g_R(x,y) = (2\pi)^{-1} \int_{\RR^2} e^{-\I(ux+vy)} \rd \mu_R(u,v) = (2\pi)^{-1} \int_{\RR^2} e^{-\I(\frac{u}{R}x + \frac{v}{R}y)} \rd \mu(u,v). $$
Since $g_R$ vanishes on an open neighborhood of $\Omega$, Part 1 gives $\op{g_R}{X,Y}=0$. Hence
$$ 0 = \op{g_R}{X,Y} = (2\pi)^{-1} \int_{\RR^2} e^{-\I(\frac{u}{R}X + \frac{v}{R}Y)} \rd \mu(u,v). $$
Finally, let $R\to1^+$. The matrix exponential $e^{-\I(\frac{u}{R}X+\frac{v}{R}Y)}$ converges pointwise to $e^{-\I(uX+vY)}$, and its operator norm is bounded by $1$. Another application of the dominated convergence theorem yields
$$ \op{g}{X,Y} = (2\pi)^{-1} \int_{\RR^2} e^{-\I(uX+vY)} \rd \mu(u,v) = \lim_{R \to 1^+} (2\pi)^{-1} \int_{\RR^2} e^{-\I(\frac{u}{R}X + \frac{v}{R}Y)} \rd \mu(u,v) = 0. $$
This completes the proof.
\end{proof}

The preceding lemma shows that, for a function $g$ defined on $\Omega$, the operator $\op{g}{X,Y}$ is well-defined whenever $g$ admits an extension with finite Wiener norm. We will often approximate $g$ by another function, and the next estimate controls the resulting operator error.

\begin{cor}\label{thm: opf bound}
    Let $X,Y$ be Hermitian matrices, and let $\Omega$ denote the numerical range of $X+\I Y$. If $g$ is a smooth function defined on $\Omega$, then
    $$\norm{\op{g}{X,Y}}\le \norm{g}_{W(\Omega)}.$$ 
\end{cor}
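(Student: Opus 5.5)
The plan is to combine \Cref{lem: op determined by f Omega} with the trivial bound for the integral definition \eqref{eq: def opg general}, and then take the infimum over extensions. First we fix what $\op{g}{X,Y}$ means for a function $g$ given only on $\Omega$. If $\norm{g}_{W(\Omega)}=\infty$ there is nothing to prove, so assume it is finite. Then by \eqref{eq: restricted wiener norm def} there is at least one $\tilde g\in B(\RR^2)$ with $\tilde g|_\Omega=g$. We set $\op{g}{X,Y}:=\op{\tilde g}{X,Y}$ as in \eqref{eq: def opg general}. Suppose $\tilde g_1,\tilde g_2$ are two such extensions. Their difference lies in $B(\RR^2)$ and vanishes on $\Omega$, so \Cref{lem: op determined by f Omega} together with linearity of \eqref{eq: def opg general} in the measure shows that the definition does not depend on the choice of extension.

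Next, for a fixed extension $\tilde g$ with Fourier--Stieltjes measure $\mu$, we bound the operator norm directly. Each $e^{-\I(uX+vY)}$ is unitary because $uX+vY$ is Hermitian, so its norm is $1$. The triangle inequality for Bochner-type integrals against a complex measure then gives
\[
\norm{\op{\tilde g}{X,Y}}\le (2\pi)^{-1}\int_{\RR^2}\norm{e^{-\I(uX+vY)}}\rd\abs{\mu}(u,v)=(2\pi)^{-1}\int_{\RR^2}\rd\abs{\mu}=\norm{\tilde g}_W .
\]
To justify the integral triangle inequality with a complex measure, we write $\rd\mu=\theta\,\rd\abs{\mu}$ with $\abs{\theta}=1$ (polar decomposition). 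We then apply the inequality to the matrix-valued integrand $\theta(u,v)e^{-\I(uX+vY)}$, either entrywise or by pairing with unit vectors.

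Finally, since the left-hand side $\norm{\op{g}{X,Y}}$ is the same for every admissible extension, we take the infimum over $\tilde g$ and obtain $\norm{\op{g}{X,Y}}\le\norm{g}_{W(\Omega)}$. We should also check consistency: when $g$ is smooth near $\Omega$ and one takes a smooth extension, this agrees with the distributional definition from \Cref{lem: op C infty extension}. That follows from \eqref{eq: regular extension}, which lets us use extensions $h\in A(\RR^2)$ with smooth $\hat h$, combined with \eqref{eq: weyl fourier defi}.

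The only real obstacle is the well-definedness step, that is, that the value does not depend on the extension. This is exactly what \Cref{lem: op determined by f Omega} supplies, so the corollary reduces to the unitarity estimate.
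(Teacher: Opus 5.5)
Your proof is correct and follows the same route as the paper: Lemma~\Cref{lem: op determined by f Omega} makes $\op{g}{X,Y}$ independent of the extension, unitarity of $e^{-\I(uX+vY)}$ gives $\norm{\op{\tilde g}{X,Y}}\le(2\pi)^{-1}\abs{\tilde\mu}(\RR^2)=\norm{\tilde g}_W$, and you finish by taking the infimum over extensions. One caveat concerns your side remark on consistency with the distributional calculus, which the corollary itself does not need: an extension $h\in A(\RR^2)$ with smooth $\hat h$ as in \eqref{eq: regular extension} need not be smooth, let alone Schwartz, so Lemma~\Cref{lem: op C infty extension} does not apply to it. The correct argument takes a compactly supported smooth extension $\phi$ of $g$, for which the two definitions agree by Lemma~\Cref{lem: op C infty extension}, and then applies Lemma~\Cref{lem: op determined by f Omega} to $\tilde g-\phi$.
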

\begin{proof}
    For any function $\tilde{g}\in B(\RR^2)$ that extends $g$ to $\RR^2$ and satisfies $(\tilde{g})^\wedge = \tilde{\mu}$, \Cref{lem: op determined by f Omega} gives $\op{g}{X,Y} = \op{\tilde{g}}{X,Y}$. By definition \eqref{eq: def opg general},
    $$ \norm{\op{\tilde{g}}{X,Y}}=\norm{(2 \pi)^{-1} \int_{\RR^2} e^{-\I (uX+vY)} \rd \tilde{\mu}(u,v)}\le (2 \pi)^{-1} \int_{\RR^2}  \rd \abs{\tilde{\mu}}\le \norm{\tilde{g}}_{W}.$$ 
    Taking the infimum over all extensions $\tilde{g}$ completes the proof by the definition of $\norm{g}_{W(\Omega)}$.
\end{proof}

\begin{theorem}\label{thm: f - h bound}
    Let $X,Y$ be Hermitian matrices, and let $\Omega$ denote the numerical range of $X+\I Y$. Let $f$ be analytic on an open neighborhood of $\Omega$, and set $g(x,y) = f(x+\I y)$. Then, for any $h\in B(\RR^2)$,
    \begin{equation}\label{eq: f - h bound}
        \norm{f(X+\I Y) - \op{h}{X,Y}} \le \norm{g - h}_{W(\Omega)}.
    \end{equation}
\end{theorem}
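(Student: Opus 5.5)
The natural route is to combine \Cref{thm: eigenvalue transform is weyl} with \Cref{thm: opf bound}, using linearity of $\op{\cdot}{X,Y}$ on $B(\RR^2)$. If $\norm{g-h}_{W(\Omega)}=\infty$ there is nothing to prove, so assume it is finite. Then for every $\eta>0$ there is an extension $k\in B(\RR^2)$ with $k|_\Omega=(g-h)|_\Omega$ and $\norm{k}_W\le \norm{g-h}_{W(\Omega)}+\eta$. Set $\tilde g := h+k\in B(\RR^2)$. This function agrees with $g$ on $\Omega$ and has finite Wiener norm, so $\op{\tilde g}{X,Y}$ is defined by \eqref{eq: def opg general}. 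The definition is linear in the measure, which gives $\op{\tilde g}{X,Y}=\op{h}{X,Y}+\op{k}{X,Y}$.

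The key step is to identify $\op{\tilde g}{X,Y}$ with $f(X+\I Y)$. \Cref{thm: eigenvalue transform is weyl} gives $\op{g}{X,Y}=f(A)$, but there $\op{g}{X,Y}$ is the distributional object of \Cref{lem: op C infty extension}, applied to a smooth function defined near $\Omega$. The $B(\RR^2)$ definition \eqref{eq: def opg general} is a different object, so I need the two to coincide on $\tilde g$.

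To check this, I will mollify. Take $\tilde g_\epsilon=\tilde g*\phi_\epsilon$. It is smooth and lies in $B(\RR^2)$ with measure $2\pi\hat\phi_\epsilon\tilde\mu$, so by dominated convergence $\op{\tilde g_\epsilon}{X,Y}\to\op{\tilde g}{X,Y}$ in the Fourier sense, as in Part 1 of \Cref{lem: op determined by f Omega}.

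For smooth functions in $B(\RR^2)$, the two definitions agree. This follows from the Schwartz-space case in \Cref{lem: op C infty extension}: multiply by a cutoff equal to $1$ near $\Omega$, and use that the support of the distribution lies in $\Omega$ (\Cref{lem: cvx hull op}). Making that reduction carefully is the main technical point.

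The difficulty is that $\tilde g_\epsilon$ agrees with $g$ only up to $O(\epsilon)$ near $\Omega$, not exactly on a neighborhood. I will sidestep this as follows. Pick any smooth $g_0\in A(\RR^2)$ with compactly supported smooth transform-friendly structure, namely $g_0=\chi g$ for a smooth cutoff $\chi$ equal to $1$ near $\Omega$. This $g_0$ is Schwartz, so both definitions agree on it, and \Cref{thm: eigenvalue transform is weyl} together with support in $\Omega$ gives $\op{g_0}{X,Y}=f(A)$. Then $\tilde g-g_0\in B(\RR^2)$ vanishes on $\Omega$, so \Cref{lem: op determined by f Omega} yields $\op{\tilde g}{X,Y}=\op{g_0}{X,Y}=f(A)$, with no mollification needed.

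Combining the steps, $f(A)-\op{h}{X,Y}=\op{k}{X,Y}$. The estimate in the proof of \Cref{thm: opf bound} gives $\norm{\op{k}{X,Y}}\le\norm{k}_W\le\norm{g-h}_{W(\Omega)}+\eta$. Letting $\eta\to0$ proves \eqref{eq: f - h bound}.

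The only step I expect to need care is the claim that the two definitions of $\op{\cdot}{X,Y}$ agree on the Schwartz function $g_0$. This is exactly the last sentence of \Cref{lem: op C infty extension}, so the obstacle is mild.
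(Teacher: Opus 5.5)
Your argument is correct and follows the paper's proof, which applies \Cref{thm: opf bound} to $g-h$ and then invokes \Cref{thm: eigenvalue transform is weyl} — the same combination you use. Your extra step makes explicit a point the paper leaves implicit: that the Fourier--Stieltjes definition \eqref{eq: def opg general} on an extension $\tilde g$ agrees with the distributional one. You get this via the Schwartz cutoff $g_0=\chi g$ (with $\chi\in C_c^\infty$ supported where $f$ is analytic) together with \Cref{lem: op determined by f Omega}, and the mollification detour you sketch first is not needed.
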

\begin{proof}
    Applying \Cref{thm: opf bound} to $g-h$ gives
\begin{equation*}
    \norm{\op{g}{X,Y} - \op{h}{X,Y}} \le \norm{g-h}_{W(\Omega)}.
\end{equation*}
By \Cref{thm: eigenvalue transform is weyl}, $\op{g}{X,Y} = f(X+\I Y)$, which proves \eqref{eq: f - h bound}.
\end{proof}

This theorem is the central estimate in the Weyl-calculus approach to LCHS. It reduces the construction of approximate LCHS formulas for an analytic function $f$ to a scalar approximation problem: one only needs to approximate the two-variable function $f(x+\I y)$ on the numerical range of $A$. Since the Fourier--Stieltjes transform of $h$ may be any finite Borel measure, the same statement covers both continuous and discrete LCHS formulas.

\section{LCHS-based eigenvalue transformation for analytic functions}\label{sec:general-analytic-lchs}
In this section, we use the Weyl-calculus error bound to construct explicit LCHS formulas for general analytic functions. The construction depends on a region containing the numerical range of the input matrix $A$. We first treat the common normalization $\norm{A}\le 1$, which is especially natural in the block-encoding query model (\Cref{def: block encoding}). Under this assumption, $W(A)\subset \DD_1$. The more general case $\norm{A}\le \alpha$ can be reduced by scaling. In addition, we will also state a version for arbitrary bounded convex numerical ranges at the end of the section.

Assume that $f(z)$ is analytic in a neighborhood of $\DD_1$. Then there exists $\rho>0$ such that $f$ is analytic and bounded in the interior of $\DD_a$, where $a=1+\rho$. Define the two-variable function $g:\RR^2\to \CC$ by $g(x,y)=f(x+\I y)$. Our goal is to construct a smooth approximation to $g$ using only local data on $D_a$.

We fix an extension $g_2$ of $g|_{D_a}$ whose Fourier transform is an $L^1$ function:
\begin{equation}\label{eq: g2 L1 extension e_disc}
    g_2=g\quad \text{on }D_a,
    \qquad \hat g_2\in L^1(\RR^2).
\end{equation}
At this stage, the construction works with any such extension $g_2$. Specific choices of $g_2$ will be used later in \Cref{thm: construction 1 for disc} and \Cref{thm: construction 2 for disc}.

Put
\begin{equation}\label{eq: rho and M e_disc}
    \rho=a-1,\qquad
    M=\norm{g_2}_W=(2\pi)^{-1}\norm{\hat g_2}_{L^1(\RR^2)}.
\end{equation}
Then
\begin{equation}\label{eq: g2 Linfty M e_disc}
    \norm{g_2}_{L^\infty(\RR^2)}\le M.
\end{equation}

We convolve $g_2$ with the Gaussian kernel
\begin{equation*}
    K_\gamma(x,y)=\frac{\gamma^2}{\pi}e^{-\gamma^2(x^2+y^2)}.
\end{equation*}
This gives the analytic function
\begin{equation}\label{eq: g3 e_disc}
    g_3(x,y) = K_\gamma*g_2(x,y).
\end{equation}
Equivalently, in the Fourier domain,
\begin{equation}\label{eq: g3 Fourier e_disc}
    \hat g_3(\xi,\eta)=e^{-\frac{\xi^2+\eta^2}{4\gamma^2}}\hat g_2(\xi,\eta).
\end{equation}
Given an accuracy target $\epsilon$, we will choose $\gamma$ and $R$ so that $g_3$ satisfies the following three properties:
\begin{itemize}
    \item $\norm{g_3}_{W}$ is bounded uniformly for $\epsilon$.
    \item $\norm{g_3-g}_{W(D_1)} \le \frac{\epsilon}{2}$.
    \item $(2\pi)^{-1}\norm{\hat{g_3}}_{L^1(\RR^2\setminus D_R)}\le\frac{\epsilon}{2}$ and $R = \mo{\log\frac{1}{\epsilon}}$.
\end{itemize}
Together, these properties allow us to apply \Cref{thm: f - h bound} with the truncated kernel
$\hat h=\hat g_3|_{D_R}$, yielding an LCHS formula with controlled approximation error and cost.
The next three lemmas make the choices of $\gamma$ and $R$ precise.

\begin{lemma}\label{lem: g3 W bounded e_disc} Assume $a>1$. Then
\begin{equation}\label{eq: g3 W minimal e_disc}
    \norm{g_3}_W\le\norm{g_2}_W=M.
\end{equation}
In particular, $\norm{g_3}_W$ is bounded uniformly in $\gamma$.
\end{lemma}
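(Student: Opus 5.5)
The bound follows directly from the Fourier-side description \eqref{eq: g3 Fourier e_disc} together with the definition \eqref{eq: wiener norm def} of the Wiener norm. Since $\hat g_2\in L^1(\RR^2)$ by \eqref{eq: g2 L1 extension e_disc}, and the Gaussian multiplier $e^{-(\xi^2+\eta^2)/(4\gamma^2)}$ is bounded by $1$, the product $\hat g_3$ is again in $L^1(\RR^2)$. Hence $g_3\in A(\RR^2)$, and its Wiener norm is the normalized $L^1$ norm of $\hat g_3$.

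The plan has three steps.

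\begin{enumerate}
\item \textbf{Fourier side of the convolution.} Verify \eqref{eq: g3 Fourier e_disc} under the paper's convention. With
\[
\hat K_\gamma(\xi,\eta)=(2\pi)^{-1}\iint K_\gamma(x,y)\,e^{\I(\xi x+\eta y)}\rd x\rd y=(2\pi)^{-1}e^{-\frac{\xi^2+\eta^2}{4\gamma^2}},
\]
the convolution theorem gives $\widehat{K_\gamma*g_2}=2\pi\hat K_\gamma\,\hat g_2$. This is legitimate here because $K_\gamma\in L^1$ and $g_2$ is bounded and continuous with an $L^1$ Fourier transform. One can also see it directly by inserting the Fourier representation of $g_2$ into the convolution integral and applying Fubini, which is justified by $\hat g_2\in L^1$ and $K_\gamma\in L^1$.

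\item \textbf{Pointwise domination.} Use $|\hat g_3|\le|\hat g_2|$ pointwise to obtain
\[
\norm{g_3}_W=(2\pi)^{-1}\norm{\hat g_3}_{L^1}\le(2\pi)^{-1}\norm{\hat g_2}_{L^1}=M.
\]

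\item \textbf{Uniformity in $\gamma$.} The right-hand side does not depend on $\gamma$, which gives the final claim.
\end{enumerate}

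The only point needing care is the normalization constant in the convolution theorem, namely that the factor $2\pi$ exactly cancels $\hat K_\gamma(0)=(2\pi)^{-1}$. This cancellation is forced by $\iint K_\gamma=1$. No real obstacle is expected, and the hypothesis $a>1$ plays no role beyond ensuring that $g_2$ is defined.
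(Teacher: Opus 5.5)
Your proposal is correct and follows the paper's proof: both rest on the Fourier-side identity $\hat g_3=e^{-(\xi^2+\eta^2)/(4\gamma^2)}\hat g_2$ and the pointwise bound $|\hat g_3|\le|\hat g_2|$, which gives $\norm{g_3}_W\le M$ independently of $\gamma$. Your added Fubini check of the convolution-theorem normalization is accurate, and it supplies a verification that the paper takes for granted.
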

\begin{proof}
By \eqref{eq: g3 Fourier e_disc},
\begin{equation*}
    \hat g_3(\xi,\eta)=e^{-\frac{\xi^2+\eta^2}{4\gamma^2}}\hat g_2(\xi,\eta).
\end{equation*}
Therefore
\begin{equation*}
    \norm{g_3}_W
    =(2\pi)^{-1}\iint_{\RR^2}e^{-\frac{\xi^2+\eta^2}{4\gamma^2}}\abs{\hat g_2(\xi,\eta)}\rd\xi\rd\eta
    \le (2\pi)^{-1}\norm{\hat g_2}_{L^1(\RR^2)}=\norm{g_2}_W.
\end{equation*}
\end{proof}

\begin{lemma}\label{lem: local gaussian approximation e_disc}
    Assume $\rho>0$ and $\gamma\rho\ge 1$. Set $M=\norm{g_2}_W$. Then
    \begin{equation}\label{eq: H2 h bound final e_disc}
        \norm{g_3-g}_{W(D_1)}
        \le \frac{M}{(1\wedge \rho^2)}(62+94\gamma^2\rho^2+56\gamma^4\rho^4)e^{-\gamma^2\rho^2},
    \end{equation}
    where $1\wedge \rho^2 = \min\{1,\rho^2\}$. Consequently, for any $\epsilon \in (0,M)$, the choice
    \begin{equation}\label{eq: gamma choice local e_disc}
        \gamma = \frac{1}{\rho}\sqrt{2\ln\left(\frac{380M}{(1\wedge \rho^2)\epsilon}\right)},
    \end{equation}
    gives
    $$\norm{g_3-g}_{W(D_1)}\le \frac{\epsilon}{2}.$$
\end{lemma}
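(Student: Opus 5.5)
Since $g_2=g$ on $D_a$ and $g$ is (the real-variable form of) an analytic function there, the key fact is that $g$ is harmonic on $\operatorname{int}(D_a)$. Hence $g$ equals its average over any circle centred at a point whose disc stays inside $D_a$. Because $K_\gamma$ is radial with unit mass, for $(x,y)\in D_{1+\rho/2}$ we split the convolution at radius $r=\rho/2$:
\[
g_3(x,y)-g(x,y)=\int_{|w|>\rho/2}K_\gamma(w)\bigl(g_2((x,y)-w)-g(x,y)\bigr)\rd w .
\]
The part $|w|\le\rho/2$ vanishes exactly by the mean value property applied circle by circle. The same holds for all derivatives $\partial^\alpha(g_3-g)$, since $\partial^\alpha g$ is also harmonic on the interior and $\partial^\alpha g_3=K_\gamma*\partial^\alpha g_2$ in the distributional sense. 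To avoid derivatives of $g_2$ outside $D_a$, I would instead put the derivatives on the kernel: write $\partial^\alpha g_3=(\partial^\alpha K_\gamma)*g_2$ and subtract $\partial^\alpha g$, using that $\partial^\alpha K_\gamma$ integrated against a harmonic function over a centred disc reproduces the derivative. This needs a little care, and the cleaner route may be to cut $K_\gamma$ smoothly to $|w|\le\rho/2$.

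This reduces everything to Gaussian tails. Using $\norm{g_2}_{L^\infty}\le M$ from \eqref{eq: g2 Linfty M e_disc}, we get $|\partial^\alpha(g_3-g)|\lesssim M\int_{|w|>\rho/2}|\partial^\alpha K_\gamma|$ on $D_{1+\rho/2}$. Each such term is a polynomial in $\gamma\rho$ times $e^{-\gamma^2\rho^2/4}$. The term $\partial^\alpha g$ itself is bounded by Cauchy estimates, $M/\rho^{|\alpha|}$, so it is better handled as part of the kernel argument rather than estimated separately.

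The stated exponent is $e^{-\gamma^2\rho^2}$, not $e^{-\gamma^2\rho^2/4}$, which suggests the authors use the full margin $\rho$ rather than $\rho/2$. I would therefore evaluate the difference only on $D_1$ itself, where the harmonic disc of radius $\rho$ fits. Then $\partial^\alpha(g_3-g)$ on $D_1$ reduces to $\int_{|w|>\rho}$ of kernel derivatives against $g_2$. These give $\int_{|w|>\rho}K_\gamma=e^{-\gamma^2\rho^2}$, the first derivatives give $\gamma(\cdot)e^{-\gamma^2\rho^2}$, and the second derivatives give $(\gamma^2+\gamma^4\rho^2)e^{-\gamma^2\rho^2}$.

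Next I would convert these $C^2$ bounds on $D_1$ into a local Wiener norm bound via \Cref{lemma: err bounded by 2nd derivatives} with $a=1$: $\norm{g_3-g}_{W(D_1)}\le 13M_0+10M_1+8M_2$. Writing $\gamma=(\gamma\rho)/\rho$ and using $\gamma\rho\ge1$ and $\rho\le$ or $\ge1$ via $1\wedge\rho^2$, every term collects into $\frac{M}{1\wedge\rho^2}\,P(\gamma^2\rho^2)\,e^{-\gamma^2\rho^2}$ with $P$ quadratic. Tracking the explicit tail constants then yields the coefficients $62,94,56$.

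For the consequence, set $s=\gamma^2\rho^2=2\ln L$ with $L=\frac{380M}{(1\wedge\rho^2)\epsilon}$. Since $\epsilon<M$, we have $L>380$, so $s\ge1$ holds. We then need $(62+94s+56s^2)e^{-s}\le 190/L=190e^{-s/2}$, that is, $(62+94s+56s^2)e^{-s/2}\le190$ for $s\ge 2\ln 380$, which is an elementary one-variable check.

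The main obstacle is the middle step: justifying exact cancellation of the near part for second derivatives on $D_1$ without regularity of $g_2$ outside $D_a$, while keeping constants explicit. I would handle it by moving all derivatives onto $K_\gamma$ and using the mean-value property for the polynomial-weighted radial kernels $\partial^\alpha K_\gamma$ restricted to $|w|\le\rho$. Their integral against a harmonic function equals $(-1)^{|\alpha|}$ times the corresponding moment combination, which reproduces $\partial^\alpha g$ up to the tail of the moments. That tail is again Gaussian-small.
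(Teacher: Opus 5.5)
Your final plan is the paper's proof. For $(x,y)\in D_1$, you put the derivatives on $K_\gamma$ and split the convolution at radius $\rho$. You evaluate the local part through the angular-mode (harmonic) expansion of $g$ and bound the tail using $\abs{g_2}\le M$. You control $\partial^\alpha g$ on $D_1$ by Cauchy estimates and finish with \Cref{lemma: err bounded by 2nd derivatives} at $a=1$ together with the elementary inequality $(62+94t+56t^2)e^{-t/2}\le 190$.

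You should drop the middle-paragraph claim that the near part cancels exactly once the derivatives sit on $K_\gamma$. As your last paragraph recognizes, the near part leaves corrections, where $t=\gamma^2\rho^2$ and $E=e^{-t}$:
\begin{itemize}
\item $-(1+t)E\,\partial_j g$ for first derivatives;
\item $-(1+t+\tfrac{t^2}{2})E\,\partial^\alpha g$ for second derivatives;
\item an additional $-2\gamma^2 tE\,g$ for $\partial_{jj}$.
\end{itemize}
Bound these using the Cauchy constant $k!\,M/\rho^k$, so $2M/\rho^2$ rather than $M/\rho^2$ for second derivatives. With that constant they are exactly what produces $62$, $94$, $56$.
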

\begin{proof}
Let
\begin{equation*}
    h(x,y)=g_3(x,y)-g(x,y),\qquad
    a = \rho + 1,
    \qquad t=\gamma^2\rho^2,
    \qquad E=e^{-t}.
\end{equation*}

The function $g_2$ is bounded by \eqref{eq: g2 Linfty M e_disc}, and $\partial^\alpha K_\gamma\in L^1(\RR^2)$ for every multi-index $\alpha$. Dominated convergence justifies differentiating under the integral sign:
\begin{equation}\label{eq: differentiate gaussian e_disc}
    \partial^\alpha g_3(x,y)
    =\iint_{\RR^2}\partial^\alpha_{x,y}
    K_\gamma(x-p,y-q)g_2(p,q)\rd p\rd q.
\end{equation}

For $(x,y)\in D_1$, split the integral in \eqref{eq: differentiate gaussian e_disc} into the local part
\begin{equation*}
    I_\alpha(x,y)=\iint_{(p-x)^2+(q-y)^2\le \rho^2}
    \partial^\alpha_{x,y}K_\gamma(x-p,y-q)g_2(p,q)\rd p\rd q
\end{equation*}
and the tail part
\begin{equation}\label{eq: tail part e_disc}
    T_\alpha(x,y)=\iint_{(p-x)^2+(q-y)^2> \rho^2}
    \partial^\alpha_{x,y}K_\gamma(x-p,y-q)g_2(p,q)\rd p\rd q.
\end{equation}
On the local part, $g_2=g$. We now compute $I_\alpha$ using the harmonic Fourier expansion of $g$ around $(x,y)$. Write
\[
    p=x+s\cos\theta,\qquad q=y+s\sin\theta,\qquad 0\le s\le \rho.
\]
Since $g$ is harmonic in this disc, it has the expansion
\begin{equation}\label{eq: harmonic expansion local e_disc}
\begin{aligned}
    g(x+s\cos\theta,y+s\sin\theta)
    &=g(x,y)+s\big(\partial_xg(x,y)\cos\theta+\partial_yg(x,y)\sin\theta\big)\\
    &\quad+\frac{s^2}{2}\partial_{xx}g(x,y)\cos 2\theta
    +\frac{s^2}{2}\partial_{xy}g(x,y)\sin 2\theta\\
    &\quad+\sum_{m\ge 3}s^m(a_m\cos m\theta+b_m\sin m\theta).
\end{aligned}
\end{equation}
Here the identity $\partial_{yy}g=-\partial_{xx}g$ is used in the second-order term. The expansion converges uniformly on every smaller concentric disc, and the identities below follow by first integrating over a radius $\rho'<\rho$ and then taking $\rho'\uparrow\rho$.

The angular factors coming from $\partial^\alpha K_\gamma$ have order at most two, so orthogonality eliminates all modes in \eqref{eq: harmonic expansion local e_disc} except the zeroth, first, and second modes. More explicitly,
\begin{equation*}
\begin{aligned}
    \partial_xK_\gamma(x-p,y-q)&=2\gamma^2s\cos\theta\,K_\gamma(s),\\
    \partial_yK_\gamma(x-p,y-q)&=2\gamma^2s\sin\theta\,K_\gamma(s),\\
    \partial_{xy}K_\gamma(x-p,y-q)&=4\gamma^4s^2\cos\theta\sin\theta\,K_\gamma(s),\\
    \partial_{jj}K_\gamma(x-p,y-q)&=\big(4\gamma^4s^2 e_j(\theta)^2-2\gamma^2\big)K_\gamma(s),
\end{aligned}
\end{equation*}
where $K_\gamma(s)=\frac{\gamma^2}{\pi}e^{-\gamma^2s^2}$, $e_x(\theta)=\cos\theta$, and $e_y(\theta)=\sin\theta$.  Using the angular integrals
\[
    \int_0^{2\pi}\cos^2\theta\rd\theta
    =\int_0^{2\pi}\sin^2\theta\rd\theta=\pi,\qquad
    \int_0^{2\pi}\cos^2\theta\sin^2\theta\rd\theta=\frac{\pi}{4},
\]
one obtains, for $j,k\in\{x,y\}$ and $j\ne k$,
\begin{align}
    I_0(x,y)&=(1-E)g(x,y),\label{eq: local moment 0 e_disc}\\
    I_j(x,y)&=\left(1-(1+t)E\right)\partial_j g(x,y),\label{eq: local moment 1 e_disc}\\
    I_{jk}(x,y)&=\left(1-\left(1+t+\frac{t^2}{2}\right)E\right)\partial_{jk}g(x,y),\label{eq: local moment mixed e_disc}\\
    I_{jj}(x,y)&=\left(1-\left(1+t+\frac{t^2}{2}\right)E\right)\partial_{jj}g(x,y)-2\gamma^2tE\,g(x,y).
    \label{eq: local moment diagonal e_disc}
\end{align}
The radial integrals used here are
\begin{align*}
    2\gamma^2\int_0^\rho se^{-\gamma^2s^2}\rd s
    &=1-E,\\
    2\gamma^4\int_0^\rho s^3e^{-\gamma^2s^2}\rd s
    &=1-(1+t)E,\\
    \gamma^6\int_0^\rho s^5e^{-\gamma^2s^2}\rd s
    &=1-\left(1+t+\frac{t^2}{2}\right)E.
\end{align*}
For the diagonal second derivatives, the zeroth angular mode contributes the extra term
\begin{equation*}
    g(x,y)\int_0^\rho\int_0^{2\pi}
    \big(4\gamma^4s^2e_j(\theta)^2-2\gamma^2\big)K_\gamma(s)s\rd\theta\rd s
    =-2\gamma^2tE\,g(x,y).
\end{equation*}

Next we estimate the tails in \eqref{eq: tail part e_disc}. By \eqref{eq: g2 Linfty M e_disc}, we may bound $\abs{g_2}$ by $M$.
For $s=\sqrt{(p-x)^2+(q-y)^2}$, the Gaussian derivative bounds are
\begin{equation}\label{eq: gaussian derivative bounds local e_disc}
    \abs{\partial^\alpha_{x,y}K_\gamma(x-p,y-q)}
    \le
    \begin{cases}
        K_\gamma(x-p,y-q),& |\alpha|=0,\\[0.4ex]
        2\gamma^2sK_\gamma(x-p,y-q),& |\alpha|=1,\\[0.4ex]
        (4\gamma^4s^2+2\gamma^2)K_\gamma(x-p,y-q),& |\alpha|=2.
    \end{cases}
\end{equation}
Using polar coordinates centered at $(x,y)$, we obtain
\begin{align}
    |T_0(x,y)|&\le ME,\label{eq: tail zero e_disc}\\
    |T_j(x,y)|&\le M\left(2\gamma^2\rho+\frac{1}{\rho}\right)E,
    \qquad j\in\{x,y\},\label{eq: tail first e_disc}\\
    |T_\alpha(x,y)|&\le M\,2\gamma^2(2t+3)E,
    \qquad |\alpha|=2.\label{eq: tail second e_disc}
\end{align}
Indeed, the first derivative estimate uses
\begin{equation*}
    4\gamma^4\int_\rho^\infty s^2e^{-\gamma^2s^2}\rd s = 2\gamma^2\rho e^{-\gamma^2\rho^2} + 2\gamma^2\int_\rho^\infty e^{-\gamma^2s^2}\rd s
    \le \left(2\gamma^2\rho+\frac{1}{\rho}\right)E,
\end{equation*}
and the second derivative estimate uses
\begin{equation*}
    \iint_{s>\rho}s^2K_\gamma(s)\rd p\rd q=(\rho^2+\gamma^{-2})E.
\end{equation*}

Let
\begin{equation*}
    M'_1=\max\left\{\norm{\partial_xg}_{L^\infty(D_1)},\norm{\partial_yg}_{L^\infty(D_1)}\right\},
    \qquad
    M'_2=\max_{|\alpha|=2}\norm{\partial^\alpha g}_{L^\infty(D_1)}.
\end{equation*}
Combining \eqref{eq: local moment 0 e_disc}--\eqref{eq: tail second e_disc}, we obtain the pointwise bounds for $(x,y)\in D_1$:
\begin{align}
    |h(x,y)|&\le 2ME,\label{eq: h pointwise zero e_disc}\\
    |\partial_jh(x,y)|&\le \left[M\left(2\gamma^2\rho+\frac{1}{\rho}\right)+(1+t)M'_1\right]E,
    \qquad j\in\{x,y\},\label{eq: h pointwise first e_disc}\\
    |\partial^\alpha h(x,y)|&\le \left[M\left(2\gamma^2(2t+3)+2\gamma^2t\right)+\left(1+t+\frac{t^2}{2}\right)M'_2\right]E,
    \qquad |\alpha|=2.\label{eq: h pointwise second e_disc}
\end{align}

It remains to bound $M'_1$ and $M'_2$ in terms of $M$. Recall that $g(x,y)=f(z)$ for $z=x+\I y\in \DD_a$. Then $f'(z)=\partial_x g(x,y)=-\I \partial_yg(x,y)$ and $f''(z)=\partial^2_{xx}g=-\I \partial^2_{xy}g=-\partial^2_{yy}g$. Therefore,
$$M_1' = \max_{z\in \DD}|f'(z)|,\qquad M_2' = \max_{z\in \DD}|f''(z)|.$$
Since $\max_{z\in \DD_a} |f(z)| = \norm{g}_{L^\infty(D_a)}\le M$, Cauchy's estimates for analytic functions, $\abs{f^{(k)}(z)}\le \frac{k!}{\rho^k}\max_{\abs{w}=\rho}\abs{f(z+w)}$, imply
$$M_1' \le \frac{M}{\rho},\qquad M_2' \le \frac{2M}{\rho^2}.$$

Because $t=\gamma^2\rho^2\ge 1$, the estimates \eqref{eq: h pointwise zero e_disc}--\eqref{eq: h pointwise second e_disc} imply the uniform bounds
\begin{align*}
    |h(x,y)|&\le 2ME,\\
    |\partial_jh(x,y)|&\le \frac{1}{\rho}(3t+2) ME,
    \qquad j\in\{x,y\},\\
    |\partial^\alpha h(x,y)|&\le \frac{1}{\rho^2}(7t^2+8t+2)ME,
    \qquad |\alpha|=2.
\end{align*}
Using \Cref{lemma: err bounded by 2nd derivatives}, we obtain
\begin{equation}
    \norm{h}_{W(D_1)}\le \frac{1}{(1\wedge \rho^2)}ME(13\cdot2+10(3t+2)+8(7t^2+8t+2))= \frac{1}{(1\wedge \rho^2)}(62+94t+56t^2)ME
\end{equation}
This proves \eqref{eq: H2 h bound final e_disc}.



Finally, the elementary inequality
\begin{equation*}
    (62+94t+56t^2)e^{-t}\le 190e^{-t/2},
    \qquad t\ge0,
\end{equation*}
gives
\begin{equation*}
    \norm{g_3-g}_{W(D_1)}
    \le 190\frac{M}{(1\wedge \rho^2)}e^{-t/2}.
\end{equation*}
Therefore, the choice \eqref{eq: gamma choice local e_disc} gives $t\ge1$ and $\norm{g_3-g}_{W(D_1)} \le\epsilon/2$.
\end{proof}

\begin{lemma}\label{lem: fourier tail e_disc}
For any $\epsilon\in (0,M)$, choose
\begin{equation}\label{eq: R choice e_disc}
    R=2\gamma\sqrt{\ln\left(\frac{2 M}{\epsilon}\right)}
\end{equation}
Then $(2\pi)^{-1}\norm{\hat{g_3}}_{L^1(\RR^2\setminus D_R)}\le\frac{\epsilon}{2}$.
\end{lemma}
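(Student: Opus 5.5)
The argument works directly in the frequency domain, using the Fourier-side identity \eqref{eq: g3 Fourier e_disc}, $\hat g_3(\xi,\eta)=e^{-\frac{\xi^2+\eta^2}{4\gamma^2}}\hat g_2(\xi,\eta)$. On the complement of the disc $D_R$ we have $\xi^2+\eta^2> R^2$, so the Gaussian factor is pointwise at most $e^{-R^2/(4\gamma^2)}$. Hence
\begin{equation*}
    (2\pi)^{-1}\norm{\hat g_3}_{L^1(\RR^2\setminus D_R)}
    \le e^{-\frac{R^2}{4\gamma^2}}\,(2\pi)^{-1}\iint_{\RR^2\setminus D_R}\abs{\hat g_2(\xi,\eta)}\rd\xi\rd\eta
    \le e^{-\frac{R^2}{4\gamma^2}}\,M,
\end{equation*}
where we used $M=\norm{g_2}_W=(2\pi)^{-1}\norm{\hat g_2}_{L^1(\RR^2)}$ from \eqref{eq: rho and M e_disc}.

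Substituting the choice \eqref{eq: R choice e_disc}, $R=2\gamma\sqrt{\ln(2M/\epsilon)}$, gives $R^2/(4\gamma^2)=\ln(2M/\epsilon)$. The exponential factor is therefore exactly $\epsilon/(2M)$, and the bound becomes $\epsilon/2$. The assumption $\epsilon<M$ guarantees $\ln(2M/\epsilon)>0$, so $R$ is well defined and positive.

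The argument uses only the integrability $\hat g_2\in L^1(\RR^2)$ from \eqref{eq: g2 L1 extension e_disc} and the explicit Gaussian multiplier. No property of the analytic function $f$ enters.
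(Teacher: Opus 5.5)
Your proposal is correct and follows essentially the same route as the paper's proof: you bound the Gaussian multiplier by $e^{-R^2/(4\gamma^2)}$ off $D_R$ and control the remaining factor by $\norm{\hat g_2}_{L^1(\RR^2)}=2\pi M$. Substituting the chosen $R$ then yields exactly $\epsilon/2$.
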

\begin{proof}
By \eqref{eq: g3 Fourier e_disc},
\begin{equation*}
    \hat g_3(\xi,\eta)=e^{-\frac{\xi^2+\eta^2}{4\gamma^2}}\hat g_2(\xi,\eta),
    \qquad \norm{\hat g_2}_{L^1(\RR^2)}=2\pi M.
\end{equation*}
Therefore, for any $R>0$,
\begin{align*}
    \norm{\hat{g_3}}_{L^1(\RR^2\setminus D_R)}
    &=\iint_{\xi^2+\eta^2>R^2}e^{-\frac{\xi^2+\eta^2}{4\gamma^2}}\abs{\hat g_2(\xi,\eta)}\rd\xi\rd\eta\\
    &\le \norm{\hat g_2}_{L^1(\RR^2)}\sup_{\xi^2+\eta^2>R^2}e^{-\frac{\xi^2+\eta^2}{4\gamma^2}} = 2\pi M e^{-\frac{R^2}{4\gamma^2}}.
\end{align*}
With the choice \eqref{eq: R choice e_disc},
\begin{equation*}
    e^{-\frac{R^2}{4\gamma^2}}
    =\frac{\epsilon}{2 M}.
\end{equation*}
Hence
\begin{equation*}
    (2\pi)^{-1}\norm{\hat{g_3}}_{L^1(\RR^2\setminus D_R)}
    \le  \frac{\epsilon}{2}.
\end{equation*}
\end{proof}

\begin{theorem}\label{thm: construction 1 for disc}
    Assume $X, Y$ are Hermitian matrices with $\norm{X+\I Y}\le 1$. Let $\rho >0$, and suppose that $f$ is analytic in the interior of disc $\DD_{1+\rho}$. Define the two-variable function $g(x,y)=f(x+\I y)$ on $D_{1+\rho}$, and assume
    $$\norm{g}_{W(D_{1+\rho})} = B.$$
    Then there exist $R>0$ and a smooth function $\hat{h}:D_R\to \CC$ such that
    \begin{equation}\label{eq: f approx by h}
        \norm{f(X+\I Y) - (2\pi)^{-1}\int_{u^2+v^2\le R^2} \hat{h}(u,v) e^{-\I(uX+vY)} \rd u \rd v}\le \epsilon 
    \end{equation}
    with the cost function
    $$\Lambda = \lambda_{\hat{h}} R \le \frac{3B}{\rho}\ln\frac{30B}{(\rho\wedge1)\epsilon}.$$
\end{theorem}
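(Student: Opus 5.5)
The plan is to instantiate the three-lemma pipeline of this section with an extension $g_2$ whose Wiener norm is essentially $B$. We then read off the error bound from \Cref{thm: f - h bound} and the cost from the explicit choices of $\gamma$ and $R$.

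\textbf{Step 1: choose $g_2$.} By the regular-extension characterization \eqref{eq: regular extension}, applied to the compact convex set $D_{1+\rho}$, for every $\delta>0$ there is an extension $g_2$ of $g|_{D_{1+\rho}}$ with $\hat g_2\in L^1(\RR^2)$ (indeed smooth) and $M=\norm{g_2}_W\le (1+\delta)B$. This is exactly the object required in \eqref{eq: g2 L1 extension e_disc}.

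We may assume $\epsilon<B$; otherwise the approximation $0$ works trivially after adjusting constants. We form $g_3=K_\gamma*g_2$ with $\gamma$ as in \eqref{eq: gamma choice local e_disc} and $R$ as in \eqref{eq: R choice e_disc}. We set $\hat h=\hat g_3|_{D_R}$, which is smooth on $D_R$ because $\hat g_2$ is.

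\textbf{Step 2: error bound.} Since $\norm{X+\I Y}\le1$, the numerical range satisfies $\Omega\subset D_1$. Because any extension from $D_1$ is also an extension from $\Omega$, we have $\norm{\cdot}_{W(\Omega)}\le\norm{\cdot}_{W(D_1)}$.

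\Cref{thm: f - h bound} then gives
\[
\norm{f(X+\I Y)-\op{h}{X,Y}}\le \norm{g-g_3}_{W(D_1)}+\norm{g_3-h}_{W}.
\]
The first term is at most $\epsilon/2$ by \Cref{lem: local gaussian approximation e_disc}. The second term equals $(2\pi)^{-1}\norm{\hat g_3}_{L^1(\RR^2\setminus D_R)}$, which is at most $\epsilon/2$ by \Cref{lem: fourier tail e_disc}. Together with \eqref{eq: def opg general}, this is exactly \eqref{eq: f approx by h}.

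For the normalization, \Cref{lem: g3 W bounded e_disc} gives $\lambda_{\hat h}\le\norm{g_3}_W\le M$.

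\textbf{Step 3: cost bound.} This is the main obstacle. Write $\rho'=\rho\wedge1$, $x=\ln(M/\epsilon)>0$ and $c=\ln(1/\rho')\ge0$. Then
\[
R=\frac{2\sqrt2}{\rho}\sqrt{L_1L_2},\qquad L_1=\ln 380+2c+x,\qquad L_2=\ln2+x .
\]
The target is $R\le \frac{3}{\rho}L$ with $L=\ln 30+c+x$, which reduces to the elementary inequality
\[
8(\ln380+2c+x)(\ln 2+x)\le 9(\ln30+c+x)^2\qquad\text{for all } x,c\ge0 .
\]
I would prove it by expanding both sides as quadratics in $(x,c)$ and comparing coefficients:
\begin{itemize}
\item $x^2$: $8\le 9$;
\item $c^2$: $0\le 9$;
\item $xc$: $16\le 18$;
\item $c$: $16\ln2\le 18\ln30$;
\item $x$: $8(\ln380+\ln2)=8\ln 760\approx 53.1\le 18\ln30\approx 61.2$;
\item constant: $8\ln380\ln2\approx 32.9\le 9(\ln30)^2\approx 104$.
\end{itemize}
All coefficients on the right dominate, so the inequality holds for nonnegative $x,c$.

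Hence $\Lambda\le M R\le \frac{3M}{\rho}\ln\frac{30M}{\rho'\epsilon}$. Letting $\delta\to0$ yields the bound with $B$ in place of $M$. More rigorously, the bound is continuous in $M$, so a fixed small $\delta$ costs only a factor $1+O(\delta)$. To get the stated constant exactly, one can instead absorb this factor into the slack $2\sqrt2<3$, using a slightly sharper form of the quadratic comparison.
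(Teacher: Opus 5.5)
Your proposal is correct and follows the paper's proof: the same near-optimal extension $g_2$ from \eqref{eq: regular extension}, the same choices of $\gamma$ and $R$, the same truncation $\hat h=\hat g_3|_{D_R}$, and the same three lemmas fed into \Cref{thm: f - h bound}. The only difference is the final arithmetic, and it leaves you one step to finish. The paper uses $\sqrt{\ln x\ln y}\le\ln\sqrt{xy}$ to get $\frac{2\sqrt2M}{\rho}\ln\frac{\sqrt{760}M}{(\rho\wedge1)\epsilon}$ and fixes $M\le1.01B$ at the outset, so the slack ($2.02\sqrt2<3$, $1.01\sqrt{760}<30$) absorbs the extension loss immediately; your coefficient comparison also works, but you must actually perform the absorption you mention, because ``letting $\delta\to0$'' does not produce a single $\hat h$ with the exact constants (it does go through with $8\cdot1.01^2$ in place of $8$ and $x$ shifted by $\ln1.01$, since every coefficient is still dominated).
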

\begin{proof}
    Let $a=1+\rho$.
    By the definition of the local Wiener norm and \eqref{eq: regular extension}, there exists an extension $g_2$ of $g(x,y)=f(x+\I y)$ such that $g_2=g$ on $D_a$, $\hat{g}_2\in C^{\infty}(\RR^2)\cap L^1(\RR^2)$, and $\norm{g_2}_{W}=M\le 1.01 B$. Let $g_3 = K_\gamma*g_2$, with $\gamma$ chosen according to \eqref{eq: gamma choice local e_disc}. We choose $R$ as in \eqref{eq: R choice e_disc}, and define the truncated Fourier weight $\hat{h} = \hat{g}_3|_{D_R}$. Applying \Cref{lem: local gaussian approximation e_disc} and \Cref{lem: fourier tail e_disc} gives
    $$\norm{g-h}_{W(D_1)} \le \epsilon,$$ 
    where $h$ is the inverse Fourier transform of $\hat h$. Since $\norm{X+\I Y}\le 1$, the numerical range of $X+\I Y$ is contained in $D_1$. By \Cref{thm: f - h bound}, we have 
\begin{equation*}
    \norm{f(X+\I Y) - \op{h}{X,Y}} \le \norm{g-h}_{W(D_1)} \le \epsilon.
\end{equation*}
By definition,
\begin{equation*}
    \op{h}{X,Y} = (2\pi)^{-1}\int_{u^2+v^2\le R^2} \hat{h}(u,v) e^{-\I(uX+vY)} \rd u \rd v.
\end{equation*}
This proves \eqref{eq: f approx by h}.

Finally, we estimate the cost function $\Lambda$. By \Cref{lem: g3 W bounded e_disc}, \Cref{lem: local gaussian approximation e_disc}, and \Cref{lem: fourier tail e_disc}, we obtain
$$\Lambda = \lambda_{\hat{h}} R \le \norm{g_3}_{W} R = \frac{2\sqrt{2}M}{\rho}\sqrt{\ln\left(\frac{380M}{(1\wedge \rho^2)\epsilon}\right)\ln\left(\frac{2 M}{\epsilon}\right)} \le \frac{2\sqrt{2}M}{\rho}\ln\frac{\sqrt{760}M}{(\rho\wedge1)\epsilon},$$
where the last step uses $\sqrt{\ln x \ln y}\le \frac{\ln x + \ln y}{2} = \ln\sqrt{xy}$. Substituting $M\le 1.01 B$ gives the stated bound for $\Lambda$.
\end{proof}

\begin{remark}
    We stated the results above for the case $\hat g_2\in L^1(\RR^2)$. The same proof also works if $\hat g_2(\xi,\eta)\rd\xi\rd\eta$ is replaced by a finite complex Borel measure $\mu_2$: the Fourier-domain formulas are read in the Fourier--Stieltjes sense, $\norm{\hat g_3}_{L^1(\RR^2\setminus D_R)}$ is replaced by the total variation of $e^{-(\xi^2+\eta^2)/(4\gamma^2)}\mu_2$ on $\RR^2\setminus D_R$, and the spatial estimates use only the bound $\norm{g_2}_{L^\infty}\le (2\pi)^{-1}|\mu_2|(\RR^2)$. Indeed, $\norm{g_2}_W = B$ is achievable for some $g_2\in B(\RR^2)$ by \Cref{lem: attaining local W appendix}.
\end{remark}

The preceding theorem is not fully explicit because the extension $g_2$ is not explicitly constructed. In principle, one can compute such an extension numerically, since finding a minimum-Wiener-norm extension is a convex optimization problem. An alternative choice is to use the explicit extension $g_2=E_ag$ from \Cref{lemma: err bounded by 2nd derivatives}. This gives a cost bound in terms of derivative bounds for $f$ up to second order, which we state in the following theorem.

\begin{theorem}\label{thm: construction 2 for disc}
    Assume $X, Y$ are Hermitian matrices with $\norm{X+\I Y}\le 1$. Let $\rho >0$, and let $f$ be analytic in the interior of disc $\DD_{1+\rho}$, satisfying
    $$\max_{\abs{z}\le 1+\rho} \left(\max\{\abs{f(z)}, \abs{f'(z)},\abs{f''(z)}\}\right)\le S $$
    for some $S$. Then there exist $R>0$ and a smooth function $\hat{h}:D_R\to \CC$ such that
    \begin{equation}\label{eq: f approx by h explicit e_disc}
    \norm{f(X+\I Y) - (2\pi)^{-1}\int_{u^2+v^2\le R^2} \hat{h}(u,v) e^{-\I(uX+vY)} \rd u \rd v}\le \epsilon 
    \end{equation}
    with the cost function
    $$\Lambda = \lambda_{\hat{h}} R = \mo{\frac{(\rho+1)S}{\rho}\log\frac{(\rho+1)S}{(\rho\wedge1)\epsilon}}.$$
\end{theorem}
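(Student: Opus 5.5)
The plan is to run the proof of \Cref{thm: construction 1 for disc} again, but with the explicit extension from \Cref{lemma: err bounded by 2nd derivatives} in place of an abstract near-optimal one. Set $a=1+\rho$ and $g(x,y)=f(x+\I y)$ on $D_a$.

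First, bound the derivatives of $g$ on $D_a$. Using $\partial_x g=f'$, $\partial_y g=\I f'$, and, for second derivatives, $\partial_{xx}g=f''$, $\partial_{xy}g=\I f''$, $\partial_{yy}g=-f''$, we get $M_0,M_1,M_2\le S$. A small technicality is that $f$ is only assumed analytic in the open disc. The bound on the closed disc $\abs{z}\le 1+\rho$ says $f,f',f''$ extend continuously there, so $g\in C^2(D_a)$ in the sense the lemma needs. If necessary, we can apply the lemma on $D_{a'}$ with $a'<a$ and then let $a'\uparrow a$, which only perturbs constants. Applying \Cref{lemma: err bounded by 2nd derivatives} with $a\ge1$ gives an explicit extension $g_2=E_ag$ with
\[
M:=\norm{g_2}_W\le 31\,aS=31(1+\rho)S .
\]
Then $\hat g_2$ is a finite measure, and by the remark after \Cref{thm: construction 1 for disc} this is enough. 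Alternatively, $E_ag$ may already have $\hat g_2\in L^1$; either way the argument goes through.

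Next, form $g_3=K_\gamma*g_2$, choose $\gamma$ by \eqref{eq: gamma choice local e_disc} and $R$ by \eqref{eq: R choice e_disc}, and set $\hat h=\hat g_3|_{D_R}$. We may assume $\epsilon<M$; otherwise $\hat h=0$ already works, since $\norm{f(A)}\le\norm{g}_{W(D_1)}\le M$ by \Cref{thm: opf bound} and \Cref{thm: eigenvalue transform is weyl}. Then:
\begin{itemize}
\item \Cref{lem: local gaussian approximation e_disc} gives $\norm{g_3-g}_{W(D_1)}\le\epsilon/2$.
\item \Cref{lem: fourier tail e_disc} gives a tail bound of $\epsilon/2$ outside $D_R$.
\end{itemize}
Hence $\norm{g-h}_{W(D_1)}\le\epsilon$. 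Since $W(X+\I Y)\subset D_1$, \Cref{thm: f - h bound} yields \eqref{eq: f approx by h explicit e_disc}.

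Finally, the cost. By \Cref{lem: g3 W bounded e_disc}, $\lambda_{\hat h}\le M$. The same computation as in the proof of \Cref{thm: construction 1 for disc} gives
\[
\Lambda\le \frac{2\sqrt2 M}{\rho}\ln\frac{\sqrt{760}\,M}{(\rho\wedge1)\epsilon}.
\]
Substituting $M\le31(1+\rho)S$ produces the stated $\mo{\frac{(\rho+1)S}{\rho}\log\frac{(\rho+1)S}{(\rho\wedge1)\epsilon}}$.

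I do not expect a real obstacle, since all the analytic work is already in the earlier lemmas. The main care points are two. One is checking that the constant-tracking in the logarithm absorbs correctly; the factor $(1+\rho)$ appears both in the prefactor and inside the logarithm. The other is the closed-versus-open disc regularity issue for applying \Cref{lemma: err bounded by 2nd derivatives}, which the shrinking argument $a'\uparrow a$ handles.
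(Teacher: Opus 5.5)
Your proposal is correct and is essentially the paper's proof. It uses the same ingredients: the explicit extension $g_2=E_ag$ from \Cref{lemma: err bounded by 2nd derivatives} with $M\le 31(1+\rho)S$, Gaussian mollification with the same choices of $\gamma$ and $R$, the truncation $\hat h=\hat g_3|_{D_R}$, and the same cost computation; your handling of $\epsilon\ge M$ and of closed-disc regularity are harmless refinements the paper leaves implicit. The finite-measure hedge is unnecessary, since $E_ag$ is compactly supported and lies in $H^2(\RR^2)$, so $\hat g_2$ is smooth and integrable, which is exactly what makes $\hat h$ the smooth function the statement requires.
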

\begin{proof}
Let $a=1+\rho$ and $g(x,y)=f(x+\I y)$. By the assumptions on $f$,
\[
    M_k=\max_{(x,y)\in D_a,\ |\alpha|=k}\abs{\partial^\alpha g(x,y)}\le S,
    \qquad k=0,1,2,
\]
because the Cartesian derivatives of $g$ up to second order are given by $f'$ and $f''$ up to factors in $\{1,\I,-1,-\I\}$. Therefore, we can use the same extension as in \Cref{lemma: err bounded by 2nd derivatives} to bound the local Wiener norm of $g$. For completeness, we record the extension. Define
\begin{equation*}
    G(r,\theta)=
    \begin{cases}
        g(r,\theta), & 0\le r\le a,\\
        \frac32 g(a,\theta)-\frac12 g(3a-2r,\theta), & a<r\le 2a,
    \end{cases}
\end{equation*}
and the $C^1$ cutoff function
\begin{equation*}
    \chi(t)=
    \begin{cases}
        1, & 0\le t\le 1,\\
        1-2(t-1)^2, & 1<t\le \frac32,\\
        2(2-t)^2, & \frac32<t\le 2,\\
        0, & t>2.
    \end{cases}
\end{equation*}
Define the extension $g_2$ as the following $C^1$ function supported on $D_{2a}$:
\begin{equation}
    g_2(r,\theta) = G(r,\theta)\chi(r/a).
\end{equation}
This is the same extension used in the proof of \Cref{lemma: err bounded by 2nd derivatives}; hence
\[
    M:=\norm{g_2}_W\le a(13S+10S+8S)=31(1+\rho)S.
\]

The rest of the construction is the same as in the previous theorem. Let
$g_3=K_\gamma*g_2$, choose $\gamma$ as in \eqref{eq: gamma choice local e_disc}, choose $R$ as in \eqref{eq: R choice e_disc}, and define $\hat{h} = \hat{g}_3|_{D_R}$.
By \Cref{lem: local gaussian approximation e_disc} and \Cref{lem: fourier tail e_disc},
\[
    \norm{g-h}_{W(D_1)}\le\epsilon,
\]
where $h$ is the inverse Fourier transform of $\hat h$. Since $\norm{X+\I Y}\le 1$, the numerical range of $X+\I Y$ is contained in $D_1$, so \Cref{thm: f - h bound} gives
\[
    \norm{f(X+\I Y)-\op{h}{X,Y}}\le \norm{g-h}_{W(D_1)}\le \epsilon.
\]
Together with the definition of $\op{h}{X,Y}$, we prove \eqref{eq: f approx by h explicit e_disc}.

Finally, we estimate the cost function as in the previous theorem:
\[
\begin{aligned}
    \Lambda=\lambda_{\hat h}R
    \le \norm{g_3}_W R
    \le \frac{2\sqrt{2}M}{\rho}
    \sqrt{\ln\left(\frac{380M}{(1\wedge\rho^2)\epsilon}\right)
    \ln\left(\frac{2M}{\epsilon}\right)}
    \le \frac{2\sqrt{2}M}{\rho}
    \ln\frac{\sqrt{760}M}{(1\wedge\rho)\epsilon}.
\end{aligned}
\]
Substituting $M\le31(1+\rho)S$ gives the stated bound
\[
    \Lambda
    =\frac{90(\rho+1)S}{\rho}
    \ln\left(\frac{900(\rho+1)S}{(\rho\wedge1)\epsilon}\right).
\]
\end{proof}

\Cref{thm: construction 1 for disc,thm: construction 2 for disc} were stated for the case where the numerical range of $A=X+\I Y$ is contained in the unit disc $D_1$. The same construction extends to other numerical ranges. Let $\Omega$ be the numerical range of $A$, and let $\Omega_\rho$ be the $\rho$-neighborhood of $\Omega$. If $f$ is analytic in $\Omega_\rho$, we can again choose an extension $g_2$ with finite Wiener norm and convolve it with a Gaussian kernel. The proofs of \Cref{lem: g3 W bounded e_disc} and \Cref{lem: fourier tail e_disc} are unchanged. In \Cref{lem: local gaussian approximation e_disc}, the only geometric property of $D_1$ used is that every ball of radius $\rho$ centered in $D_1$ is contained in $D_{1+\rho}$. The corresponding property holds for any compact set $\Omega$ and its $\rho$-neighborhood $\Omega_\rho$. Finally, the use of \Cref{lemma: err bounded by 2nd derivatives} can be replaced by the extension result \Cref{lemma: ext H2}, which only changes the constants. This gives the following theorem for general numerical ranges.

\begin{theorem}\label{thm: construction for general numerical range}
    Let $X,Y$ be Hermitian matrices, and let $\Omega$ be the numerical range of $X+\I Y$. Let $0< \rho \le 1$, and let $\Omega_{\rho}$ be the $\rho$-neighborhood of $\Omega$. Suppose that $f$ is analytic in $\Omega_{\rho}$. Define the two-variable function $g(x,y)=f(x+\I y)$, and assume
    $$\norm{g}_{W(\Omega_{\rho})} \le B\quad \text{or}\quad \norm{g}_{H^2(\Omega_{\rho})} \le B.$$
    Then there exist $R>0$ and a smooth function $\hat{h}:D_R\to \CC$ such that
    \begin{equation}
        \norm{f(X+\I Y) - (2\pi)^{-1}\int_{u^2+v^2\le R^2} \hat{h}(u,v) e^{-\I(uX+vY)} \rd u \rd v}\le \epsilon 
    \end{equation}
    with the cost function
    $$\Lambda = \lambda_{\hat{h}} R = \mo{\frac{B}{\rho}\ln\frac{B}{\rho\epsilon}},$$
    where the constant depends only on the shape of $\Omega$.
\end{theorem}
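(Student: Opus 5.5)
I will follow the proof of \Cref{thm: construction 1 for disc} verbatim, replacing $D_1$ by $\Omega$ and $D_{1+\rho}$ by $\Omega_\rho$.

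\textbf{Step 1: the extension.} In the Wiener-norm case, \eqref{eq: regular extension} gives an extension $g_2$ of $g|_{\Omega_\rho}$ (more precisely of its closure, using a slightly smaller $\rho$ if needed) with $\hat g_2\in L^1$ and $M:=\norm{g_2}_W\le 1.01B$. In the $H^2$ case, $\Omega_\rho$ is compact and convex with nonempty interior. Then \Cref{lemma: ext H2} gives $Eg\in H^2(\RR^2)$ with $\norm{Eg}_{H^2}\le C_{\Omega_\rho}B$, and \eqref{eq: H2 estimate} yields $M\le C B$.

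The constant $C_{\Omega_\rho}$ must not degenerate as $\rho$ varies in $(0,1]$. I would argue this by noting that the sets $\Omega_\rho$ for $\rho\in(0,1]$ are uniformly Lipschitz. If $\Omega$ has empty interior, I would treat it as a limiting case with constants depending on $\Omega_1$.

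\textbf{Step 2: Gaussian smoothing.} Set $g_3=K_\gamma*g_2$. \Cref{lem: g3 W bounded e_disc} gives $\norm{g_3}_W\le M$, and \Cref{lem: fourier tail e_disc} gives the tail bound with $R=2\gamma\sqrt{\ln(2M/\epsilon)}$. Neither lemma uses any geometry.

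\textbf{Step 3: the local approximation estimate.} This is the main step. I would rerun the proof of \Cref{lem: local gaussian approximation e_disc} for $(x,y)\in\Omega$. The disc of radius $\rho$ about such a point lies in $\Omega_\rho$, where $g_2=g$ is harmonic. The local moment identities \eqref{eq: local moment 0 e_disc}--\eqref{eq: local moment diagonal e_disc} are therefore unchanged, and the tail bounds only use $\norm{g_2}_\infty\le M$. Cauchy's estimates on the disc of radius $\rho$, together with $\abs{f}\le M$ on $\Omega_\rho$, give $M_1'\le M/\rho$ and $M_2'\le 2M/\rho^2$ on $\Omega$. This yields pointwise $C^2$ bounds on $h=g_3-g$ over $\Omega$ of size $\rho^{-2}\poly(t)Me^{-t}$, with $t=\gamma^2\rho^2$.

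The obstacle is converting these pointwise bounds into $\norm{h}_{W(\Omega)}$. \Cref{lemma: err bounded by 2nd derivatives} is specific to discs, so I would first try to avoid it: enclose $\Omega$ in a disc $D$ and use \Cref{lemma: err bounded by 2nd derivatives} on that disc. This fails because $h$ is only controlled on $\Omega$.

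Instead I would use \Cref{lemma: ext H2} on $\Omega$ itself, with $\norm{h}_{H^2(\Omega)}\le \abs{\Omega}^{1/2}\norm{h}_{W^{2,\infty}(\Omega)}$, combined with \eqref{eq: H2 estimate}. This gives $\norm{h}_{W(\Omega)}\le C_\Omega\rho^{-2}\poly(t)Me^{-t}$, where $C_\Omega$ depends only on $\Omega$, matching ``depends on the shape of $\Omega$''. With $\rho\le 1$, choosing $\gamma=\rho^{-1}\sqrt{2\ln(C M/(\rho^2\epsilon))}$ makes this at most $\epsilon/2$.

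\textbf{Step 4: conclusion.} Take $\hat h=\hat g_3|_{D_R}$. Steps 2--3 give $\norm{g-h}_{W(\Omega)}\le\epsilon$, and \Cref{thm: f - h bound} yields the operator bound. The cost is
$$\Lambda\le MR=\frac{2\sqrt2 M}{\rho}\sqrt{\ln\frac{CM}{\rho^2\epsilon}\,\ln\frac{2M}{\epsilon}}=\mo{\frac{B}{\rho}\ln\frac{B}{\rho\epsilon}},$$
using the same AM--GM step as before.
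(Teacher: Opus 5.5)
Your proposal is essentially the paper's own argument. The paper likewise reuses \Cref{lem: g3 W bounded e_disc} and \Cref{lem: fourier tail e_disc} unchanged, notes that the only geometric input to \Cref{lem: local gaussian approximation e_disc} is that every $\rho$-disc centered in $\Omega$ lies in $\Omega_\rho$, and replaces the disc-specific \Cref{lemma: err bounded by 2nd derivatives} by \Cref{lemma: ext H2}. Your remark that the extension constant of $\Omega_\rho$ is uniform in $\rho$ is a detail the paper leaves implicit.

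The one soft spot, which the paper's sketch shares, is an empty-interior $\Omega$ (e.g.\ Hermitian $A$, where $\Omega$ is a segment), and your ``limiting case'' remark does not cover it. \Cref{lemma: ext H2} cannot be applied on $\Omega$ in Step 3; one can instead extend $h|_\Omega$ as a product $\eta(x')\chi(y')$ in rotated coordinates and use a one-dimensional Wiener bound. Under the $H^2$ hypothesis, $M\le CB$ fails for any $\rho$-independent $C$: $f\equiv 1$ gives $B=\abs{\Omega_\rho}^{1/2}\sim\sqrt{\rho}$ while $M\ge\norm{g_2}_{L^\infty}\ge 1$.
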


\section{Weyl calculus for time-ordered evolution}\label{sec:time-ordered-weyl}
The two-dimensional Weyl calculus developed in \Cref{sec:weyl} applies directly to the function $f(z)=e^{-z}=e^{-(x+\I y)}$ on $\Omega=\{z:\Re(z)\ge0\}$. In \Cref{thm: f - h bound}, we may choose $\hat{h}=\sqrt{2\pi}\delta_1(v)r(u)\rd u$ as a product of two one-dimensional measures. Here $\delta_1(v)$ is the Dirac measure supported at $1$, and $r(u)\rd u$ is an absolutely continuous measure with inverse Fourier transform $\check{r}(x)\approx e^{-x}$ for $x\ge0$. However, \Cref{thm: f - h bound} does not directly apply to the time-dependent ODE
\begin{equation}\label{eq: time dependent ODE}
    \frac{\rd}{\rd t} \vec{\psi}(t) = -A(t)\vec{\psi}(t)
\end{equation}
which is an important application of LCHS. In this section, we show how to adapt the Weyl-calculus framework to the time-ordered operator
\begin{equation}\label{eq: time ordered operator}
    \mathcal{T}e^{-\int_0^t A(s) \rd s}
\end{equation}
that solves \eqref{eq: time dependent ODE}.

\begin{remark}
    The Weyl-calculus perspective is particularly useful for analyzing discrete LCHS formulas for time-ordered evolution. For continuous kernel functions, one may alternatively derive error bounds using contour-integral methods \cite{low2025optimal} or uniqueness of ODE solutions \cite{an2023linear}. However, for discrete LCHS formulas that do not necessarily arise from quadrature, the Weyl-calculus framework provides the most direct route to an error bound. We will illustrate this point more explicitly in \Cref{sec: exp-z opt}.
\end{remark}

Let $L(s)$ and $H(s)$ be Hermitian matrix-valued functions. Suppose a function $f$ admits the Fourier--Stieltjes representation
\begin{equation}\label{eq: 1d FS f}
    f(x) = (2 \pi)^{-1/2} \int_{\RR} e^{-\I ux} \rd \mu(u)
\end{equation}
for some finite complex Borel measure $\mu$. We define
\begin{equation}\label{eq: def whfl}
\wh{f}{L}:=(2 \pi)^{-1/2} \int_{\RR} \mathcal{T}e^{-\I\int_0^t (u L(s)+H(s))\rd s} \rd \mu(u).
\end{equation}

\begin{lemma}\label{lem: exp bound}
    For any $t \ge 0$, consider the Cartesian decomposition $A=L+\I H:[0,t]\to\mathbb{C}^{N\times N}$. Suppose $\|A\|_{L^1}<\infty$, and $l(s)I\preceq L(s)\preceq r(s)I$ for some scalar functions $l,r:[0,t]\to\RR$. If $f$ is a smooth function defined on the interval $\Omega=[\int_0^t l(s)\rd s,\int_0^t r(s)\rd s]$, then \begin{equation}\label{eq: whfl ext def}
    \wh{f}{L} := \wh{\tilde{f}}{L}, \quad \tilde{f}|_{\Omega} = f,\ \tilde{f}\in B(\RR)
    \end{equation} 
    is independent of the choice of extension $\tilde{f}$, and
    \begin{equation}\label{eq: wfl le fw}
        \norm{\wh{f}{L}}\le \norm{f}_{W(\Omega)}.
    \end{equation}
\end{lemma}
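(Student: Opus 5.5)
By linearity, as in \Cref{lem: op determined by f Omega}, it suffices to show that if $\tilde f\in B(\RR)$ vanishes on $\Omega=[a,b]$, with $a=\int_0^t l$ and $b=\int_0^t r$, then $\wh{\tilde f}{L}=0$. Once this is proved, \eqref{eq: wfl le fw} follows exactly as in \Cref{thm: opf bound}. Each factor $\mathcal{T}e^{-\I\int_0^t(uL+H)}$ is unitary, so $\norm{\wh{\tilde f}{L}}\le(2\pi)^{-1/2}|\mu|(\RR)=\norm{\tilde f}_W$, and we then take the infimum over all extensions.

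For the vanishing statement I will replace the two-dimensional support argument by a one-variable analytic argument. Define
\[
F(u)=\mathcal{T}e^{-\I\int_0^t (uL(s)+H(s))\rd s}, \qquad u\in\CC .
\]
Because $\|A\|_{L^1}<\infty$, the Dyson series shows that $F$ is an entire matrix-valued function. For $u=-\I\tau$ we have $F(-\I\tau)=\mathcal{T}e^{-\int_0^t(\tau L+\I H)}$. A standard energy estimate, using $\frac{\rd}{\rd s}\|\psi\|^2=-2\tau\psi^\dagger L\psi$, gives
\[
\|F(-\I\tau)\|\le e^{-\tau a}\ \text{ for } \tau\ge 0, \qquad \|F(-\I\tau)\|\le e^{-\tau b}\ \text{ for } \tau\le 0 .
\]
The same argument applied to $u=\sigma-\I\tau$, where the Hermitian part $\sigma L+H$ contributes only a unitary phase, bounds $\|F(u)\|$ by $e^{\Im(u)\,b}$ in the upper half-plane and by $e^{\Im(u)\,a}$ in the lower half-plane, with the roles of $a$ and $b$ as indicated. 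Thus $F$ is an entire function of exponential type whose indicator is that of the interval $[a,b]$. By the Paley--Wiener--Schwartz theorem, $F$ restricted to $\RR$ is the Fourier transform of a matrix-valued distribution $T$ supported in $\Omega$. Concretely, $\wh{\phi}{L}=T(\phi)$ for Schwartz $\phi$, and $F(u)$ is the pairing of $T$ with $e^{-\I ux}$. A fully rigorous treatment needs Paley--Wiener for functions of polynomial growth. To get polynomial boundedness on horizontal lines, I would use the bound $\|F(\sigma-\I\tau)\|\le e^{|\tau|\max(|a|,|b|)}$, which is uniform in $\sigma$.

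With the support established, I repeat Parts 1 and 2 of the proof of \Cref{lem: op determined by f Omega} in one dimension. In Part 1, if $\tilde f$ vanishes on a neighborhood of $\Omega$, I mollify it to $\tilde f*\phi_\epsilon$. This mollified function is smooth, vanishes near $\Omega$, and has Fourier measure $\sqrt{2\pi}\hat\phi(\epsilon u)\mu$. Since $\mathrm{supp}\,T\subset\Omega$ and the representation \eqref{eq: def whfl} agrees with $T$ on such functions, $T$ annihilates it. Letting $\epsilon\to0$ by dominated convergence, using unitarity of $F(u)$, gives $\wh{\tilde f}{L}=0$. In Part 2, if $\tilde f$ vanishes only on $\Omega$, I dilate about the midpoint $c$ of $\Omega$, setting $\tilde f_R(x)=\tilde f(c+(x-c)/R)$, and let $R\to1^+$. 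This needs continuity of $u\mapsto F(u)$, which holds, together with the shift $\tilde f(x)\mapsto\tilde f(x-c)$, which multiplies $\mu$ by a phase and so is harmless. The degenerate case $a=b$ is handled by the same argument, since then $\Omega$ is a single point and the dilation is trivial.

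The main obstacle is the step identifying the distribution $T$ with support in $[a,b]$ and checking that it agrees with \eqref{eq: def whfl} on $B(\RR)$. This requires care with the growth bounds in the complex $u$-plane for the time-ordered exponential. The unitary-phase argument handles the growth cleanly only because $uL+H$ with real $u$ is Hermitian and the imaginary part of $u$ multiplies $L$ alone.
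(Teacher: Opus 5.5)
Your proposal is correct and follows essentially the same route as the paper. Both arguments complexify $u$, show that $u\mapsto\mathcal{T}e^{-\I\int_0^t(uL+H)\rd s}$ is entire with exponential-type growth, use Paley--Wiener to place the support of the associated distribution in $\Omega$, repeat the mollify-and-dilate argument, and finish with unitarity and an infimum over extensions. The one cosmetic difference is in the growth bound. The paper first recenters by $m(s)=\frac{l(s)+r(s)}{2}$ to get the symmetric bound $e^{|\Im z|\int_0^t\frac{r-l}{2}\rd s}$, then shifts back. Your half-plane bounds $e^{\Im(u)\,b}$ for $\Im u\ge 0$ and $e^{\Im(u)\,a}$ for $\Im u\le 0$ already equal $e^{\max(a\Im u,\,b\Im u)}$, the supporting-function bound for $[a,b]$. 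So Paley--Wiener--Schwartz gives support in $[a,b]$ directly, with no polynomial factor needed.

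There is one small error. The degenerate case $a=b$ is not handled by dilation, because dilating about a single point never produces a neighborhood of $\Omega$. The case is immediate by another route: $\int_0^t(r-l)\rd s=0$ forces $L(s)=l(s)I$ a.e. Hence $F(u)=e^{-\I ua}\,\mathcal{T}e^{-\I\int_0^t H(s)\rd s}$ and $\wh{\tilde f}{L}=\tilde f(a)\,\mathcal{T}e^{-\I\int_0^t H(s)\rd s}=0$.

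You flagged as the main obstacle that $T$ must agree with \eqref{eq: def whfl} on the mollified function; the paper also leaves this implicit. It holds because that function's Fourier measure $\sqrt{2\pi}\,\hat\phi(\epsilon u)\,\mu$ has finite moments of all orders. Hence the finite-order, compactly supported $T$ can be interchanged with the $u$-integral, using $T(e^{-\I u\,\cdot})=F(u)$.
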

\begin{proof}
The proof parallels \Cref{thm: opf bound}. It has three parts, corresponding to the proofs of \Cref{lem: op C infty extension}, \Cref{lem: op determined by f Omega}, and \Cref{thm: opf bound}, respectively. The first two parts are for proving that $\wh{f}{L}$ is well-defined in \eqref{eq: whfl ext def}, and the third part is for the bound \eqref{eq: wfl le fw}.

\textbf{Part 1.} We first consider the case where $f\in\mathcal{S}(\RR)$ is a Schwartz function. This part follows the same strategy as \cite[Theorem 2.4]{jefferies2004spectral}. 
    Let $m(s) = \frac{l(s)+r(s)}{2}$. Then $\norm{L(s) - m(s)I}\le\frac{r(s)-l(s)}{2}$. For $z\in\CC$, consider the operator
    $$e(z) = \mathcal{T}e^{-\I\int_0^t (z (L(s)-m(s)I)+H(s))\rd s}$$
    Define the matrix-valued distribution
    $$T: \mathcal{S}(\RR) \to \matn: \varphi\mapsto (2\pi)^{-1/2}\int_{\RR} e(u) \hat{\varphi}(u) \rd u.$$
    By the definition of the Fourier transform of distributions,
    $$\langle T, \varphi\rangle = (2\pi)^{-1/2}\langle e, \hat{\varphi}\rangle = (2\pi)^{-1/2}\langle \hat{e}, \varphi\rangle,$$
    so $T = (2\pi)^{-1/2}\hat{e}$ as a distribution.
    
    For any matrix-valued function $B:[0,t]\to\mathbb{C}^{N\times N}$, the logarithmic-norm bound from~\cite{soderlind2006logarithmic} gives
\begin{align}
\left\|\mathcal{T}e^{\int_0^t B(s)\rd s}\right\|
\le \exp\left(\int_0^t\lambda_{\text{max}}\left(\frac{B(s)+B^\dagger(s)}{2}\right)\rd s\right).
\end{align}
Applying this bound to $B(s)=-\I(z(L(s)-m(s)I)+H(s))$ gives
\begin{align*}
    \norm{e(z)}&\le \exp\left(\int_0^t\lambda_{\text{max}}\left(\Im(z)(L(s) - m(s)I)\right)\rd s\right)\\
    &\le \exp\left(\abs{\Im(z)}\int_0^t\norm{L(s) - m(s)I}\rd s\right)\\
    &\le \exp\left(\abs{\Im(z)}\int_0^t\frac{r(s)-l(s)}{2}\rd s\right).
\end{align*}
By \cite[Lemma 3]{low2025optimal}, $e(z)$ is entire. The Paley--Wiener theorem \cite[Proposition 2.1]{jefferies2004spectral} then implies that $T=(2\pi)^{-1/2}\hat{e}$ is a matrix-valued tempered distribution with support contained in $[-\int_0^t\frac{r(s)-l(s)}{2}\rd s, \int_0^t\frac{r(s)-l(s)}{2}\rd s]$.

By the definition of $e(u)$,
$$\wh{f}{L}:=(2 \pi)^{-1/2} \int_{\RR} e^{-\I u\int_0^t m(s)\rd s} e(u) \hat{f}(u)\rd u = \langle T, f(\, \cdot\, - \textstyle\int_0^t m(s)\rd s)\rangle,$$
which shows that $\wh{\cdot}{L}$ is a distribution supported on the interval $\Omega=[\int_0^t l(s)\rd s,\int_0^t r(s)\rd s]$.

\textbf{Part 2.} Part 1 shows that if $f\in\mathcal{S}(\RR)$ vanishes in a neighborhood of $\Omega$, then $\wh{f}{L}=0$. We now relax this to functions $f$ satisfying \eqref{eq: 1d FS f} and vanishing on $\Omega$. The argument is the same as in the proof of \Cref{lem: op determined by f Omega}, so we omit the details.

\textbf{Part 3.} The proof of \eqref{eq: wfl le fw} is similar to \Cref{thm: opf bound}. For any function $\tilde{f}$ that extends $f$ to $\RR$ and satisfies
    $$\tilde{f}(x) = (2 \pi)^{-1/2} \int_{\RR} e^{-\I ux} \rd \tilde{\mu}(u)$$
    for some finite complex Borel measure $\tilde{\mu}$, Part 2 gives $\wh{f}{L} = \wh{\tilde{f}}{L}$. By definition,
    $$ \norm{\wh{\tilde{f}}{L}}=\norm{(2 \pi)^{-1/2} \int_{\RR} \mathcal{T}e^{-\I\int_0^t (u L(s)+H(s))\rd s} \rd \tilde{\mu}(u)}\le (2 \pi)^{-1/2} \int_{\RR}  \rd \abs{\tilde{\mu}}\le \norm{\tilde{f}}_{W}.$$ 
    Taking the infimum over all extensions $\tilde{f}$ completes the proof by the definition of $\norm{f}_{W(\Omega)}$.
\end{proof}

\begin{theorem}\label{cor: exp}
    For any $t \ge 0$, consider the Cartesian decomposition $A=L+\I H:[0,t]\to\mathbb{C}^{N\times N}$. Suppose $\|A\|_{L^1}<\infty$ and $L(s)\succeq0$. If $f$ is a smooth function defined on $[0,+\infty)$, then
    \begin{equation}\label{eq: diff exp-x}
        \norm{\wh{f}{L} - \mathcal{T}e^{-\int_0^t A(s) \rd s}}\le \norm{f-e^{-x}}_{W([0,+\infty))}.
    \end{equation}
\end{theorem}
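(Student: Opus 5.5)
Since $L(s)\succeq 0$, we may take $l\equiv 0$ in \Cref{lem: exp bound}. The natural upper bound $r(s)=\norm{L(s)}$ is integrable because $\norm{A}_{L^1}<\infty$, and $\norm{L(s)}\le\norm{A(s)}$. \Cref{lem: exp bound} then applies on the compact interval $\Omega_t=[0,\int_0^t r(s)\rd s]\subset[0,+\infty)$.

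The plan is to reduce \eqref{eq: diff exp-x} to \eqref{eq: wfl le fw} applied to the difference $f-e^{-x}$. The one substantive step is to show that $\wh{e^{-x}}{L}=\mathcal{T}e^{-\int_0^t A(s)\rd s}$, which is the time-ordered analogue of \Cref{thm: eigenvalue transform is weyl}. Once this holds, linearity and well-definedness (both from \Cref{lem: exp bound}) give
\[
\wh{f}{L}-\mathcal{T}e^{-\int_0^t A(s)\rd s}=\wh{f-e^{-x}}{L},
\]
with the operator understood through any extension in $B(\RR)$. By \eqref{eq: wfl le fw},
\[
\norm{\wh{f-e^{-x}}{L}}\le\norm{f-e^{-x}}_{W(\Omega_t)}\le\norm{f-e^{-x}}_{W([0,+\infty))}.
\]
The last inequality holds because any $B(\RR)$ extension from $[0,\infty)$ is also an extension from $\Omega_t$. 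If $\norm{f-e^{-x}}_{W([0,+\infty))}=\infty$ there is nothing to prove, so we may assume an extension $\tilde f\in B(\RR)$ of $f$ from $[0,\infty)$ exists.

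To establish $\wh{e^{-x}}{L}=\mathcal{T}e^{-\int_0^t A}$, we cannot use $e^{-x}$ itself, since it is not in $B(\RR)$. We therefore pick an extension of $e^{-x}|_{\Omega_t}$ in $B(\RR)$. One option is the known LCHS kernel: $k(x)=e^{-|x|}$ has $\hat k(u)=\sqrt{2/\pi}/(1+u^2)\in L^1$ and agrees with $e^{-x}$ on $[0,\infty)$. The required identity is then exactly the time-dependent LCHS formula of \cite{an2023linear} with the kernel \eqref{eq: LCHS exp(-tA)}, namely
\[
(2\pi)^{-1/2}\int_{\RR}\hat k(u)\,\mathcal{T}e^{-\I\int_0^t(uL+H)}\rd u=\mathcal{T}e^{-\int_0^t A}.
\]
To be self-contained, I would prove it by a contour or analytic-continuation argument in the style of \cite{low2025optimal}. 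The function $e(z)=\mathcal{T}e^{-\I\int_0^t(zL+H)}$ is entire (\cite[Lemma 3]{low2025optimal}). For $\Im z\le 0$, the logarithmic-norm bound together with $L\succeq0$ gives $\norm{e(z)}\le1$. Writing $\frac{1}{1+u^2}=\frac{1}{2\I}\left(\frac{1}{u-\I}-\frac{1}{u+\I}\right)$, closing each piece in the appropriate half-plane and picking up the residue at $u=-\I$ yields $e(-\I)=\mathcal{T}e^{-\int_0^t(L+\I H)}$, with the prefactors matching. The half-plane bound controls the arc at infinity.

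I expect this identity to be the main obstacle, specifically the closure of the contour in the upper half-plane. There $e(z)$ may grow like $e^{\Im z\int\norm{L}}$, so the naive contour argument does not close on that side. The way around it is to work on the lower side only: treat the term $\frac{1}{u+\I}$ via the residue at $-\I$, and show that the term $\frac{1}{u-\I}$ integrates to zero by closing downward, where $e$ is bounded and there is no pole. Alternatively, one can sidestep the contour entirely by the ODE-uniqueness argument: differentiate the integral in $t$ and verify it satisfies $\frac{\rd}{\rd t}\psi=-A\psi$ with $\psi(0)=I$.
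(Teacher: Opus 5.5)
Your main argument is the paper's proof: extend $e^{-x}$ by $e^{-\abs{x}}$, invoke the time-dependent LCHS identity of \cite{an2023linear} for $\wh{e^{-\abs{x}}}{L}$, apply \Cref{lem: exp bound} on $[0,\int_0^t\norm{L(s)}\rd s]$ to $f-e^{-\abs{x}}$, and enlarge the domain to $[0,+\infty)$. Two points the paper leaves implicit are handled correctly in your version: finiteness of $\int_0^t\norm{L(s)}\rd s\le\norm{A}_{L^1}$, and the case where the right-hand side is infinite.

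The optional self-contained contour sketch, however, does not work as written. Splitting $\frac{1}{1+u^2}$ into partial fractions is the wrong move. Each piece $\frac{e(u)}{u\mp\I}$ decays only like $1/\abs{u}$, so it is not absolutely integrable, and the bound $\norm{e(z)}\le 1$ alone does not make the large arcs vanish. Concretely, take $L\equiv 0$, so $e\equiv U$ is constant. Then $\int_{-K}^{K}\frac{\rd u}{u-\I}\to\I\pi$, so the claim that the $\frac{1}{u-\I}$ piece integrates to zero is false. Likewise, the $\frac{1}{u+\I}$ piece gives $-\I\pi U$, not the residue value $-2\pi\I U$. The two errors only cancel in the combination.

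The fix is simply not to split. Close the full integrand $\frac{e(z)}{1+z^2}$ in the lower half-plane. There $\norm{e(z)}\le 1$, so the arc contributes at most $\frac{\pi K}{K^2-1}\to 0$. The only enclosed pole is at $z=-\I$, and the clockwise residue gives $\frac{1}{\pi}\cdot\pi e(-\I)=\mathcal{T}e^{-\int_0^t A(s)\rd s}$. This is the same contour the paper uses in \Cref{lem: fRC approximation}. With it, the upper-half-plane growth you flagged as the main obstacle never enters.
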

\begin{proof}
    Let $g(x)=e^{-|x|}$. From \cite{an2023linear}, we have
    $$\wh{g}{L} = \frac{1}{\pi} \int_{\RR} \mathcal{T}e^{-\I\int_0^t (u L(s)+H(s))\rd s} \frac{1}{1+u^2}\rd u= \mathcal{T}e^{-\int_0^t A(s) \rd s}.$$
    Suppose $\int_0^t \norm{L(s)} \rd s = B$. Applying \Cref{lem: exp bound} on the interval $\Omega=[0,B]$ gives
    \begin{equation*}
        \norm{\wh{f}{L} - \mathcal{T}e^{-\int_0^t A(s) \rd s}} = \norm{\wh{f-g}{L}}\le\norm{f-g}_{W([0,B])}\le \norm{f-e^{-x}}_{W([0,+\infty))}.
    \end{equation*}
\end{proof}

\subsection{Improved prefactor for dissipative dynamics simulation}\label{sec:improved-prefactor}

In this subsection, we present a new kernel function for the LCHS simulation of
$$\mathcal{T}e^{-\int_0^t A(s) \rd s}$$
for a time-dependent matrix $A(t)$. This is the first and most extensively studied application of LCHS. Ref.~\cite{low2025optimal} gives an efficient kernel construction using a contour-integral argument; here we construct a kernel with a better prefactor.

For positive parameters $R$ and $C$, define
\begin{align*}
    \hat{g}_{R,C}(u) = \sqrt{\frac{2}{\pi}} e^{2C \arctan\left(\frac{1}{R}\right)} e^{\I C \log\left(\frac{R-u}{R+u}\right)} \frac{1}{1+u^2}
\end{align*}
on the interval $(-R,R)$, and set
\begin{align}\label{eq: fRC definition}
    f_{R,C}(x) := \frac{1}{\sqrt{2\pi}}\int_{-R}^R\hat{g}_{R,C}(u) e^{-\I ux} \rd u.
\end{align}
\begin{lemma}\label{lem: fRC approximation}
The following bound holds:
    $$\norm{f_{R,C}-e^{-x}}_{W([0,+\infty))} \le \frac{2}{\pi} e^{-2 C \arctan(R)}\arctan\left(\frac{1}{R}\right). $$
\end{lemma}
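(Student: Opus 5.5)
The plan is to evaluate $f_{R,C}$ by contour deformation in the lower half $u$-plane. This should express $f_{R,C}(x)-e^{-x}$, for $x\ge 0$, as a superposition of damped exponentials $e^{-sx}$ with explicit weights, and then bound the local Wiener norm of each of these pieces by $1$.

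\textbf{Setting up the contour.} For $x\ge0$ the factor $e^{-\I ux}$ is bounded for $\Im u\le 0$ and decays like $e^{-sx}$ at $u=\pm R-\I s$. The integrand $\hat g_{R,C}(u)e^{-\I ux}$ is analytic in the lower half-plane except for the simple pole at $u=-\I$, since $R>1$ is implicit in the regime of interest. I take the branch of $\log\frac{R-u}{R+u}$ to be the one that is real on $(-R,R)$ and continued through the lower half-plane; its branch points $\pm R$ sit at the endpoints of the segment.

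I replace the segment $[-R,R]$ by the path $-R\to -R-\I\infty\to R-\I\infty\to R$. The horizontal piece at $-\I\infty$ vanishes for $x>0$, and the case $x=0$ follows by continuity. Near the endpoints, $|(\tfrac{R-u}{R+u})^{\I C}|$ stays bounded, so the endpoint behaviour is harmless.

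\textbf{Residue at $u=-\I$.} The residue term should reproduce $e^{-x}$ exactly. The factor $\frac{1}{1+u^2}$ has residue $\frac{\I}{2}$ at $-\I$. The phase there is $e^{\I C\log\frac{R+\I}{R-\I}}=e^{\I C\cdot 2\I\arctan(1/R)}=e^{-2C\arctan(1/R)}$, which cancels the normalising prefactor $e^{2C\arctan(1/R)}$. The constant $\sqrt{2/\pi}\cdot\frac{1}{\sqrt{2\pi}}\cdot 2\pi\cdot\frac12$ then gives $1$. This is the same mechanism as in the classical identity with kernel $\frac{1}{\pi(1+u^2)}$.

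Hence, for $x\ge0$,
\[
f_{R,C}(x)-e^{-x}=\int_0^\infty \big(w_+(s)e^{-\I Rx}+w_-(s)e^{\I Rx}\big)e^{-sx}\rd s,
\]
where $w_\pm(s)$ are $\frac{1}{\sqrt{2\pi}}\hat g_{R,C}(\pm R-\I s)$, each up to a sign.

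\textbf{Bounding the Wiener norm.} Each function $e^{\mp\I Rx}e^{-sx}$ is extended to $\RR$ by $e^{\mp \I Rx}e^{-s|x|}$. That extension is a frequency-shifted Poisson kernel with positive Fourier transform, so its Wiener norm equals its value at $0$, namely $1$. By the triangle inequality for the Wiener norm, applied in integrated (Minkowski) form, I get
\[
\norm{f_{R,C}-e^{-x}}_{W([0,\infty))}\le\int_0^\infty\big(|w_+(s)|+|w_-(s)|\big)\rd s .
\]
To justify this rigorously, I would note that the superposition of the extensions is itself an extension in $B(\RR)$ whose measure has total variation at most the right-hand side.

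It remains to compute the moduli. On $u=R-\I s$ we have $\frac{R-u}{R+u}=\frac{\I s}{2R-\I s}$, so the phase has modulus $e^{-C\arg(\cdot)}$, where the argument is tracked continuously along the chosen branch. The corresponding computation holds on $u=-R-\I s$. Combining this with the prefactor $e^{2C\arctan(1/R)}$ should produce the factor $e^{-2C\arctan R}$. The leftover $\int_0^\infty \frac{\rd s}{|1+(\pm R-\I s)^2|}$, summed over both lines and multiplied by $\frac1\pi$, should give $\frac{2}{\pi}\arctan(1/R)$.

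\textbf{Main obstacle.} The delicate step is this final one. It requires getting the branch of the logarithm right on the two vertical rays, since the sign of the argument determines whether one obtains decay $e^{-2C\arctan R}$ or growth. It also requires showing that the argument-dependent factor is maximised, or can be integrated exactly, so as to give precisely the constant $\frac{2}{\pi}\arctan(1/R)$.

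If bounding the phase by its worst case over $s$ loses too much, I would instead integrate $|w_\pm|$ exactly. The substitution $s=2R\tan\phi$ should turn both the phase and the rational factor into elementary functions of $\phi$.
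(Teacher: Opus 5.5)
There is a genuine gap in your final step. It comes from the choice of contour, not from a loose estimate, so neither worst-case bounding nor exact integration can close it.

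Your residue computation, the cancellation of the prefactor, the vanishing far piece, and the Minkowski-type Wiener bound are all fine. The trouble is the ray $u=R-\I s$. There you correctly have $\frac{R-u}{R+u}=\frac{\I s}{2R-\I s}$, whose continued argument is $\frac{\pi}{2}+\arctan\frac{s}{2R}$. This rises from $\pi/2$ at the branch point to $\pi$, and the ray $u=-R-\I s$ behaves the same way. Hence, as $C\to\infty$,
\[
\int_0^\infty \abs{w_+(s)}\rd s=\frac{1}{\pi}e^{2C\arctan(1/R)}\int_0^\infty \frac{e^{-C(\pi/2+\arctan(s/2R))}}{\abs{1+(R-\I s)^2}}\rd s\approx \frac{2R}{\pi C(1+R^2)}\,e^{2C\arctan(1/R)-C\pi/2},
\]
whereas the target is $\frac{2}{\pi}\arctan(1/R)\,e^{2C\arctan(1/R)-C\pi}$ (use $\arctan R+\arctan(1/R)=\pi/2$).

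The loss factor $e^{C\pi/2}$ is exponential and intrinsic: near $\pm R$ the vertical rays only see damping $e^{-C\pi/2}$. Your fallback substitution $s=2R\tan\phi$ therefore cannot help. Even the $C$-independent part fails, because taking absolute values along a non-real path destroys cancellation. Indeed $\int_0^\infty\abs{1+(R-\I s)^2}^{-1}\rd s>\arctan(1/R)$ strictly; for $R=1$ it equals $\int_0^\infty(s^4+4)^{-1/2}\rd s\approx 1.31>\pi/4$. With the parameters of \Cref{thm: exp approx 1}, your bound is of order $\sqrt{\epsilon}$ up to logarithmic factors, so it would not even yield that theorem.

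The missing idea is to deform onto the real axis beyond $\pm R$, approached from the lower half-plane. Concretely, close $[-R,R]$ with the rays $(R,K)$ and $(-K,-R)$, a large lower semicircle, and small indentations at $\pm R$, as the paper does. On both rays $\frac{R-u}{R+u}<0$ and the continued argument is exactly $\pi$. So the phase factor has uniform modulus $e^{-C\pi}$, while $\frac{1}{1+u^2}>0$ there. Consequently $\frac{1}{\pi}e^{2C\arctan(1/R)}e^{-C\pi}\int_{\abs{u}>R}\frac{\rd u}{1+u^2}$ equals the stated bound exactly.

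In this form, for $x\ge 0$, $e^{-x}-f_{R,C}(x)=\frac{1}{\sqrt{2\pi}}\int_{\abs{u}>R}\hat g_{R,C}(u)e^{-\I ux}\rd u$, using the continued boundary values. This is already an element of $A(\RR)$ extending the residual, so the $L^1$ norm of these tails bounds the local Wiener norm directly. No Poisson-kernel extension is needed.

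Incidentally, your assumption $R>1$ is unnecessary: $-\I$ lies in the strip $\abs{\Re u}<R$ for every $R>0$.
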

\begin{proof}
    The function $\hat{g}_{R,C}(z)$ can be analytically continued to the lower half-plane 
    $\{z\in \CC: \Im(z)\le 0\}\setminus\{\pm R\}$, with branch points at $z = \pm R$. Define
    \begin{align}\label{eq: int grc}
        g_{R,C}(x) = \left( \int_{-R}^R + \int_R^\infty + \int_{-\infty}^{-R} \right) \frac{1}{\sqrt{2\pi}} \hat{g}_{R,C}(u)  e^{-\I ux} \rd u,
    \end{align}
    We first prove that $g_{R,C}(x)=e^{-x}$ for $x\ge0$.
    
    We evaluate the integral in \eqref{eq: int grc} using a closed contour $\mathcal{C}$ that encloses a large semi-disk of radius $K$ in the lower half-plane, with small semicircular indentations of radius $\delta$ around $\pm R$, as shown in \Cref{fig:contour}. 

    \begin{figure}[htbp]
    \centering
    \begin{tikzpicture}[scale=0.8, thick]
        \draw[->, gray] (-4.5, 0) -- (4.5, 0) node[right, black] {$\Re(z)$};
        \draw[->, gray] (0, -3.5) -- (0, 1.0) node[above, black] {$\Im(z)$};
        
        \node[blue] at (0, -1.2) {\Large $\times$};
        \node[blue, right] at (0.1, -1.2) {$-\I$};
        
        \draw[->, blue, thick] (-1.8, 0) -- (0, 0);
        \draw[blue, thick] (0, 0) -- (1.8, 0);
        
        \draw[->, blue, thick] (1.8, 0) arc (180:270:0.2);
        \draw[blue, thick] (2.0, -0.2) arc (270:360:0.2);
        
        \draw[->, blue, thick] (2.2, 0) -- (3.1, 0);
        \draw[blue, thick] (3.1, 0) -- (4.0, 0);
        
        \draw[->, blue, thick] (4.0, 0) arc (360:270:4.0) node[below] {$\mathcal{C}$};
        \draw[blue, thick] (0, -4.0) arc (270:180:4.0);
        
        \draw[->, blue, thick] (-4.0, 0) -- (-3.1, 0);
        \draw[blue, thick] (-3.1, 0) -- (-2.2, 0);
        
        \draw[->, blue, thick] (-2.2, 0) arc (180:270:0.2);
        \draw[blue, thick] (-2.0, -0.2) arc (270:360:0.2);
        
        \filldraw[black] (2,0) circle (1pt) node[above] {$R$};
        \filldraw[black] (-2,0) circle (1pt) node[above] {$-R$};
        \node[above] at (4.0, 0) {$K$};
        \node[above] at (-4.0, 0) {$-K$};
    \end{tikzpicture}
    \caption{Integration contour $\mathcal{C}$ in the lower half-plane, with small indentations around the branch points $\pm R$ and a single enclosed pole at $-\I$.}
    \label{fig:contour}
\end{figure}

    Inside $\mathcal{C}$, the analytically continued integrand $\hat{g}_{R,C}(z)e^{-\I zx}$ has a single simple pole at $z=-\I$, coming from the factor $(1+z^2)^{-1}$. Its residue gives
    \begin{align*}
        -2\pi \I \, \Res_{z=-\I} \left[ \frac{1}{\sqrt{2\pi}} \hat{g}_{R,C}(z) e^{-\I zx} \right] = e^{-x}.
    \end{align*}
    The contour $\mathcal{C}$ is traversed clockwise. By Cauchy's residue theorem, the integral over $\mathcal{C}$ is exactly the pole contribution $e^{-x}$.

    Under the continuation to $\Im(z)<0$, we have $\Arg\left(\frac{R-z}{R+z}\right)\in(0,\pi)$. Hence $\abs{e^{\I C \log\left(\frac{R-z}{R+z}\right)}}\le1$, so $\hat{g}_{R,C}(z)$ remains bounded near the branch points. Consequently, the integrals over the small semicircles around $\pm R$ vanish as $\delta\to0$. Along the large semicircle of radius $K$, parameterized by $z=Ke^{\I\theta}$ for $\theta\in[-\pi,0]$, the contribution is bounded by
    \begin{align*}
        \int_{-\pi}^{0} K \abs{\frac{1}{1+K^2 e^{2\I\theta}}} \rd \theta \xrightarrow{K\to \infty} 0.
    \end{align*}
    Therefore, the integrals over the real segments converge to the residue value:
    \begin{align*}
        g_{R,C}(x) = \left( \int_{-R}^R + \int_R^\infty + \int_{-\infty}^{-R} \right) \frac{1}{\sqrt{2\pi}} \hat{g}_{R,C}(u) e^{-\I ux} \mathrm{d}u = e^{-x},
    \end{align*}
    where the values for $|u|>R$ are taken from the analytic continuation through the lower half-plane. It follows that
    \begin{align*}
        \norm{f_{R,C}-e^{-x}}_{W([0,+\infty))} = \norm{f_{R,C}-g_{R,C}}_{W([0,+\infty))} \le \frac{1}{\sqrt{2\pi}} \left( \int_R^\infty + \int_{-\infty}^{-R} \right) \abs{\hat{g}_{R,C}(u)} \rd u.
    \end{align*}
    For $u>R$, analytic continuation of the logarithm through the lower half-plane adds $\I\pi$ to the phase:
    $\log\left(\frac{R-u}{R+u}\right)=\ln\left|\frac{u-R}{u+R}\right|+\I\pi$. This phase shift gives the uniform exponential suppression
    \begin{align*}
        \abs{\exp\left[ \I C \left(\log\left| \frac{u-R}{u+R} \right| + \I\pi\right) \right]} = e^{-C\pi}.
    \end{align*}
    By symmetry, the interval $(-\infty,-R)$ gives the same suppression factor. Thus
    \begin{align*}
        \norm{f_{R,C}-e^{-x}}_{W([0,+\infty))} &\le \frac{1}{\pi} e^{2C \arctan(1/R)} e^{-C\pi} \left( \int_R^\infty \frac{\mathrm{d}u}{1+u^2} + \int_{-\infty}^{-R} \frac{\mathrm{d}u}{1+u^2} \right) \\
        &= \frac{2}{\pi} e^{-C\pi + 2C \arctan(1/R)} \arctan\left(\frac{1}{R}\right).
    \end{align*}
    Using the identity $\frac{\pi}{2}-\arctan(1/R)=\arctan(R)$, the exponent simplifies to $-2C\arctan(R)$, completing the proof.
\end{proof}

Combining the preceding lemma with \Cref{cor: exp} and optimizing the parameters $R$ and $C$ gives the following theorem.

\begin{theorem}\label{thm: exp approx 1}
    Assume $\epsilon<e^{-e}$. For any $t\ge0$, consider the Cartesian decomposition $A=L+\I H:[0,t]\to\mathbb{C}^{N\times N}$. Suppose $\|A\|_{L^1}<\infty$ and $L(s)\succeq0$. Then
    \begin{equation}\label{eq: exp approx 1}
        \norm{\mathcal{T}e^{-\int_0^t A(s) \rd s} - \frac{1}{\sqrt{2\pi}} \int_{-R}^R \hat{g}_{R,C}(u) \mathcal{T}e^{-\I\int_0^t (u L(s)+H(s))\rd s} \rd u}\le \epsilon
    \end{equation}
    holds for $C = \frac{1}{\pi} \ln(\epsilon^{-1})$ and $R = \tan\left(\frac{\pi}{2}\left(1 - \frac{1}{\ln(\epsilon^{-1})}\right)\right)$. The cost function $\Lambda=\lambda_{\hat{g}_{R,C}}R$ satisfies
    $$\Lambda \le \frac{2 e}{\pi}\left(\ln\frac{1}{\epsilon} - 1\right).$$
\end{theorem}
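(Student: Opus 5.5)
The plan is to derive the statement directly from \Cref{cor: exp} and \Cref{lem: fRC approximation}, followed by an explicit evaluation with the given $C$ and $R$. Write $\ell=\ln(\epsilon^{-1})$; the assumption $\epsilon<e^{-e}$ gives $\ell>e>1$.

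\textbf{Step 1: identify the LCHS integral with $\wh{f_{R,C}}{L}$.} Take the measure $\rd\mu(u)=\hat g_{R,C}(u)\mathbf{1}_{(-R,R)}(u)\rd u$. Since $\frac{R-u}{R+u}>0$ on $(-R,R)$, the factor $e^{\I C\log(\frac{R-u}{R+u})}$ has modulus one there. Hence $|\hat g_{R,C}(u)|\le \sqrt{2/\pi}\,e^{2C\arctan(1/R)}/(1+u^2)$, so $\mu$ is a finite measure and $f_{R,C}\in B(\RR)$ (in fact it is smooth). By definition \eqref{eq: def whfl}, the integral in \eqref{eq: exp approx 1} is exactly $\wh{f_{R,C}}{L}$. \Cref{cor: exp} and then \Cref{lem: fRC approximation} give
\[
\norm{\mathcal{T}e^{-\int_0^t A(s)\rd s}-\wh{f_{R,C}}{L}}\le \tfrac{2}{\pi}e^{-2C\arctan(R)}\arctan(1/R).
\]

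\textbf{Step 2: error with the chosen parameters.} With $R=\tan(\frac{\pi}{2}-\frac{\pi}{2\ell})$ we have $\arctan R=\frac{\pi}{2}(1-\frac1\ell)$ and $\arctan(1/R)=\frac{\pi}{2\ell}$. With $C=\ell/\pi$ this gives $2C\arctan R=\ell-1$. The bound from Step 1 therefore becomes
\[
\tfrac{2}{\pi}\,e^{1-\ell}\,\tfrac{\pi}{2\ell}=\frac{e\,\epsilon}{\ell}.
\]
This is at most $\epsilon$ because $\ell>e$.

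\textbf{Step 3: cost.} Using the modulus-one observation from Step 1,
\[
\lambda=(2\pi)^{-1/2}\int_{-R}^R|\hat g_{R,C}|\rd u=\tfrac{2}{\pi}e^{2C\arctan(1/R)}\arctan R.
\]
Here $2C\arctan(1/R)=\frac{2\ell}{\pi}\cdot\frac{\pi}{2\ell}=1$, so $\lambda=e(1-\frac1\ell)$. Then
\[
\Lambda=\lambda R=e\Bigl(1-\frac1\ell\Bigr)\cot\Bigl(\frac{\pi}{2\ell}\Bigr).
\]
Applying $\cot x\le 1/x$ for $x\in(0,\pi/2)$ yields $\Lambda\le e(1-\frac1\ell)\frac{2\ell}{\pi}=\frac{2e}{\pi}(\ell-1)$, which is the claimed bound.

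The only delicate point is Step 1: one must check that $\hat g_{R,C}$ truncated to $(-R,R)$ defines a finite measure, so that \Cref{cor: exp} applies to $f_{R,C}$. This follows from the unit-modulus phase factor. Everything else is direct substitution.
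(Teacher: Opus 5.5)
Your proposal is correct and follows the paper's proof step for step. You combine \Cref{lem: fRC approximation} with \Cref{cor: exp}, choose $2C = 1/\arctan(1/R) = \frac{2}{\pi}\ln\frac{1}{\epsilon}$ so that $2C\arctan(1/R)=1$ and $2C\arctan R=\ln\frac{1}{\epsilon}-1$, and bound $R=\cot\bigl(\frac{\pi}{2\ell}\bigr)\le\frac{2\ell}{\pi}$, which the paper writes as $R\le 1/\arctan(1/R)$. Your explicit check that the phase factor has unit modulus on $(-R,R)$ is a detail the paper leaves implicit; it justifies both that the truncated kernel defines a finite measure and the closed form for $\lambda_{\hat g_{R,C}}$.
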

\begin{proof}
    By \Cref{lem: fRC approximation} and \Cref{cor: exp}, we have the error bound
\begin{equation}\label{eq:thm_general_error_bound}
        \norm{\mathcal{T}e^{-\int_0^t A(s) \rd s} - \frac{1}{\sqrt{2\pi}} \int_{-R}^R \hat{g}_{R,C}(u) \mathcal{T}e^{-\I\int_0^t (u L(s)+H(s))\rd s} \rd u}\le \frac{2}{\pi} e^{-2 C \arctan(R)}\arctan\left(\frac{1}{R}\right).
    \end{equation}
The normalization factor is
$$\lambda_{\hat{g}_{R,C}} = \frac{1}{\sqrt{2\pi}}\int_{-R}^R|\hat{g}_{R,C}(k)|\mathrm{d}k = \frac{2}{\pi} e^{2C \arctan(\frac{1}{R})} \arctan(R).$$
To minimize $\Lambda(R,C)=\lambda_{\hat{g}_{R,C}}R$ subject to the error bound in \eqref{eq:thm_general_error_bound}, consider the constrained optimization problem
\begin{equation}\label{eq: opt prob}
\begin{aligned}
    \min_{R>0, C>0} \quad & \Lambda(R, C) =  \frac{2}{\pi} R e^{2C \arctan(\frac{1}{R})} \arctan(R)  \\
    \text{s.t.} \quad & \frac{2}{\pi} e^{-2 C \arctan(R)}\arctan\left(\frac{1}{R}\right) \le \epsilon.
\end{aligned}
\end{equation}
    Choose $C$ and $R$ such that $2C = \frac{1}{\arctan(1/R)} = \frac{2}{\pi}\ln\frac{1}{\epsilon}$. Then $2C \arctan(\frac{1}{R}) = 1$ and $2C \arctan(R) = \pi C - 1 = \ln\frac{1}{\epsilon} - 1.$ The constraint is satisfied because
    $$\frac{2}{\pi} e^{-2 C \arctan(R)}\arctan(\frac{1}{R}) = \epsilon \frac{e}{\ln\frac{1}{\epsilon}}\le\epsilon,$$
    where the last step uses $\epsilon<e^{-e}$.

    We also have $R \le \frac{1}{\arctan(1/R)} = \frac{2}{\pi}\ln\frac{1}{\epsilon}$, so the cost is at most
    $$\frac{2}{\pi} R e^{2C \arctan(\frac{1}{R})} \arctan(R) = \frac{2R}{\pi}  e \left(\frac{\pi}{2} - \frac{\pi}{2}\frac{1}{\ln\frac{1}{\epsilon}}\right) \le \frac{2 e}{\pi}\left(\ln\frac{1}{\epsilon} - 1\right).$$
\end{proof}

For comparison, the state-of-the-art scaling from Ref.~\cite{low2025optimal} is asymptotically $2.57\log_2(\frac{1}{\epsilon}) \approx 3.71\ln\frac{1}{\epsilon}$. Our explicit parameter choice in the new kernel yields $\frac{2e}{\pi}\ln\frac{1}{\epsilon} \approx 1.73\ln\frac{1}{\epsilon}$, giving a $2.1\times$ improvement in the asymptotic constant. 

An even tighter cost estimate is obtained by numerically optimizing \eqref{eq: opt prob}; the results are shown in \Cref{tab:numerical_opt}.

\begin{table}[htbp]
\centering
\begin{tabular}{cccccc}
\toprule
\multirow{2}{*}{$\epsilon$} & \multirow{2}{*}{$R$} & \multirow{2}{*}{$C$} & \multirow{2}{*}{$\lambda_{\hat{g}_{R,C}}$} & \multicolumn{2}{c}{Cost $\Lambda(R, C) = \lambda R$} \\
\cmidrule(lr){5-6}
& & & & \textbf{Our Work} & Prior Art \cite{low2025optimal} \\
\midrule
$10^{-1}$ & 1.49 & 0.68 & 1.38 & \textbf{2.07} & 2.55 \\
$10^{-2}$ & 2.83 & 1.25 & 1.83 & \textbf{5.18} & 7.06 \\
$10^{-3}$ & 4.17 & 1.88 & 2.05 & \textbf{8.57} & 12.74 \\
$10^{-4}$ & 5.53 & 2.53 & 2.19 & \textbf{12.11} & 19.26 \\
$10^{-5}$ & 6.90 & 3.20 & 2.28 & \textbf{15.73} & 26.42 \\
$10^{-6}$ & 8.28 & 3.88 & 2.35 & \textbf{19.41} & 34.08 \\
$10^{-7}$ & 9.66 & 4.56 & 2.39 & \textbf{23.14} & 42.15 \\
$10^{-8}$ & 11.06 & 5.26 & 2.43 & \textbf{26.90} & 50.56 \\
$10^{-9}$ & 12.46 & 5.95 & 2.46 & \textbf{30.68} & 59.27 \\
$10^{-10}$ & 13.87 & 6.65 & 2.49 & \textbf{34.48} & 68.23 \\
\bottomrule
\end{tabular}
\caption{Numerical verification of the optimized scaling parameters. For each target error $\epsilon$, we report the optimized radius $R$, phase parameter $C$, normalization factor $\lambda_{\hat{g}_{R,C}}$, and the corresponding LCHS cost prefactor $\lambda_{\hat{g}_{R,C}} R$. The results are compared with the best prior-art bounds from Ref.~\cite{low2025optimal}, demonstrating a significant constant-factor improvement.}
\label{tab:numerical_opt}
\end{table}

\section{Cost optimization of discrete LCHS formulas}\label{sec:convex-optimization}
The continuous LCHS formula \eqref{eq: lchs formula} is not directly implementable in a coherent quantum circuit \Cref{sec: coherent}. Previous works \cite{an2023linear, an2026quantum, an2026laplace, low2025optimal, huang2025fourier} discretize such formulas using quadrature rules, and the same approach applies to the explicit continuous constructions in this paper. However, quadrature-based discretization can be inefficient in two ways:
\begin{enumerate}
    \item If the weight function $\hat{h}$ is not sufficiently smooth, or if the quadrature rule is not spectrally convergent, the number of discretization terms can be $\mo{\poly(\epsilon^{-1})}$. In the LCU implementation, this leads to an additional $\mo{\polylog(\epsilon^{-1})}$ ancilla-qubit cost in the $\mathrm{PREP}$ operators \eqref{eq: prep operators}.
    \item The cost $\Lambda$ associated with an explicit construction $\hat{h}(u,v)$ may have optimal asymptotic scaling but suboptimal constants. A direct numerical optimization method can give more favorable resource estimates.
\end{enumerate}

For the exponential function $f(z)=e^{-z}$, the works \cite{low2025optimal, huang2025fourier} address the second issue by optimizing over parametrized families of weight functions. The output, however, is still a continuous kernel and must be discretized afterward. Moreover, the parametrizations are chosen from intuition, which may not be optimal and may be difficult to generalize beyond $e^{-z}$.

The Weyl-calculus framework developed in this paper suggests a direct alternative: optimize the weights of the discrete LCHS formula themselves, without first choosing a continuous ansatz.

Assume that the truncation radius $R$ and total weight budget $\lambda$ are fixed, and choose a grid
$$\mathcal{G} = \{(u_j, v_j)\}_{j=1,\ldots,J}\subset D_R.$$
We seek weights $\bw=(w_j)_{1\le j\le J}$ solving the convex optimization problem
\begin{equation}\label{eq: general opt problem}
\begin{aligned}
    \epsilon = \min_{\bw}  &\ \norm{f(x+\I y) - \sum_{j=1}^J w_j\, e^{-\I(u_jx+v_jy)}}_{W(\Omega)}\\
    &\text{s.t.}\quad \sum_{j=1}^J|w_j|\le \lambda
\end{aligned}
\end{equation}
where $\Omega$ is a region containing the numerical range of $X+\I Y$. The function $h(x,y)=\sum_{j=1}^J w_j e^{-\I(u_jx+v_jy)}$ is the inverse Fourier transform of the measure $\mu=2\pi\sum_{j=1}^J w_j\delta_{(u_j,v_j)}$. By \Cref{thm: f - h bound},
\begin{equation}
    \norm{f(X+\I Y) - \sum_{j=1}^J w_j\, e^{-\I(u_jX+v_jY)}}\le \epsilon,
\end{equation}
and the cost is $\Lambda=\lambda R$. In practice, one may instead fix $\Lambda$ and optimize the resulting error $\epsilon$. This can be done by trying different values of $R$, setting $\lambda=\Lambda/R$, and choosing the smallest certified error.

Computing the exact local Wiener norm is itself difficult. Therefore, when implementing \eqref{eq: general opt problem}, we replace it by explicit upper bounds. For common domains such as intervals, squares, and discs, the resulting convex programs are described below.

\subsection{Convex optimization on $[0,T]$: application to dissipative dynamics}\label{sec: exp-z opt}
To obtain a discrete LCHS formula for $\mathcal{T}e^{-\int_0^t A(s)\rd s}$ under the dissipativity condition $L(s)=\frac{A(s)+A(s)^\dagger}{2}\succeq0$, \Cref{lem: exp bound} suggests choosing $T\ge\norm{L}_{L^1([0,t])}=\int_0^t\norm{L(s)}\rd s$ and approximating $e^{-x}$ on $[0,T]$. The same lemma also applies to non-dissipative dynamics by approximating $e^{-x}$ on the interval $[\int_0^t l(s)\rd s,\int_0^t r(s)\rd s]$; this extension is straightforward, so we omit it for conciseness.

Let $s=\frac{\pi}{T}$, and choose a truncation interval $[-R,R]$ such that $N=R/s$ is an integer. We use the finite Fourier expansion
\begin{equation}
    S(x)=s\sum_{j=-N}^{N}g_j e^{-\I u_jx},
    \qquad u_j=js,
\end{equation}
to approximate $e^{-x}$ for $x\in[0,T]$, where the coefficients $g_j\in\CC$ are variables to be determined. The objective is to minimize the local Wiener norm of the residual
\[
    \eta(x)=S(x)-e^{-x},\qquad x\in[0,T].
\]
Denote the cosine expansion of $\eta$ on $[0,T]$ by
\[
    \eta(x)=\frac{a_0}{2}+\sum_{m=1}^{\infty}a_m\cos(c_mx),
    \qquad c_m=\frac{m\pi}{T}.
\]
Then
\[
    \norm{\eta}_{W([0,T])}\le \frac{|a_0|}{2}+\sum_{m=1}^{\infty}|a_m|.
\]
We use the cosine expansion coefficients to upper bound the local Wiener norm because they are absolutely summable for a smooth function on a finite interval. In particular, for a fixed smooth residual $\eta$, the coefficients $a_m$ decay at least quadratically in $m$.

The coefficients $a_m$ can be computed analytically. Define
\begin{align}
    D_{j,m}
    &=\frac{2}{T}\int_0^T e^{-\I u_jx}\cos(c_mx)\rd x \notag\\
    &=\frac{1}{T}\left(
        \frac{1-e^{-\I(u_j-c_m)T}}{\I(u_j-c_m)}
        +\frac{1-e^{-\I(u_j+c_m)T}}{\I(u_j+c_m)}
    \right),\label{eq: Djm exp showcase}
\end{align}
where the values at removable singularities are clear. Also set
\[
    C_m=\frac{2}{T}\int_0^T e^{-x}\cos(c_mx)\rd x
    =\frac{2/T}{1+c_m^2}\left[1-(-1)^m e^{-T}\right].
\]
Thus
\[
    a_m=s\sum_{j=-N}^{N}g_jD_{j,m}-C_m .
\]

For fixed $R$ and $\lambda$, truncating the cosine series at a sufficiently large value $M$ gives the finite convex program
\begin{equation}\label{eq: exp cvx constraint}
    \epsilon_M=
    \min_{\{g_j\}_{j=-N}^{N}}\ 
    \frac{1}{2}\left|s\sum_{j=-N}^{N}g_jD_{j,0}-C_0\right|
    +\sum_{m=1}^{M}\left|s\sum_{j=-N}^{N}g_jD_{j,m}-C_m\right|
\end{equation}
subject to
\[
    s\sum_{j=-N}^{N}|g_j|\le \lambda .
\]
This is a second-order cone program (SOCP), since both the objective and the constraint are sums of moduli of complex affine functions. It can be solved efficiently using standard convex optimization solvers. To verify a solution, we use a larger value $M_{\text{test}}$ in the right-hand side of \eqref{eq: exp cvx constraint}, giving a sharper posterior error bound.

The practical implementation imposes the conjugate symmetry $\overline{g_{-j}}=g_j$, since the target function $e^{-x}$ is real-valued. This reduces the number of variables by roughly a factor of two.

\begin{figure}[!htbp]
    \centering
    \subfigure[Cost comparison]{
        \includegraphics[width=0.45\textwidth]{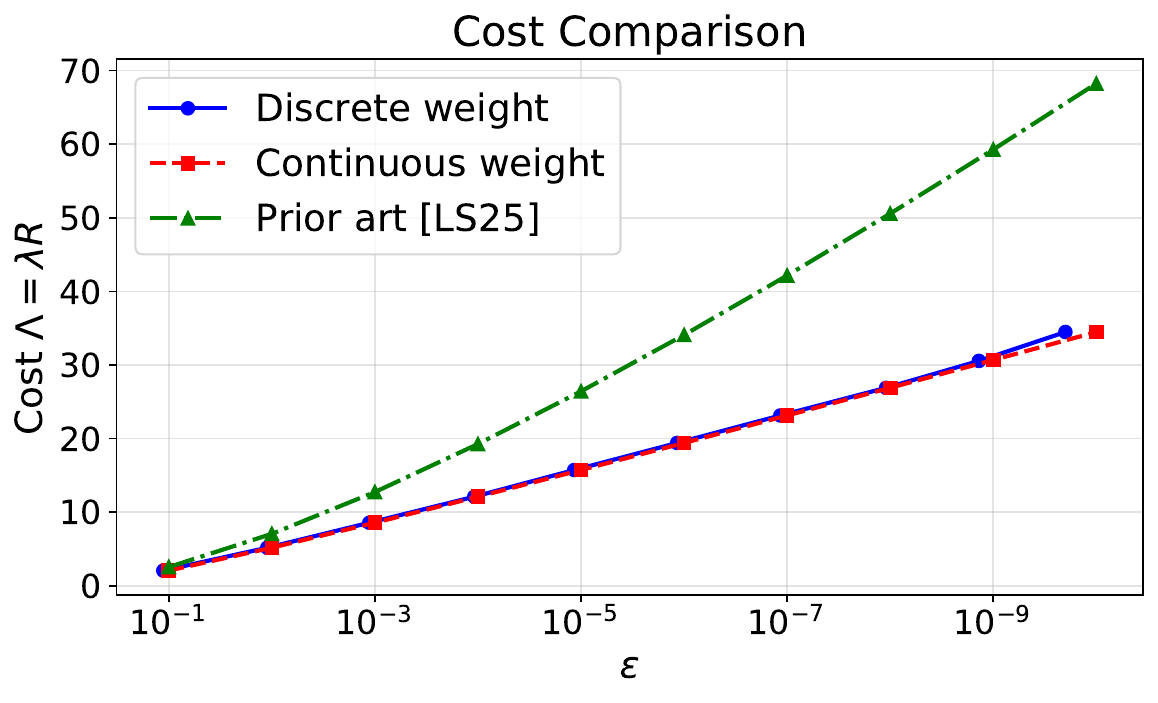}
        \label{fig:cost_comparison}
    }
    \hfill
    \subfigure[Optimized weights and reference kernel]{
        \includegraphics[width=0.45\textwidth]{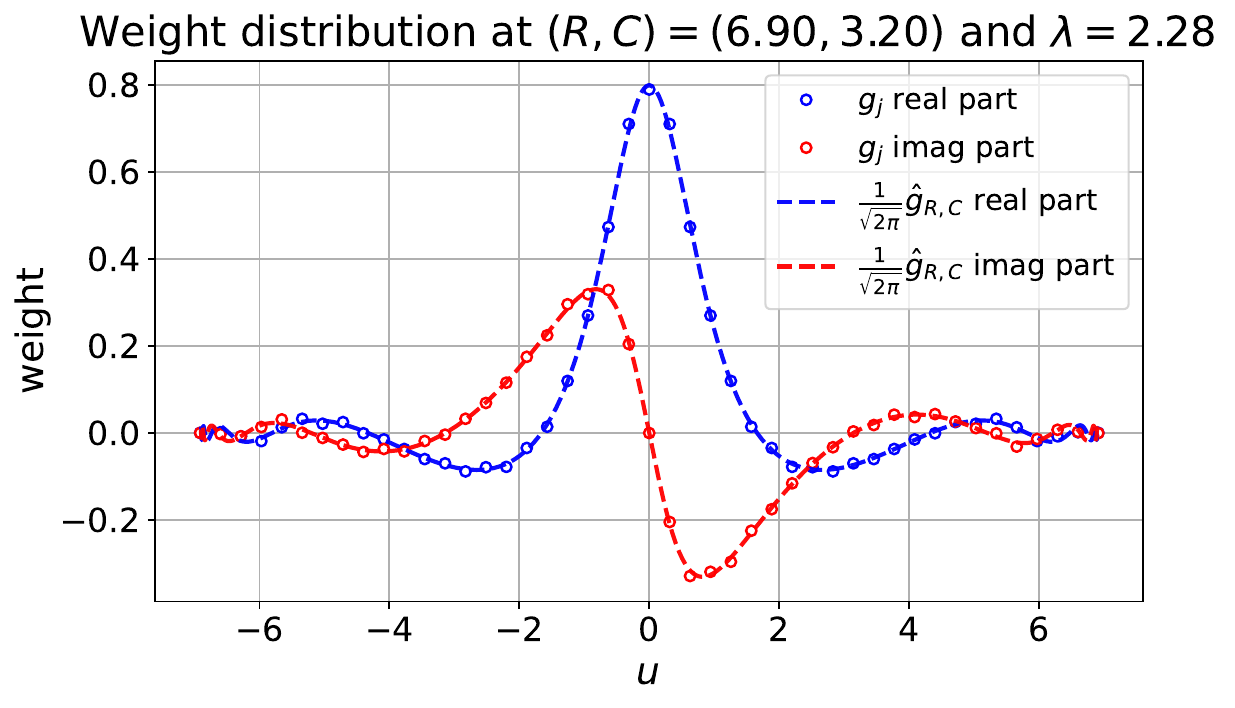}
        \label{fig:kernel_theory_reference}
    }
    \caption{Numerical illustration of the optimized LCHS construction for $e^{-x}$. (a) the certified cost-error tradeoff for the optimized discrete weights, compared with the continuous kernel and the prior-art bound. (b) optimized discrete weights for $T=10$, $R=6.90$, $C=3.20$, and $\lambda=2.28$, compared with the reference values from $\frac{1}{\sqrt{2\pi}}\hat{g}_{R,C}$.}
    \label{fig:exp_numerical_illustration}
\end{figure}

We run this solver using the $R$ and $\lambda$ values from \Cref{tab:numerical_opt}. The resulting $\epsilon$ values, shown in \Cref{fig:cost_comparison}, are almost identical to those obtained from the continuous distribution $\hat{g}_{R,C}$, confirming the effectiveness of the discrete optimization. These results use $T=50$, and we also verify that they are insensitive to this choice. We use $M_{\text{opt}}=3000$ in the optimization and certify the residual error with \eqref{eq: exp cvx constraint} using $M_{\text{test}}=20000$.

\Cref{fig:kernel_theory_reference} shows the optimized discrete weights $g_j$ for $T=10$, $R=6.90$, and $\lambda=2.28$. The weights are close to the values $\frac{1}{\sqrt{2\pi}}\hat{g}_{R,C}(u_j)$. In fact, the closed-form kernel $\hat{g}_{R,C}$ was originally inspired by this optimization result. Even with the explicit formula for $\hat{g}_{R,C}$, the optimized discrete weights remain useful for implementation: they live on a grid of size only $\mo{R}=\mo{\log\frac{1}{\epsilon}}$. By comparison, discretizing the continuous integral \eqref{eq: fRC definition} by quadrature may require $\mo{\poly(\frac{1}{\epsilon})}$ terms because $\hat{g}_{R,C}$ has singularities near $\pm R$.

\subsection{Convex optimization on $[-1,1]^2$: application to matrix square roots}\label{sec: square opt}

If the numerical range of $A$ is contained in the square $\Omega = [-1,1]^2$, we can approximate $f(x+\I y)$ using
\begin{equation}
    S(x,y) = \sum_{(j,k)\in\mathcal{G}_R} w_{j,k} e^{-\I\frac{\pi}{2}(jx+ky)},
\end{equation}
on $\Omega$, where
\begin{equation}\label{eq: disc frequency set}
    \mathcal{G}_R
    =\left\{(j,k)\in\ZZ^2:
    \left(\frac{\pi}{2}\right)^2(j^2+k^2)\le R^2\right\}.
\end{equation}
The optimization problem is similar to the one-dimensional case, and the local Wiener norm can again be upper-bounded using a cosine expansion. Suppose that the residual $\eta(x,y)=S(x,y)-f(x+\I y)$ has the cosine expansion
\begin{equation}
    \eta(x,y) = \sum_{m,n=0}^\infty a_{m,n} \cos\left(\frac{m\pi}{2}(x+1)\right) \cos\left(\frac{n\pi}{2}(y+1)\right).
\end{equation}
Then
\begin{equation}
    \norm{\eta}_{W(\Omega)} \le \sum_{m,n=0}^\infty |a_{m,n}|.
\end{equation}
The coefficients $a_{m,n}$ can be computed analytically by taking inner products with the basis functions. In particular, set
\begin{equation}\begin{aligned}
    D_{m,n}^{j,k} &= \frac{1}{(1+\delta_{m,0})(1+\delta_{n,0})}\int_{-1}^1\int_{-1}^1 e^{-\I\frac{\pi}{2}(jx+ky)} \cos\left(\frac{m\pi}{2}(x+1)\right) \cos\left(\frac{n\pi}{2}(y+1)\right) \mathrm{d}x \mathrm{d}y, \\
    C_{m,n} &= \frac{1}{(1+\delta_{m,0})(1+\delta_{n,0})}\int_{-1}^1\int_{-1}^1 f(x+\I y) \cos\left(\frac{m\pi}{2}(x+1)\right) \cos\left(\frac{n\pi}{2}(y+1)\right) \mathrm{d}x \mathrm{d}y.
\end{aligned}\end{equation}
The resulting optimization problem is
\begin{equation}\begin{aligned}
    \epsilon_{M} = \min_{\{w_{j,k}\}} &\ \sum_{m=0}^M \sum_{n=0}^M \left| \sum_{(j,k)\in\mathcal{G}_R} w_{j,k} D_{m,n}^{j,k} - C_{m,n} \right| \\
    &\text{s.t.}\quad \sum_{(j,k)\in\mathcal{G}_R} |w_{j,k}| \le \lambda,
\end{aligned}\end{equation}
As in the one-dimensional case, we use a sufficiently large $M_{\text{opt}}$ during optimization and a larger $M_{\text{test}}$ for posterior verification.

We apply this optimization approach to compute LCHS weights for the matrix square root $\sqrt{A}$, assuming that the numerical range of $A$ is contained in the open right half-plane $\{\Re(z)>0\}$. To adapt the problem to the domain $\Omega=[-1,1]^2$, we apply a linear transformation $A'=\beta A-(1+\rho)I$ so that the numerical range of $A'$ is contained in $\Omega$. Thus we focus on functions
$$f(z) = \sqrt{(1+\rho)+z}$$
with $\rho>0$, and approximate them on $\Omega=[-1,1]^2$.

We present numerical results for $\rho=1,0.5,0.25,0.125$. For a fixed budget $\Lambda$, we try different values of $R$, set $\lambda=\Lambda/R$, run the optimization with $M_{\text{opt}}=60$ and $M_{\text{test}}=1000$, and report the smallest certified $\epsilon$. The results are shown in \Cref{fig:sqrt_cost_comparison}.

It is useful to compare these numerical results with the theoretical bounds. The bound in \Cref{thm: construction 1 for disc} applies because $D_1\subset[-1,1]^2$, so controlling the square residual also controls the disc residual. The parameter $\rho$ has the same meaning here because $\sqrt{(1+\rho)+z}$ has finite local Wiener norm on $D_{1+\rho}$. In fact, $\sqrt{(1+\rho)+z}$ can be extended to an $H^{1.25}(\RR^2)$ function, and one can prove $\norm{g}_{W(\RR^2)}\le C\norm{g}_{H^{1.25}(\RR^2)}$ for some constant $C$ by replacing the $(1+u^2+v^2)^2$ factor in \eqref{eq: H2 estimate} with $(1+u^2+v^2)^{1.25}$. A simple lower bound is $\norm{\sqrt{(1+\rho)+z}}_{W(D_{1+\rho})}\ge\norm{\sqrt{(1+\rho)+z}}_{L^{\infty}(D_{1+\rho})}\ge\sqrt{2}$. Therefore, the results in \Cref{fig:sqrt_cost_comparison} have lower cost than even the most optimistic upper bound obtainable from \Cref{thm: construction 1 for disc}, namely $\Lambda \sim \frac{3\sqrt{2}}{\rho}\ln \frac{30\sqrt{2}}{\rho\epsilon}$.

\begin{figure}[htbp]
    \centering
    \includegraphics[width = 0.8\textwidth]{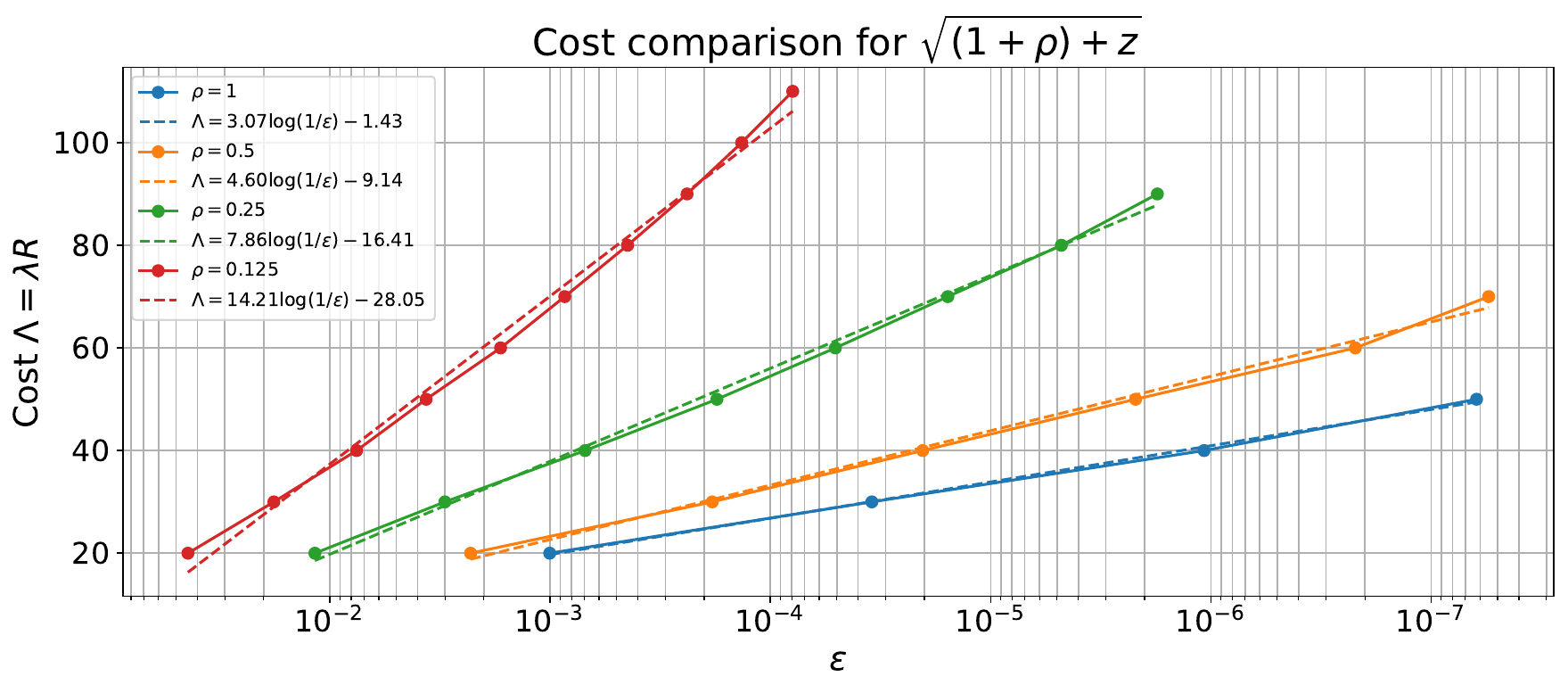}
    \caption{Cost comparison for $f(z)=\sqrt{(1+\rho)+z}$ for several values of $\rho$. The dashed lines are linear fits of $\Lambda$ versus $\log(1/\epsilon)$.}
    \label{fig:sqrt_cost_comparison}
\end{figure}

\subsection{Convex optimization on $D_1$: example of $f(z)=z^4$}

If a function has moderate magnitude on the unit disc $D_1$ but large values on the rest of the square $[-1,1]^2$, it is wasteful to measure the residual by a tensor-product cosine expansion on the whole square. In this case, we instead optimize the residual directly on $D_1$.
We still approximate $f(x+\I y)$ on $D_1$ by
\begin{equation}\label{eq: disc trig approximation}
    S(x,y) = \sum_{(j,k)\in\mathcal{G}_R} w_{j,k}
    e^{-\I\frac{\pi}{2}(jx+ky)},
\end{equation}
where $w_{j,k}\in\CC$. 

We use the Neumann Fourier--Bessel basis on the disc to measure the residual. Write $x=r\cos\theta$ and $y=r\sin\theta$. Let $J_p$ be the Bessel function of the first kind, and let $\beta_{p,\ell}$ be the $\ell$-th non-negative zero of $J_p'$. Thus $\beta_{0,1}=0$, while otherwise $\beta_{p,\ell}>0$. This gives the following unnormalized orthogonal basis for $L^2(D_1)$:
\begin{equation}\label{eq: disc FB basis}
    \Psi_{m,\ell}(r,\theta)
    =J_{|m|}(\beta_{|m|,\ell}r)e^{\I m\theta},
    \qquad m\in\ZZ,\quad \ell=1,2,\ldots .
\end{equation}
The mode $(m,\ell)=(0,1)$ is the constant function. The zero convention, the Neumann boundary condition, and the orthogonality relation are recalled in \Cref{appsubsec:disc-fb-modes}. The normalization constant is
\begin{equation}\label{eq: disc FB norm constant}
    (2\pi)N_{p,\ell}=\norm{\Psi_{m,\ell}}_{L^2(D_1)}^2 = (2\pi)\int_0^1J_p(\beta_{p,\ell}r)^2r\rd r.
\end{equation}
For the residual
\[
    \eta(r,\theta)=S(r\cos\theta,r\sin\theta)-f(re^{\I\theta}),
\]
we expand it as
\begin{equation}\label{eq: bessel a_ml}
    \eta=\sum_{m\in\ZZ}\sum_{\ell\ge1}a_{m,\ell}\Psi_{m,\ell}.
\end{equation}
In \Cref{appsubsec:disc-fb-wiener-bound}, we establish $\norm{\Psi_{m,\ell}}_{W(D_1)}\le 1$.
Consequently, 
\[
    \norm{\eta}_{W(D_1)}
    \le \sum_{m\in\ZZ}\sum_{\ell\ge1}|a_{m,\ell}|.
\]
This is the disc analogue of the cosine coefficient bound used in the one-dimensional and square cases. 

It remains to write the residual coefficients as affine functions of the unknown weights. The Neumann Fourier--Bessel expansion of the plane wave $e^{-\I\frac{\pi}{2}(jx+ky)}$ is
$$e^{-\I\frac{\pi}{2}(jx+ky)} = \sum_{m\in\ZZ}\sum_{\ell\ge1}P_{m,\ell}^{j,k}\Psi_{m,\ell}(r,\theta),$$
where the coefficient $P_{m,\ell}^{j,k}$ has a closed-form expression recorded in \Cref{appsubsec:disc-fb-matrix}.
The target coefficients are
\begin{equation}\label{eq: disc target coefficient}
    C_{m,\ell}
    =\frac{1}{2\pi N_{|m|,\ell}}
    \int_0^{2\pi}\int_0^1
    f(re^{\I\theta})J_{|m|}(\beta_{|m|,\ell}r)e^{-\I m\theta}\,r\rd r\rd\theta.
\end{equation}
Therefore, the residual coefficient is the affine expression
\begin{equation}\label{eq: disc residual coefficient affine}
    a_{m,\ell}
    =\sum_{(j,k)\in\mathcal{G}_R}w_{j,k}P_{m,\ell}^{j,k}-C_{m,\ell}.
\end{equation}

For computation, choose an integer $M$ and truncate to
\begin{equation}\label{eq: disc FB truncation set}
    \mathcal{I}_M
    =\left\{(m,\ell):
    \beta_{|m|,\ell}\le \frac{\pi}{2}M\right\}.
\end{equation}
For frequency radius $R$ and coefficient budget $\lambda$, solve
\begin{equation}\label{eq: disc FB cvx program}
\begin{aligned}
    \epsilon_M=
    \min_{\{w_{j,k}\}_{(j,k)\in\mathcal{G}_R}}
    &\quad
    \sum_{(m,\ell)\in\mathcal{I}_M}
    \left|\sum_{(j,k)\in\mathcal{G}_R}w_{j,k}P_{m,\ell}^{j,k}-C_{m,\ell}\right| \\
    \text{s.t.}
    &\quad
    \sum_{(j,k)\in\mathcal{G}_R}|w_{j,k}|\le \lambda .
\end{aligned}
\end{equation}
This is again a second-order cone program. After obtaining the optimizer, one can recompute the coefficient sum on a larger set $\mathcal{I}_{M_{\mathrm{test}}}$ and use the enlarged sum as the reported disc error. This verifier, however, is time-consuming even for moderately large $M_{\mathrm{test}}$, because we do not have a simple FFT analogue for computing the Neumann Fourier--Bessel coefficients. In practice, we use a more efficient verifier, discussed in \Cref{appsubsec:disc-practical-extension-bound}. The idea is similar to the derivative bounds \eqref{eq: H2 estimate} and \eqref{eq: unit W bound e_disc}, but the practical verifier gives a much tighter upper bound through more refined calculations.

For functions that become large on $[-1,1]^2\setminus D_1$, the disc optimizer is preferable to the square optimizer in \Cref{sec: square opt} whenever we know $\norm{A}\le1$. We illustrate this with $f(z)=z^4$. This example is only illustrative, since matrix monomials can be computed by more direct quantum methods. We fix $\lambda=12$ and $M_{\text{opt}}=75$ for both the square and disc optimizers, and compare the truncation radius $R$ required for different values of $\epsilon$ in \Cref{fig:z4_cost_comparison}. The disc optimizer has a significantly lower cost, as expected.

We also remark that the Neumann Fourier--Bessel coefficient $\ell^1$ norm is usually not an optimal estimate of the local Wiener norm. Therefore, for functions that remain moderate on $[-1,1]^2\setminus D_1$, the square optimizer may give comparable or even better results.

\begin{figure}[htbp]
    \centering
    \includegraphics[width = 0.6\textwidth]{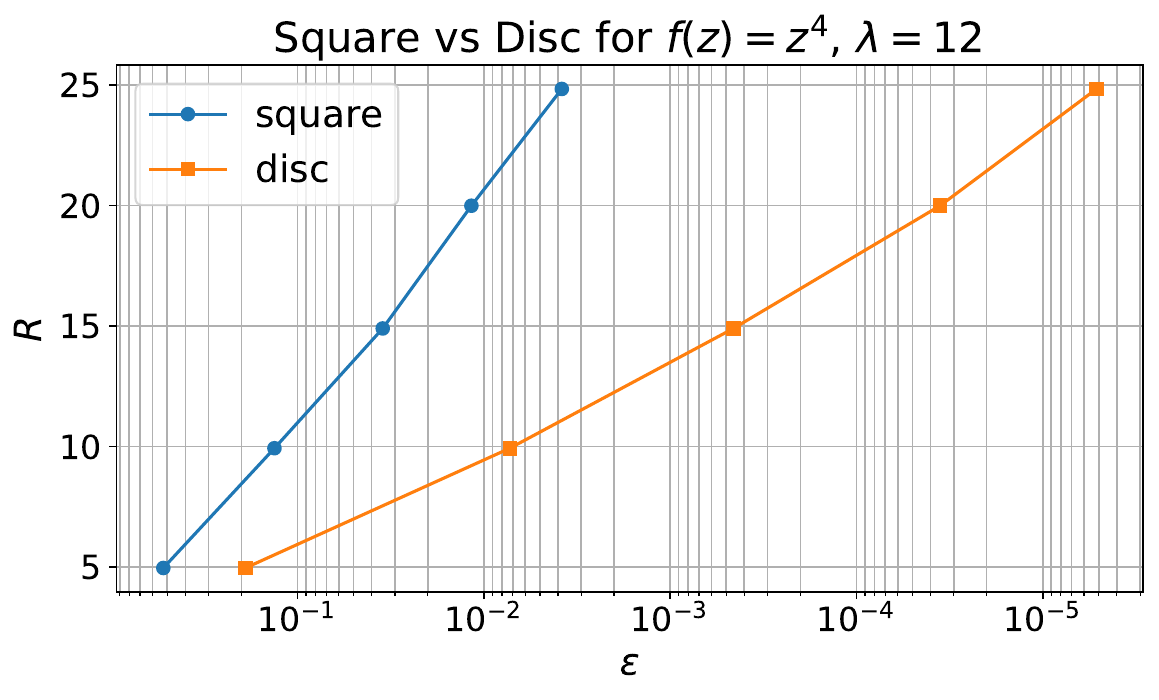}
    \caption{Required truncation radius $R$ for $f(z)=z^4$ with fixed normalization factor $\lambda = 12$. Since $\lambda$ is fixed, $R$ is proportional to the LCHS cost $\Lambda=\lambda R$.}
    \label{fig:z4_cost_comparison}
\end{figure}

\section{Quantum implementation of LCHS}\label{sec:implementation}
Given an LCHS formula of the form \eqref{eq: lchs formula} or \eqref{eq: disc lchs formula}, we briefly discuss how to implement it on a quantum computer. The goal of this section is to explain how the quantum cost depends on the LCHS parameters. The material is standard in LCHS-based methods; see \cite{an2023linear, an2026quantum, an2026laplace, low2025optimal} for more details.

There are two common ways of implementing LCHS formulas. The first is a coherent implementation, which combines LCU \cite{childs2012hamiltonian} with Hamiltonian simulation to block-encode the entire linear combination. The Hamiltonian simulation subroutine can be implemented using QSVT \cite{gilyen2019quantum}, Trotter formulas \cite{childs2021theory}, or randomized methods \cite{campbell2018random,wan2022randomized}. This is the natural choice when the goal is to prepare a state proportional to $f(A)\ket{\psi}$ or to use $f(A)$ as a subroutine in a larger quantum algorithm. The second is a hybrid implementation, designed for estimating observables such as $\bra{\psi} f(A)^\dagger O f(A)\ket{\psi}$. It samples terms from the LCHS formula classically and evaluates the corresponding quantum correlation functions one at a time. The coherent approach is usually better in asymptotic quantum resource scaling, while the hybrid approach is more suitable for near-term devices: it uses fewer ancilla qubits, avoids a large coherently controlled select oracle, and yields shallower circuits.

\subsection{Coherent implementation}\label{sec: coherent}
The coherent implementation is most directly stated for the discrete formula \eqref{eq: disc lchs formula}, which we discuss in this subsection. We begin with the standard block-encoding model for representing matrices on quantum computers.
\begin{definition}[Block-encoding {\cite{low2019hamiltonian}}]\label{def: block encoding}
    Let $A$ be a $2^n\times 2^n$ complex matrix. If a unitary $U_A\in \mathrm{U}(2^{n+b})$ acting on an $(n+b)$-qubit system has the block form
    \begin{equation}
        U_A=
        \begin{bmatrix}
            A/\alpha & \cdot \\
            \cdot & \cdot
        \end{bmatrix}
    \end{equation}
    for some $\alpha\ge \norm{A}$, then $U_A$ is called an $\alpha$-block-encoding of $A$. The normalization factor is $\alpha$, and the block-encoding uses $b$ ancilla qubits.
\end{definition}

The LCU method \cite{childs2012hamiltonian} is a standard technique for implementing matrix sums. Given unitaries $\{U_j\}$ and coefficients $\{w_j\}$, LCU implements a $\lambda$-block-encoding of $\sum_j w_jU_j$, where $\lambda=\sum_j |w_j|$. In detail, one uses the \textit{prepare} operators
\begin{equation}\label{eq: prep operators}
    \mathrm{PREP}_l\ket{0} = \frac{1}{\sqrt{\lambda}}\sum_{j=1}^J \sqrt{|w_j|}\ket{j},\qquad \mathrm{PREP}_r\ket{0} = \frac{1}{\sqrt{\lambda}}\sum_{j=1}^J \sqrt{|w_j|}e^{\I\Arg(w_j)}\ket{j},
\end{equation}
and the \textit{select} operator
\begin{equation}
    \mathrm{SEL} = \sum_{j=1}^J \ket{j}\bra{j}\otimes U_j.
\end{equation}
Then the product
\begin{equation}
    (\mathrm{PREP}_l^{\dagger}\otimes I)\ \mathrm{SEL}\ (\mathrm{PREP}_r\otimes I)
\end{equation}
is a $\lambda$-block-encoding of $\sum_j w_jU_j$.

Suppose $U_A$ and $U_{A^\dagger}$ are $\alpha$-block-encodings of $A$ and $A^\dagger$, respectively, and let $u,v\in\RR$. Applying LCU to $U_A$ and $U_{A^\dagger}$ gives an $(\alpha\sqrt{u^2+v^2})$-block-encoding of $uX+vY$, since
\begin{equation}
    uX+vY = \frac{u-\I v}{2}A + \frac{u+\I v}{2}A^\dagger,
\end{equation}
and $\frac{|u-\I v|}{2}+\frac{|u+\I v|}{2}=\sqrt{u^2+v^2}$. We can then apply QSVT-based Hamiltonian simulation \cite{gilyen2019quantum,low2019hamiltonian} to implement $e^{-\I(uX+vY)}$ with query complexity $\mo{\alpha\sqrt{u^2+v^2}+\log(\epsilon^{-1})}$.

Finally, applying LCU again to the discrete LCHS formula \eqref{eq: disc lchs formula}, with $U_j=e^{-\I(u_jX+v_jY)}$, gives a $\lambda$-block-encoding of $f(A)$ with $\lambda=\sum_j |w_j|$. The query complexity of implementing the select operator
\begin{equation}
    \mathrm{SEL} = \sum_{j=1}^J \ket{j}\bra{j}\otimes e^{-\I(u_jX+v_jY)}   
\end{equation}
is the maximum query complexity of implementing $e^{-\I(u_jX+v_jY)}$ over all $j$, namely $\mo{\alpha R+\log(\epsilon^{-1})}$. This is because the select operator can be implemented by a single Hamiltonian simulation circuit with the index register as a control. This procedure can be illustrated as
\begin{equation}
\begin{aligned}
    \text{controlled access to } U_A,U_{A^\dagger} &\xrightarrow{\text{controlled version of LCU}} \sum_{j=1}^J \ket{j}\bra{j}\otimes \begin{bmatrix} \frac{u_jX+v_jY}{\alpha R} & \cdot \\ \cdot & \cdot \end{bmatrix}\\
    &\xrightarrow{\text{QSVT Hamiltonian simulation}} \sum_{j=1}^J \ket{j}\bra{j}\otimes e^{-\I(u_jX+v_jY)} = \mathrm{SEL},
\end{aligned}
\end{equation}
and we refer readers to \cite{low2025optimal,an2023linear} for circuit details.

If we want to amplify the block-encoding normalization from $\lambda$ toward $\norm{f(A)}$, we can use uniform singular value amplification \cite{gilyen2019quantum}; the query complexity then increases by a factor of $\lambda/\norm{f(A)}$. Thus the final query complexity is proportional to $\frac{\lambda}{\norm{f(A)}}(\alpha R+\log(\epsilon^{-1}))$. Ignoring constants and using $R=\Omega(\log(\epsilon^{-1}))$, the query complexity in $A$ and $A^\dagger$ is proportional to
\begin{equation}
    \Lambda := R \lambda.
\end{equation}

If the goal is to prepare the normalized state proportional to $f(A)\ket{\psi}$ for a given $\ket{\psi}$, amplitude amplification \cite{brassard2002quantum, yoder2014fixed} also gives the same proportional dependence on $\Lambda$. This justifies using $\Lambda$ as the cost function of an LCHS formula.

\subsection{Hybrid implementation}
The hybrid implementation applies when the goal is only to evaluate an observable. Suppose $O$ is Hermitian and the desired quantity is $\bra{\psi} f(A)^\dagger O f(A)\ket{\psi}$.
Using the discrete approximation $f(A)=\sum_j w_jU_j$, this observable becomes
\begin{equation}
\label{eq: hybrid observable expansion}
    \bra{\psi} f(A)^\dagger O f(A)\ket{\psi} = \sum_{j,k=1}^J \overline{w_j}w_k
    \bra{\psi}U_j^\dagger O U_k\ket{\psi}.
\end{equation}
The hybrid method samples an index pair $(j,k)$ with probability proportional to $|w_jw_k|$, and then estimates the matrix element $o_{j,k}:=\bra{\psi}U_j^\dagger O U_k\ket{\psi}$
using the Hadamard test for non-unitary matrices \cite{tong2021fast} and amplitude estimation \cite{brassard2002quantum}. Define $Z=e^{\I \Arg(\overline{w_j}w_k)}o_{j,k}$. With sufficiently many samples, the sample average converges to $\mathbb{E}Z=\lambda^{-2}\bra{\psi}f(A)^\dagger O f(A)\ket{\psi}$, which is the desired observable up to the normalization factor $\lambda^2$.

A direct bounded-variance estimate \cite[Eq. (S37)]{an2023linear} gives sample complexity proportional to $\lambda^4\norm{O}^2/\epsilon^2$, up to logarithmic factors in the failure probability, for additive error $\epsilon$. Since each sample requires Hamiltonian simulation for time $R$, the total query complexity is proportional to $R\lambda^4\norm{O}^2/\epsilon^2$. In this setting, however, the circuit depth $R$ is often the most important resource, because hybrid implementations are usually considered for early fault-tolerant devices with limited coherence time. The same procedure also applies directly to the continuous formula \eqref{eq: lchs formula}, since one can sample $(u,v)$ from a continuous distribution.

\section{Conclusion and discussion}\label{sec:discussion}

This work shows that Weyl calculus provides a natural language for LCHS beyond the matrix exponential. The main technical advantage is that the noncommutativity of the Hermitian and skew-Hermitian parts of a matrix is absorbed into the Weyl quantization map, while the approximation problem remains a scalar approximation problem on the numerical range. In this sense, \Cref{thm: f - h bound} separates two issues that were intertwined in earlier LCHS constructions: finding a good scalar approximation to $f(x+\I y)$ and implementing the corresponding plane-wave expansion by Hamiltonian simulation. The framework also treats discrete formulas as first-class objects through the Fourier--Stieltjes viewpoint, which is important for coherent quantum implementation because the final algorithm ultimately uses a finite LCU expansion.

A first direction for future work is to make the numerical optimization framework more efficient. The convex program \eqref{eq: general opt problem} directly optimizes the discrete weights and therefore avoids both quadrature error and the choice of a parametrized ansatz. Its practical performance, however, depends on how tightly and efficiently one can upper bound the local Wiener norm of the residual. The cosine and Fourier--Bessel coefficient bounds used in this paper are convenient and convex, but they are not expected to be optimal. It would be useful to develop sharper verifiers for common domains and faster structured solvers for the resulting SOCPs.

For non-unitary dissipative dynamics simulation, our kernel improves the asymptotic prefactor of the LCHS cost to $\frac{2e}{\pi}\ln(1/\epsilon)$, which is within about a factor of $1.5$ of the lower bound from \cite{low2025optimal}. Closing this gap and matching the upper and lower bounds remains an interesting open problem.

Another question is whether Weyl-calculus-based LCHS can lead to quantum algorithms beyond eigenvalue transformation. The time-dependent dissipative linear ODE example suggests that this may be possible. It would be interesting to explore whether the framework can be adapted to other differential equations or linear algebra tasks.

\bibliographystyle{amsalpha}
\bibliography{ref}

\appendix
\section{Wiener norm and Fourier--Stieltjes algebra}
\subsection{\texorpdfstring{The Fourier--Stieltjes algebra $B(\RR^n)$}{The Fourier--Stieltjes algebra B(Rn)}}\label{appendix: B(Rn)}
Let $M(\RR^n)$ be the Banach space of finite complex Borel measures on $\RR^n$, equipped with the total variation norm. The Fourier--Stieltjes algebra $B(\RR^n)$ consists of all functions of the form
\begin{equation}\label{eq: B representation appendix}
    g(\bx) = (2 \pi)^{-n / 2} \int_{\RR^n} e^{-\I \bxi \cdot \bx} \rd \mu(\bxi),
    \qquad \mu\in M(\RR^n).
\end{equation}
The Fourier transform of a finite Borel measure is unique. Thus, if $\hat g=\mu$, we define
\begin{equation}\label{eq: B norm appendix}
    \norm{g}_W := (2 \pi)^{-n / 2} |\mu|(\RR^n).
\end{equation}
In addition to the absolutely continuous case $\mu=\hat g(\bxi)\rd\bxi$, an important example is the Dirac comb
\begin{equation*}
    \mu = \sum_{\bk\in\ZZ^n} c_{\bk} \delta_{a\bk},
\end{equation*}
whose inverse Fourier transform is the Fourier series
\begin{equation*}
    g(\bx) = (2\pi)^{-n/2}\sum_{\bk\in\ZZ^n} c_{\bk} e^{-\I a\bk \cdot \bx}.
\end{equation*}

\subsection{Local Wiener norm}\label{appendix: local wiener norm}
We now prove \Cref{eq: regular extension}, which states that the local Wiener norm on a compact set $\Omega$ can also be computed using extensions in $A(\RR^n)$ whose Fourier transforms are smooth.

Let $\Omega$ be a compact subset of $\RR^n$. For any $\epsilon > 0$, \cite[Theorem 2.6.8]{rudin1990fourier} gives a function $\chi\in A(\RR^n)$ such that $\chi=1$ on $\Omega$, $\chi$ has compact support, and $\norm{\chi}_W=(2\pi)^{-n/2}\norm{\hat{\chi}}_{L^1(\RR^n)}\le 1+\epsilon$. Since $\chi$ has compact support, $\hat{\chi}$ is smooth. For any extension $\tilde{g}\in B(\RR^n)$ of $g$, its Fourier transform $(\tilde{g})^\wedge = \tilde{\mu}$ is a finite measure. Consider the product $h = \chi \tilde{g}$. Under our Fourier convention,
\[
    \hat{h} = (2\pi)^{-n/2}\hat{\chi}*\tilde{\mu}\in C^{\infty}(\RR^n),
\]
because $\hat{\chi}$ is smooth and $\tilde{\mu}$ is finite. Moreover, Young's inequality gives
$$\norm{\hat{h}}_{L^1(\RR^n)} \le (2\pi)^{-n/2}\norm{\hat{\chi}}_{L^1(\RR^n)}\abs{\tilde{\mu}}(\RR^n) \le (1+\epsilon) \abs{\tilde{\mu}}(\RR^n).$$ 
Thus $h\in A(\RR^n)$ and $\norm{h}_{W}\le (1+\epsilon)\norm{\tilde{g}}_{W}$. Letting $\epsilon\to 0$ gives
\begin{equation}\label{eq: wiener norm equiv}
    \norm{g}_{W(\Omega)} :=\ \inf_{\tilde{g}\in B(\RR^n),\ \tilde{g}|_\Omega = g}\ \norm{\tilde{g}}_W =\ \inf_{\hat{h}\in C^{\infty}(\RR^n),\ h\in A(\RR^n),\ h|_\Omega = g}\ \norm{h}_W,
\end{equation}
completing the proof of \Cref{eq: regular extension}.

With not much effort, we can actually show the first infimum in \eqref{eq: wiener norm equiv} is attained:

\begin{lemma}\label{lem: attaining local W appendix}
Let $U=\overline{\Omega}$, where $\Omega\subset\RR^n$ is open.  Suppose that $g$ is continuous on $U$ and has at least one extension in $B(\RR^n)$.  Then there exists $G_0\in B(\RR^n)$ such that
\begin{equation*}
    G_0=g\quad\text{on }U,
    \qquad
    \norm{G_0}_W=\norm{g}_{W(U)}.
\end{equation*}
\end{lemma}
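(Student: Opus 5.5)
We use a direct-method argument in $M(\RR^n)$ with the weak-$*$ topology. Set $B_0=\norm{g}_{W(U)}<\infty$. Choose a minimizing sequence $G_k\in B(\RR^n)$ with $G_k=g$ on $U$ and $\norm{G_k}_W\to B_0$. Write $\mu_k=\hat G_k\in M(\RR^n)$, so that $(2\pi)^{-n/2}|\mu_k|(\RR^n)\to B_0$. The total variations are uniformly bounded. Since $M(\RR^n)=C_0(\RR^n)^*$ and $C_0(\RR^n)$ is separable, Banach--Alaoglu gives a subsequence, still denoted $\mu_k$, and a limit $\mu_0\in M(\RR^n)$ with $\mu_k\to\mu_0$ weak-$*$ against $C_0(\RR^n)$. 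Weak-$*$ lower semicontinuity of the norm gives $|\mu_0|(\RR^n)\le\liminf_k|\mu_k|(\RR^n)$. Hence $G_0:=(2\pi)^{-n/2}\int e^{-\I\bxi\cdot\bx}\rd\mu_0(\bxi)$ satisfies $\norm{G_0}_W\le B_0$.

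The main obstacle is showing that $G_0=g$ on $U$. The plane waves $e^{-\I\bxi\cdot\bx}$ are not in $C_0$, and mass may escape to infinity, so we cannot conclude $G_k(\bx)\to G_0(\bx)$ pointwise. We get around this by testing against smooth functions. Let $\varphi\in C_c^\infty(\Omega)$ be supported in the open set $\Omega$. By Fubini,
\[
\int G_k(\bx)\varphi(\bx)\rd\bx=(2\pi)^{-n/2}\int\Big(\int\varphi(\bx)e^{-\I\bxi\cdot\bx}\rd\bx\Big)\rd\mu_k(\bxi)=\int\hat\varphi(-\bxi)\rd\mu_k(\bxi).
\]
Here $\hat\varphi$ is a Schwartz function, so it lies in $C_0(\RR^n)$. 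Weak-$*$ convergence therefore gives
\[
\int G_k\varphi\to\int G_0\varphi .
\]
The left side equals $\int g\varphi$ for every $k$, because $G_k=g$ on $\operatorname{supp}\varphi\subset\Omega$. Thus $\int(G_0-g)\varphi=0$ for all $\varphi\in C_c^\infty(\Omega)$.

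Both $G_0$ and $g$ are continuous on $\Omega$: $G_0$ is the Fourier--Stieltjes transform of a finite measure, so it is bounded and uniformly continuous by dominated convergence, and $g$ is continuous by hypothesis. The fundamental lemma of the calculus of variations then yields $G_0=g$ pointwise on $\Omega$. Since $U=\overline{\Omega}$ and both functions are continuous on $U$, the equality extends to $U$. This is exactly why the hypothesis $U=\overline\Omega$ with $\Omega$ open is needed: equality is only obtained in the distributional sense on open sets.

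Finally, $G_0$ is an extension of $g$ in $B(\RR^n)$, so by the definition of the local Wiener norm $\norm{G_0}_W\ge B_0$. Combined with the bound from the first paragraph, $\norm{G_0}_W=\norm{g}_{W(U)}$, which is the claimed attainment.
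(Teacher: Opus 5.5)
Your proposal is correct and follows essentially the same route as the paper: a minimizing sequence, weak-$*$ compactness in $M(\RR^n)=C_0(\RR^n)^*$ with lower semicontinuity of total variation, and identification of $G_0$ with $g$ by testing against $\varphi\in C_c^\infty(\Omega)$ via $\int G_k\varphi=\int\hat\varphi(-\bxi)\rd\mu_k(\bxi)$, then continuity to pass from $\Omega$ to $U=\overline{\Omega}$. The only cosmetic difference is that you extract a subsequence using separability of $C_0(\RR^n)$, whereas the paper passes to a subnet; both work.
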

\begin{proof}
Let $m=\norm{g}_{W(U)}$. Choose $G_k\in B(\RR^n)$ with $G_k=g$ on $U$ and $\norm{G_k}_W\le m+k^{-1}$. Write $\hat G_k=\mu_k$. Then
\begin{equation*}
    |\mu_k|(\RR^n)\le (2\pi)^{n/2}(m+k^{-1})
\end{equation*}
for all $k$. Since $M(\RR^n)=C_0(\RR^n)^*$, the Banach--Alaoglu theorem gives a subnet, still denoted by $\mu_k$, converging weak-star to some finite complex Borel measure $\mu_0$. The total variation norm is weak-star lower semicontinuous, so
\begin{equation}\label{eq: lsc variation appendix}
    |\mu_0|(\RR^n)\le \liminf_k |\mu_k|(\RR^n)\le (2\pi)^{n/2}m.
\end{equation}
Let
\begin{equation*}
    G_0(\bx)=(2\pi)^{-n/2}\int_{\RR^n}e^{-\I\bxi\cdot\bx}\rd\mu_0(\bxi).
\end{equation*}
We claim that $G_0=g$ on $U$. It suffices first to prove equality in the sense of distributions on $\Omega$. Let $\varphi\in C_c^\infty(\Omega)$. Since $G_k=g$ on $\Omega$,
\begin{align*}
    \int_{\Omega} g(\bx)\varphi(\bx)\rd\bx
    &=\int_{\RR^n} G_k(\bx)\varphi(\bx)\rd\bx \\
    &=\int_{\RR^n}\hat\varphi(-\bxi)\rd\mu_k(\bxi).
\end{align*}
Here $\hat\varphi(-\bxi)\in\mathcal{S}(\RR^n)\subset C_0(\RR^n)$, so weak-star convergence gives
\begin{equation*}
    \int_{\Omega} g(\bx)\varphi(\bx)\rd\bx
    =\int_{\RR^n}\hat\varphi(-\bxi)\rd\mu_0(\bxi)
    =\int_{\RR^n} G_0(\bx)\varphi(\bx)\rd\bx.
\end{equation*}
Thus $G_0=g$ distributionally on $\Omega$. Since both $G_0$ and $g$ are continuous, they agree pointwise on $\Omega$, and hence also on $U=\overline{\Omega}$ by continuity.

Finally, \eqref{eq: lsc variation appendix} gives $\norm{G_0}_W\le m$, while the definition of $m$ gives $\norm{G_0}_W\ge m$. Therefore $\norm{G_0}_W=m$, and the infimum is attained.
\end{proof}

\subsection{\texorpdfstring{Proof of \Cref{lemma: err bounded by 2nd derivatives}}{Proof of the derivative bound}}\label{appendix: proof of lemma: err bounded by 2nd derivatives}
\begin{proof}
    Since $g$ is smooth on $D_a$, we can extend it to $D_{2a}$ using the $C^1$ Hestenes reflection \cite{hestenes1941extension}:
\begin{equation}\label{eq: Ga e_disc}
    G(r,\theta)=
    \begin{cases}
        g(r,\theta), & 0\le r\le a,\\
        \frac32 g(a,\theta)-\frac12 g(3a-2r,\theta), & a<r\le 2a.
    \end{cases}
\end{equation}
Next choose the $C^1$ cutoff function
\begin{equation}\label{eq: cutoff chi e_disc}
    \chi(t)=
    \begin{cases}
        1, & 0\le t\le 1,\\
        1-2(t-1)^2, & 1<t\le \frac32,\\
        2(2-t)^2, & \frac32<t\le 2,\\
        0, & t>2.
    \end{cases}
\end{equation}
Define the extension $E_ag$ by
\begin{equation}\label{eq: Ea extension}
    (E_ag)(r,\theta) = G(r,\theta)\chi(r/a).
\end{equation}
Then $E_ag$ is a $C^1$ function supported on $D_{2a}$. Since $G$ and $\chi(r/a)$ are continuously differentiable across the boundaries and have bounded piecewise second derivatives, $E_ag\in H^2(\RR^2)$. Hence the definition of the local Wiener norm and \eqref{eq: H2 estimate} give
    \begin{equation}\label{eq: bound by Eaf}
    \begin{aligned}
        \norm{g}_{W(D_a)}
        &\le \norm{E_ag}_W
        \le \frac{1}{2\sqrt{\pi}} \norm{(1-\Delta)(E_ag)}_{L^2(\RR^2)}  \\
        &\le \frac{\sqrt{\mathrm{Area}(D_{2a})}}{2\sqrt{\pi}}
        \max_{\RR^2} \abs{(1-\Delta)(E_ag)}
        = a \max_{D_{2a}} \abs{(1-\Delta)(E_ag)} .
    \end{aligned}
    \end{equation}

    We first bound the polar derivatives of $g$ in terms of the Cartesian supremum norms $M_k$. For $|r|\le a$,
    \begin{equation}\label{eq: bounds of df}
        \begin{aligned}
        \abs{g_r} &= \abs{g_x \cos\theta + g_y \sin\theta} \le \sqrt{2} M_1, \\
        \abs{g_{rr}} &= \abs{g_{xx} \cos^2\theta + 2g_{xy} \cos\theta\sin\theta + g_{yy} \sin^2\theta} \le 2 M_2, \\
        |g_{\theta\theta}| &\le |r|\abs{g_x\cos\theta+g_y\sin\theta} \\
        &\quad +r^2\abs{g_{xx}\sin^2\theta-2g_{xy}\sin\theta\cos\theta+g_{yy}\cos^2\theta} \\
        &\le a\sqrt{2}M_1+2a^2M_2 .
    \end{aligned}
    \end{equation}

    Inside $D_a$, $E_ag \equiv g$, yielding
    $\abs{(1-\Delta)(E_ag)} = \abs{(1-\Delta)g} \le M_0+2M_2$.
    In the annulus $D_{2a}\setminus D_a$, write $\chi_a(r)=\chi(r/a)$. The product rule gives
    \begin{equation}\label{eq: 1-delta Eaf bound}
        \abs{(1-\Delta)(E_ag)}
        \le \abs{\chi_a G} + \abs{(\Delta\chi_a)G}
        + 2\abs{\nabla\chi_a\cdot\nabla G} + \abs{(\Delta G)\chi_a}.
    \end{equation}
    For the radial cutoff function $\chi_a$, direct calculation from \eqref{eq: cutoff chi e_disc} gives
    \begin{equation*}
        \abs{\chi_a} \le 1,\quad \abs{(\chi_a)_\theta} = \abs{(\chi_a)_{\theta\theta}} = 0,\quad
        \abs{(\chi_a)_r}\le \frac{2}{a},\quad
        \abs{\Delta\chi_a} = \abs{(\chi_a)_{rr}+\frac{(\chi_a)_r}{r}}\le \frac{16}{3a^2}.
    \end{equation*}
    We bound the four terms in \eqref{eq: 1-delta Eaf bound}. For the first two terms,
    \begin{equation*}
        \abs{\chi_a G}\le 2M_0,
        \qquad
        \abs{(\Delta\chi_a)G}\le \frac{32}{3a^2}M_0\le \frac{32}{3}M_0.
    \end{equation*}
    The third term uses $G_r=g_r(3a-2r,\theta)$:
    \begin{equation*}
        2\abs{\nabla\chi_a\cdot\nabla G}
        = 2\abs{(\chi_a)_r G_r}
        \le \frac{4\sqrt{2}}{a}M_1
        \le 4\sqrt{2}M_1.
    \end{equation*}
    For the last term, with $s=3a-2r$, the polar Laplacian gives
    \begin{equation*}
    \begin{aligned}
        \abs{(\Delta G)\chi_a} \le \abs{\Delta G}
        &= \abs{\frac{3}{2r^2} g_{\theta\theta}(a,\theta)
        - \half\left(4g_{rr}(s,\theta)-\frac{2}{r} g_r(s,\theta)
        +\frac{1}{r^2} g_{\theta\theta}(s,\theta)\right) } \\
        &\le \frac{3}{2a^2}(a\sqrt{2}M_1+2a^2M_2)
        + 2(2M_2) + \frac{\sqrt{2}}{a} M_1
        + \frac{1}{2a^2}(a\sqrt{2}M_1+2a^2M_2) \\
        &\le \frac{3}{2}(\sqrt{2}M_1+2M_2) + 2(2M_2)
        + \sqrt{2} M_1 + \half (\sqrt{2}M_1+2M_2) \\
        &= 3\sqrt{2} M_1+ 8 M_2,
    \end{aligned}
    \end{equation*}
    where we used \eqref{eq: bounds of df}, $r\ge a$, and $a\ge 1$ in the annulus. Therefore, on $D_{2a}\setminus D_a$,
    \begin{equation*}
        \abs{(1-\Delta)(E_ag)}
        \le \frac{38}{3} M_0 + 7\sqrt{2} M_1 + 8 M_2
        \le 13 M_0+10M_1+8M_2.
    \end{equation*}
    Combining this with the bound inside $D_a$ and substituting into \eqref{eq: bound by Eaf} proves \eqref{eq: unit W bound e_disc}.

\end{proof}

\section{Details for the two-dimensional disc showcase}\label{appsec:disc-fb-details}

This appendix section contains the elementary calculations used in the disc showcase. The main text only uses the final coefficient formulas and the resulting second-order cone program.

\subsection{Neumann Fourier--Bessel modes and the zero convention}\label{appsubsec:disc-fb-modes}

We follow the DLMF convention for zeros of derivatives of Bessel functions \cite[\S10.21(i)]{DLMF}. Thus $\beta_{p,\ell}$ is the $\ell$-th non-negative zero of $J_p'$, $\beta_{0,1}=0$, and all other $\beta_{p,\ell}$ are positive. 

The Bessel differential equation \cite[Eq.~10.2.1]{DLMF} implies
\begin{equation}\label{eq: app bessel radial eigen equation}
    -\frac{1}{r}\frac{\rd}{\rd r}\left(r\frac{\rd}{\rd r}
    J_p(\beta r)\right)+\frac{p^2}{r^2}J_p(\beta r)
    =\beta^2J_p(\beta r).
\end{equation}
Therefore
\[
    \Psi_{m,\ell}(r,\theta)=J_{|m|}(\beta_{|m|,\ell}r)e^{\I m\theta}
\]
is an eigenfunction of $-\Delta$ on the unit disc. The boundary condition is Neumann because
\[
    \partial_r J_{|m|}(\beta_{|m|,\ell}r)\big|_{r=1}
    =\beta_{|m|,\ell}J_{|m|}'(\beta_{|m|,\ell})=0.
\]

The modes are orthogonal in $L^2(D_1)$. The angular part gives
\[
    \int_0^{2\pi}e^{\I m\theta}e^{-\I m'\theta}\rd\theta
    =2\pi\delta_{m,m'}.
\]
The radial part is the usual Sturm--Liouville orthogonality for \eqref{eq: app bessel radial eigen equation}. Thus, with
\[
    N_{p,\ell}=\int_0^1J_p(\beta_{p,\ell}r)^2r\rd r,
\]
we have
\begin{equation}\label{eq: app disc FB orthogonality}
    \int_0^{2\pi}\int_0^1
    \Psi_{m,\ell}(r,\theta)\overline{\Psi_{m',\ell'}(r,\theta)}\,r\rd r\rd\theta
    =2\pi N_{|m|,\ell}\delta_{m,m'}\delta_{\ell,\ell'}.
\end{equation}
For the constant mode, $N_{0,1}=1/2$. For $\beta_{p,\ell}>0$, the closed form
\begin{equation}\label{eq: app disc FB norm closed form}
    N_{p,\ell}
    =\frac{1}{2}\left(1-\frac{p^2}{\beta_{p,\ell}^2}\right)J_p(\beta_{p,\ell})^2
\end{equation}
follows from the Bessel integral identity \cite[Eq.~10.22.37]{DLMF} after setting the upper limit equal to $\beta_{p,\ell}$ and using $J_p'(\beta_{p,\ell})=0$.

\subsection{Jacobi--Anger expansion and the Wiener-type bound}\label{appsubsec:disc-fb-wiener-bound}

The special-function input is the Jacobi--Anger expansion. It follows from the generating function \cite[Eq.~10.12.1]{DLMF}; the Jacobi--Anger expansions are listed explicitly in \cite[Eqs.~10.12.2--10.12.3]{DLMF}. In the form used here,
\begin{equation}\label{eq: app JA expansion used}
    e^{-\I\rho r\cos(\theta-\varphi)}
    =\sum_{n\in\ZZ}(-\I)^{|n|}J_{|n|}(\rho r)e^{\I n\theta}e^{-\I n\varphi}.
\end{equation}
Indeed, the more common version is
\[
    e^{-\I z\cos\psi}=\sum_{n\in\ZZ}(-\I)^nJ_n(z)e^{\I n\psi},
\]
and \eqref{eq: app JA expansion used} follows by taking $\psi=\theta-\varphi$ and using $J_{-p}=(-1)^pJ_p$ for integer $p$ \cite[Eq.~10.4.1]{DLMF}.

Multiplying \eqref{eq: app JA expansion used} by $e^{\I m\varphi}$ and integrating in $\varphi$ gives
\begin{align*}
    \int_0^{2\pi}e^{-\I\rho r\cos(\theta-\varphi)}e^{\I m\varphi}\rd\varphi
    &=\sum_{n\in\ZZ}(-\I)^{|n|}J_{|n|}(\rho r)e^{\I n\theta}
      \int_0^{2\pi}e^{-\I n\varphi}e^{\I m\varphi}\rd\varphi  \\
    &=2\pi(-\I)^{|m|}J_{|m|}(\rho r)e^{\I m\theta}.
\end{align*}
Hence
\begin{equation}\label{eq: app disc FB plane wave average}
    J_{|m|}(\rho r)e^{\I m\theta}
    =\frac{\I^{|m|}}{2\pi}
    \int_0^{2\pi}
    e^{-\I\rho(x\cos\varphi+y\sin\varphi)}e^{\I m\varphi}\rd\varphi.
\end{equation}
Thus each Fourier--Bessel mode is an average of plane waves with frequencies on the circle $|\bxi|=\rho$. The factor $\I^{|m|}$ has modulus one, and the averaging measure in \eqref{eq: app disc FB plane wave average} has total variation one. Therefore, in the Fourier--Stieltjes sense explained in \Cref{appendix: B(Rn)},
\[
    \norm{\Psi_{m,\ell}}_{W(D_1)}\le 1.
\]

\subsection{The product integral and the plane-wave coefficient matrix}\label{appsubsec:disc-fb-matrix}

In this subsection, we give a closed-form formula for the coefficient of each plane wave in the Fourier--Bessel basis.

For $(j,k)\in\mathcal{G}_R$, write
\[
    \omega_{j,k}=\frac{\pi}{2}(j,k),
    \qquad q_{j,k}=|\omega_{j,k}|,
    \qquad \varphi_{j,k}=\Arg(j+\I k).
\]
For $(j,k)=(0,0)$, the value of $\varphi_{0,0}$ is arbitrary.
Then
\[
    e^{-\I\omega_{j,k}\cdot(x,y)}
    =e^{-\I q_{j,k}r\cos(\theta-\varphi_{j,k})}.
\]
We calculate the Fourier--Bessel coefficients by the inner products
\begin{equation}\label{eq: disc FB residual coeff P}
    P_{m,\ell}^{j,k}
    =\frac{1}{2\pi N_{|m|,\ell}}
    \int_0^{2\pi}\int_0^1
    e^{-\I q_{j,k}r\cos(\theta-\varphi_{j,k})} J_{|m|}(\beta_{|m|,\ell}r)e^{-\I m\theta}\,r\rd r\rd\theta.
\end{equation}
Substituting the Jacobi--Anger expansion \eqref{eq: app JA expansion used} into the coefficient definition \eqref{eq: disc FB residual coeff P}, the angular integral becomes
\[
    \frac{1}{2\pi}\int_0^{2\pi}e^{\I n\theta}e^{-\I m\theta}\rd\theta
    =\delta_{n,m},
\]
so only the term $n=m$ remains. The remaining radial integral gives
\[
    P_{m,\ell}^{j,k}
    =(-\I)^{|m|}e^{-\I m\varphi_{j,k}}
    \frac{1}{N_{|m|,\ell}}\int_0^1 J_{|m|}(q_{j,k}r) J_{|m|}(\beta_{|m|,\ell}r) r \rd r.
\]
The radial integral can be simplified using the standard Bessel product identity \cite[Eqs.~10.22.4--10.22.5]{DLMF}:
\begin{equation}\label{eq: app disc FB H closed form}
    \int_0^1J_p(qr)J_p(\beta r)r\rd r
    =\frac{qJ_{p+1}(q)J_p(\beta)-\beta J_p(q)J_{p+1}(\beta)}{q^2-\beta^2}.
\end{equation}
At $q=\beta$, the right-hand side is understood by continuity.

\subsection{A practical extension bound for the local Wiener norm on disc}\label{appsubsec:disc-practical-extension-bound}

The Fourier--Bessel coefficient certificate above is straightforward to incorporate into the SOCP. However, it is computationally expensive because of its slow polynomial convergence. For a fixed residual, a direct Sobolev-type extension bound can give a much less expensive a posteriori estimate of the local Wiener norm, and can sometimes be sharper than the Fourier--Bessel coefficient certificate. We record the version used in the numerical checks. It is independent of the Fourier--Bessel basis.

Let $g$ be a smooth function on $D_1$, and let $a>0$ be a free scaling parameter. By the same Cauchy--Schwarz argument as in \eqref{eq: H2 estimate}, but with the weight $1+a^2\abs{\bxi}^2$, every extension $G\in H^2(\RR^2)$ of $g$ satisfies
\begin{equation}\label{eq: app practical scaled Sobolev Wiener}
    \norm{g}_{W(D_1)}
    \le \norm{G}_{W(\RR^2)}
    \le \frac{1}{2a\sqrt{\pi}}
    \norm{G-a^2\Delta G}_{L^2(\RR^2)} .
\end{equation}
Indeed,
\[
    \int_{\RR^2}\frac{\rd \bxi}{(1+a^2\abs{\bxi}^2)^2}=\frac{\pi}{a^2},
\]
and Plancherel's theorem gives
\[
    \norm{(1+a^2\abs{\bxi}^2)\widehat G}_{L^2(\RR^2)}
    =\norm{G-a^2\Delta G}_{L^2(\RR^2)} .
\]
Thus the problem reduces to choosing an extension that makes the last norm in \eqref{eq: app practical scaled Sobolev Wiener} small.

We now compute the optimal exterior contribution on the unit disc. Write the boundary trace and the radial derivative of $g$ as
\begin{equation}\label{eq: app practical boundary Fourier data}
    g(1,\theta)=\sum_{n\in\ZZ}c_ne^{\I n\theta},
    \qquad
    \partial_rg(1,\theta)=\sum_{n\in\ZZ}d_ne^{\I n\theta}.
\end{equation}
The derivative in \eqref{eq: app practical boundary Fourier data} is the outward radial derivative from the disc. Any $H^2$ extension that equals $g$ on $D_1$ must have the same trace and radial derivative at $r=1$.

Let $E=\RR^2\setminus \overline{D_1}$ and minimize
\[
    J_a(u)=\int_E\abs{u-a^2\Delta u}^2\rd x
\]
among decaying exterior functions with the boundary data \eqref{eq: app practical boundary Fourier data}. The Euler--Lagrange equation is
\begin{equation}\label{eq: app practical EL equation}
    (1-a^2\Delta)^2u=0\qquad \text{in }E.
\end{equation}
Set $\kappa=1/a$. The decaying separated solutions of the modified Helmholtz equation are $K_{\abs{n}}(\kappa r)e^{\I n\theta}$, where $K_p$ is the modified Bessel function of the second kind; see the modified Bessel equation and its standard solutions in \cite[Eqs.~10.25.1--10.25.3]{DLMF}. Since \eqref{eq: app practical EL equation} contains the square of the modified Helmholtz operator, a second decaying radial solution is obtained by differentiating $K_{\abs{n}}(\kappa r)$ with respect to $\kappa$, namely $rK_{\abs{n}}'(\kappa r)$. Hence
\begin{equation}\label{eq: app practical exterior solution}
    u(r,\theta)
    =\sum_{n\in\ZZ}\left(A_nK_{\abs{n}}(\kappa r)
    +B_n rK_{\abs{n}}'(\kappa r)\right)e^{\I n\theta},
    \qquad r>1.
\end{equation}
Here and below, the prime denotes differentiation with respect to the argument of $K_{\abs{n}}$.

For brevity set
\[
    K=K_{\abs{n}}(\kappa),
    \qquad K'=K_{\abs{n}}'(\kappa),
    \qquad
    u_n(\kappa)=-\frac{K'}{K}.
\]
Using the modified Bessel equation at $\kappa$,
\[
    \kappa^2K''+\kappa K'-(\kappa^2+n^2)K=0,
\]
the boundary conditions $u(1,\theta)=g(1,\theta)$ and $\partial_ru(1,\theta)=\partial_rg(1,\theta)$ give, mode by mode,
\begin{equation}\label{eq: app practical AB system}
    A_nK+B_nK'=c_n,
    \qquad
    \kappa A_nK'+B_n\frac{\kappa^2+n^2}{\kappa}K=d_n.
\end{equation}
Solving the two-by-two system gives
\begin{equation}\label{eq: app practical Bn formula}
    B_n=-\frac{\kappa\left(d_n+\kappa u_n(\kappa)c_n\right)}
    {K\left(\kappa^2u_n(\kappa)^2-\kappa^2-n^2\right)} .
\end{equation}
The nontrivial factor in the denominator is positive:
\begin{equation}\label{eq: app practical positive denominator}
    \kappa^2u_n(\kappa)^2-\kappa^2-n^2>0.
\end{equation}
This follows from the integral identity \cite[Eq.~10.43.10]{DLMF}
\begin{equation}\label{eq: app practical K square integral}
    \int_\kappa^\infty K_{\abs{n}}(t)^2t\rd t
    =\frac{\kappa^2}{2}
    \left(K_{\abs{n}}'(\kappa)^2-
    \left(1+\frac{n^2}{\kappa^2}\right)K_{\abs{n}}(\kappa)^2\right).
\end{equation}

Let
\[
    v=u-a^2\Delta u.
\]
The first term in \eqref{eq: app practical exterior solution} is annihilated by $1-a^2\Delta$. Differentiating the modified Helmholtz equation with respect to $\kappa$ shows that
\[
    (1-a^2\Delta)\big(rK_{\abs{n}}'(\kappa r)e^{\I n\theta}\big)
    =-2aK_{\abs{n}}(\kappa r)e^{\I n\theta}.
\]
Therefore
\[
    v(r,\theta)
    =-2a\sum_{n\in\ZZ}B_nK_{\abs{n}}(\kappa r)e^{\I n\theta}.
\]
Using orthogonality in $\theta$, the change of variables $t=\kappa r$, the integral identity \eqref{eq: app practical K square integral}, and the expression \eqref{eq: app practical Bn formula}, we obtain
\begin{align}\label{eq: app practical exterior cost}
    \inf_u J_a(u)
    &=\norm{v}_{L^2(E)}^2 \notag\\
    &=2\pi a^2\sum_{n\in\ZZ}
    \frac{2\abs{u_n(1/a)c_n+a d_n}^2}
    {u_n(1/a)^2-(na)^2-1}.
\end{align}
Combining the interior contribution on $D_1$ with the optimal exterior contribution, \eqref{eq: app practical scaled Sobolev Wiener} gives the practical bound
\begin{equation}\label{eq: app practical Wiener final}
    \norm{g}_{W(D_1)}
    \le
    \frac{1}{2a\sqrt{\pi}}
    \left(
    \norm{g-a^2\Delta g}_{L^2(D_1)}^2
    +2\pi a^2\sum_{n\in\ZZ}
    \frac{2\abs{u_n(1/a)c_n+a d_n}^2}
    {u_n(1/a)^2-(na)^2-1}
    \right)^{1/2} .
\end{equation}
The parameter $a>0$ is free and can be optimized numerically. The case $a=1$ is the unscaled bound obtained from the operator $1-\Delta$.

\end{document}